\documentclass[11pt]{amsart}
\usepackage[margin=1in]{geometry}
\usepackage[foot]{amsaddr}

\usepackage{amssymb,amsmath,amsthm,amsfonts,amscd,bbm}
\usepackage{enumitem}
\usepackage{pdflscape}
\usepackage{caption}
\usepackage[all]{xy}
\usepackage{hyperref}
\usepackage{xcolor}
\usepackage{graphicx}
\usepackage[outdir=./]{epstopdf}
\usepackage{enumitem}
\usepackage{tensor}
\usepackage{tikz-cd}
\usepackage{multicol}
\usepackage{mathtools}
\usepackage{tocvsec2}
\usepackage{bbm}
\usepackage{longtable}
\usepackage{fullpage}

\newcommand{\googlebooks}[1]{(preview at \href{https://books.google.com/books?id=#1}{google books})}

\newcommand{\numdam}[1]{}

\usepackage{mathrsfs}
\DeclareMathAlphabet{\mathpzc}{OT1}{pzc}{m}{it}

% tricky way to iterate macros over a list
\def\semicolon{;}
\def\applytolist#1{
    \expandafter\def\csname multi#1\endcsname##1{
        \def\multiack{##1}\ifx\multiack\semicolon
            \def\next{\relax}
        \else
            \csname #1\endcsname{##1}
            \def\next{\csname multi#1\endcsname}
        \fi
        \next}
    \csname multi#1\endcsname}

\def\calc#1{\expandafter\def\csname c#1\endcsname{{\mathcal #1}}}
\applytolist{calc}QWERTYUIOPLKJHGFDSAZXCVBNM;
\def\bbc#1{\expandafter\def\csname bb#1\endcsname{{\mathbb #1}}}
\applytolist{bbc}QWERTYUIOPLKJHGFDSAZXCVBNM;
\def\bfc#1{\expandafter\def\csname bf#1\endcsname{{\mathbf #1}}}
\applytolist{bfc}QWERTYUIOPLKJHGFDSAZXCVBNM;
\def\sfc#1{\expandafter\def\csname s#1\endcsname{{\sf #1}}}
\applytolist{sfc}QWERTYUIOPLKJHGFDSAZXCVBNM;
\def\fc#1{\expandafter\def\csname f#1\endcsname{{\mathfrak #1}}}
\applytolist{fc}QWERTYUIOPLKJHGFDSAZXCVBNM;

% TikZ %%%%%%%%%%%%%%%%%%%%%%%%%%%%%%%%%%%%%%%%
\usepackage{tikz}
\usepackage{tikz-cd}

\makeatletter
\def\fixtikzforbreqn#1#2{%
  \protected\edef#1{\noexpand\ifmmode\mathchar\the\mathcode`#2 \noexpand\else#2\noexpand\fi}%
}
\fixtikzforbreqn\tikz@nonactivesemicolon;
\fixtikzforbreqn\tikz@nonactivecolon:
\fixtikzforbreqn\tikz@nonactivebar|
\fixtikzforbreqn\tikz@nonactiveexlmark!
\makeatother

\usepackage{breqn}   % Fix : | ; ! in tikzcd. And must put makealetter-other between tikzcd package and breqn package

\usetikzlibrary{arrows,backgrounds,patterns.meta}
\usetikzlibrary{positioning,shadings,cd}
\usetikzlibrary{shapes}
\usetikzlibrary{backgrounds}
\usetikzlibrary{decorations,decorations.pathreplacing,decorations.markings,decorations.pathmorphing}
\usetikzlibrary{fit,calc,through}
\usetikzlibrary{external}
\usetikzlibrary{arrows}
\tikzset{vertex/.style = {shape=circle,draw,fill=black,inner sep=0pt,minimum size=5pt}}
\tikzset{edge/.style = {->,> = latex', bend right}}
\tikzset{
	super thick/.style={line width=3pt}
}
\tikzset{
    quadruple/.style args={[#1] in [#2] in [#3] in [#4]}{
        #1,preaction={preaction={preaction={draw,#4},draw,#3}, draw,#2}
    }
}
\tikzstyle{shaded}=[fill=red!10!blue!20!gray!30!white]
\tikzstyle{unshaded}=[fill=white]
\tikzstyle{empty box}=[circle, draw, thick, fill=white, opaque, inner sep=2mm]
\tikzstyle{annular}=[scale=.7, inner sep=1mm, baseline]
\tikzstyle{rectangular}=[scale=.75, inner sep=1mm, baseline=-.1cm]
\tikzstyle{mid>}=[decoration={markings, mark=at position 0.5 with {\arrow{>}}}, postaction={decorate}]
\tikzstyle{mid<}=[decoration={markings, mark=at position 0.5 with {\arrow{<}}}, postaction={decorate}]
\tikzstyle{over}=[double, draw=white, super thick, double=]
\tikzstyle{snake}=[decorate, decoration={snake, segment length=1mm, amplitude=.3mm}]
\tikzstyle{saw}=[decorate, decoration={saw, segment length=.7mm, amplitude=.25mm}]

\tikzstyle{coupon}=[draw, very thick, rectangle, rounded corners=5pt]
\tikzset{Rightarrow/.style={double equal sign distance,>={Implies},->},
triplecd/.style={-,preaction={draw,Rightarrow}},
quadruplecd/.style={preaction={draw,Rightarrow,
shorten >=0pt
},
shorten >=1pt,
-,double,double
distance=0.2pt}}
\tikzset{
    tripleline/.style args={[#1] in [#2] in [#3]}{
        #1,preaction={preaction={draw,#3},draw,#2}
    }
}
\tikzstyle{triple}=[tripleline={[line width=.15mm,black] in
      [line width=.7mm,white] in
      [line width=1mm,black]}] 
\tikzset{
    quadrupleline/.style args={[#1] in [#2] in [#3] in [#4]}{
        #1,preaction={preaction={preaction={draw,#4},draw,#3}, draw,#2}
    }
}
\tikzstyle{quadruple}=[quadrupleline={[line width=.3mm,white] in
      [line width=.6mm,black] in
      [line width=1.2mm,white] in
      [line width=1.5mm,black]}]

% Theorems %%%%%%%%%%%%%%%%%%%%%%%%%%%%%%%%%%%%%%%%%%
\theoremstyle{plain}
\newtheorem{thm}{Theorem}[section]
\newtheorem*{thm*}{Theorem}

\newtheorem{cor}[thm]{Corollary}

\newtheorem*{cor*}{Corollary}

\newtheorem*{conj*}{Conjecture}

\newtheorem*{lem*}{Lemma}

\newtheorem{prop}[thm]{Proposition}
\newtheorem{quest}[thm]{Question}
\newtheorem*{quest*}{Question}
\newtheorem*{claim*}{Claim}

\theoremstyle{definition}
\newtheorem{defn}[thm]{Definition}

\newtheorem{ex}[thm]{Example}
\newtheorem{sub-ex}[thm]{Sub-Example}
\newtheorem{counter-ex}[thm]{Counter-Example}
\newtheorem{rem}[thm]{Remark}
\newtheorem*{rem*}{Remark}

\usepackage{xcolor}
\definecolor{dark-red}{rgb}{0.7,0.25,0.25}
\definecolor{dark-blue}{rgb}{0.15,0.15,0.55}
\definecolor{medium-blue}{rgb}{0,0,.8}
\definecolor{DarkGreen}{RGB}{0,150,0}
\definecolor{rho}{named}{red}
\hypersetup{
   colorlinks, linkcolor={purple},
   citecolor={medium-blue}, urlcolor={medium-blue}
}

% colors %%%%%%%%%%%%%%%%%%%%%%%%%%%%%%%%%%%%%%%%

 % module string color

 % object string color

 % Hilbert space color
 % region shading

% commands %%%%%%%%%%%%%%%%%%%%%%%%%%

\def\altdb{\vadjust{\vbox to 0pt{\vss\hbox{\kern \hsize
\quad{\dbend}}\kern\baselineskip\kern-10pt}}}

\setlength{\textheight}{23cm} \setlength{\topmargin}{0cm}
\setlength{\textwidth}{16cm} \setlength{\oddsidemargin}{0cm}
\setlength{\evensidemargin}{0cm}

%%%%%%%%%%%%%%%%%

%%%%%%%%%%%%%%%%%

\newcommand{\noshow}[1]{}

\begin{document} 
\title{Universal coarse geometry of spin systems}
\author{Ali Elokl$^{1}$}
\address{$^{1}$ Department of Physics, North Carolina State University, Raleigh, NC 27695, USA}
\author{Corey Jones$^{2}$}
\address{$^{2}$ Department of Mathematics, North Carolina State University, Raleigh, NC 27695, USA}

\begin{abstract}
The prospect of realizing highly entangled states on quantum processors with fundamentally different hardware geometries raises the question: to what extent does a state of a quantum spin system have an intrinsic geometry? In this paper, we propose that both states and dynamics of a spin system have a canonically associated \textit{coarse geometry}, in the sense of Roe, on the set of sites in the thermodynamic limit. For a state $\phi$ on an (abstract) spin system with an infinite collection of sites $X$, we define a universal coarse structure $\mathcal{E}_{\phi}$ on the set $X$ with the property that a state has decay of correlations with respect to a coarse structure $\mathcal{E}$ on $X$ if and only if $\mathcal{E}_{\phi}\subseteq \mathcal{E}$. We show that under mild assumptions, the coarsely connected completion  $(\mathcal{E}_{\phi})_{con}$ is stable under quasi-local perturbations of the state $\phi$. We also develop in parallel a dynamical coarse structure for arbitrary quantum channels, and prove a similar stability result. We show that several order parameters of a state only depend on the coarse structure of an underlying spatial metric, and we establish a basic compatibility between the dynamical coarse structure associated to a quantum circuit $\alpha$ and the coarse structure of the state $\psi\circ \alpha$ where $\psi$ is any product state.
\end{abstract}

\maketitle

\tableofcontents

\section{Introduction}

Recently, there have been remarkable experimental realizations of highly entangled, topologically ordered ground states of gapped quantum spin liquids on quantum processors (\cite{doi:10.1126/science.abi8378, NonAbTop, NonAbTop2, NonAbTop3}). The states constructed in these experiments have (at least in principle) the \textit{same} entanglement structure as the ground state of the theoretical spin liquid up to local perturbation. Thus, properties of the spin liquid which emerge from the entanglement structure of the ground state alone, and are robust against local perturbation, are arguably \textit{realized} in the processor rather than merely simulated. A key example of such a property is topological order \cite{KITAEV20062, MR3929747}, 

Many of the interesting properties that we think of as emerging from the entanglement structure of a state also depend on the geometry of the lattice sites to define mathematically and in practice. In most of the experiments realizing topological order that have been performed so far, the spatial geometry of the processor's hardware roughly agrees with the geometry of the system being realized. However, for many states of interest, the geometry underlying the processor's hardware is neccessarily incompatible with the geometry of the system from which the state originates (which occurs to some extent in the experiment \cite{NonAbTop}).

To illustrate the problem, suppose that we prepare the ground state of a 3D spin liquid (e.g. a Walker-Wang model \cite{WW}) on a quantum processor with 2D hardware, making use of some (necessarily non-local) bijective encoding scheme to identify sites of the 3D lattice with sites in the 2D processor\footnote{in principle we could imagine a more complicated encoding scheme which doesn't directly map sites to sites, but we will see that allowing this can dramatically change the universal features of a system}. If we, as the experimenter, remember this encoding scheme, then it seems that the geometric incompatibility between the two systems may not be an issue, since the universal properties depending on the 2D metric of the system are still mathematically well defined as long as we utilize our encoded 3D metric. However, if we are simply handed the state on the processor and not told the encoding scheme, is it possible in principle to work out the ``correct" geometry on the set of sites to be used when defining and measuring characteristics such as topological order, independently of the physical spatial relations of the local Hilbert spaces?

\begin{quest}\label{question1}
Is there a precise mathematical sense in which the ``correct" geometry on the set of sites of a spin system, with respect to which entanglement-based order parameters should be defined, is intrinsic to a state? Or do we need to rely on the extrinsic spatial geometry of the sites to define entanglement-based order parameters?
\end{quest}

Another theoretical motivation for the above question comes from making sense of various notions of equivalence of quantum phases. Traditionally, we say ground states of two gapped Hamiltonians are in the same quantum phase if there is a (smooth, continuous, etc.) path of Hamiltonians between the two that keeps the gap open. However, the quantum information perspective has motivated the consideration of state-based definitions of equivalence that do not require reference to a parent Hamiltonian. Examples include finite-depth circuit equivalence, locality-preserving equivalence \cite{PhysRevB.82.155138, MR3929747}, and quasi-local automorphic equivalence \cite{AutomorphicEquiv}. We stress that all of these notions depend not just on the state, but a-priori also on the underlying geometry of the lattice to define. Thus the``intrinsic geometry" of a state should tell you which geometric structure you should be using when trying to characterize quantum phases.

In this paper, we propose a positive resolution to Question \ref{question1} in the thermodynamic limit using the mathematical framework of \textit{coarse geometry}.

\subsection{The coarse perspective.} 

An important way that the geometry of space is expressed in physical systems is through locality. This is typically encoded in the structure of the parent Hamiltonian of the system, by asking for interactions that appropriately decay as a function of distance \cite{MR1441540, MR3929747}. However, locality manifests in states (e.g. ground states or equilibrium/KMS states  \cite[Section 5.2]{MR1441540}) through the decay of the two point correlation functions as a function of distance. Here we do not refer to the actual \textit{rate} of decay of correlations, but to the mere \textit{fact} of decay as a function of distance. This leads us to say that a state of a spin system with a given metric on the set of sites is \textit{local} with respect to a metric function on the sites if the two-point correlation functions decay as a function of the distance between the supports of localized observables. 

The key observation of our paper is that this notion of locality for a state depends only on the abstract \textit{large scale geometry} of the metric, and is independent of the details at small length scales. Thus two metrics induce the same notion of locality if they ``look the same" at large length scales. This leads us to propose an answer to Question \ref{question1} using the framework of \textit{coarse geometry}, which formalizes the notion of large scale geometric structure. One way to make formal sense of two metrics looking the same at large length scales is via an equivalence relation on metric spaces called \textit{coarse equivalence}. For example, the metric space $\mathbbm{R}^{n}$ is coarsely equivalent to any discrete n-dimensional lattice (e.g. $\mathbbm{Z}^{n}$) despite obvious differences in topology, but $\mathbbm{R}^{n}$ is coarsely equivalent to $\mathbbm{R}^{m}$ if and only if $n=m$ (see Section \ref{coarsestructures} for a detailed explanation). Furthermore hyperbolic $n$-space $\mathbbm{H}^{n}$ is \textit{not} coarsely equivalent to $\mathbbm{R}^{n}$, despite being homeomorphic.

While the coarse perspective originally arose in the context of geometric group theory and index theory of pseudo-elliptic operators on non-compact manifolds \cite{MR2986138, MR1147350, MR4411373}, it has recently come to play a role in mathematical physics in the theory of disordered topological insulators \cite{ludewig2023coarsegeometryapplicationssolid, MR3594362, EM, MR4171428, MR4287182, MR4496385, MR4485912}. In this paper, we will make use of Roe's framework  of \textit{abstract} coarse structures, which provide a metric-independent characterization of coarse geometry in the sense of \cite{MR2007488}, to propose an answer to Question \ref{question1}.

If $X$ is a set, a coarse structure $\mathcal{E}$ is a set of subsets of $X\times X$, i.e. $\mathcal{E}\subseteq \mathcal{P}(X\times X)$ , satisfying several closure axioms (see Definition \ref{Coarse structure}). The elements $E\in \mathcal{E}$ are called \textit{controlled sets}. If $(X,d)$ is a metric space, the the associated bounded coarse structure is defined by 

$$\mathcal{E}^{d}:=\{E\subseteq X\times X\ |\ \sup_{(x,y)\in E} d(x,y)<\infty\}.$$ 

For a fixed set $X$, it will be useful for us to think of the collection of all coarse structures on it as a partially ordered set with respect to containment. Now note that two metrics on $X$ induce \textit{the same} coarse structure precisely when they have the same controlled set, and having the same coarse structure will be the notion that is most relevant to use, rather than abstract coarse equivalence.

We can now ask a version Question \ref{question1} as follows: given a state $\phi$ in the thermodynamic limit of a spin system, is there an intrinsic coarse structure on the sites with respect to which it is local?  This question naturally leads us to consider the collection of \textit{all} coarse geometries on the set of sites with respect to which this state is local. We can then single out the \textit{smallest} coarse structure on $X$ with respect to which $\phi$ is local, which we will see below satisfies a nice universal property.

To formalize this mathematically, suppose that we have an abstract\ (classical or quantum) spin system\footnote{in this paper abstract spin systems are characterized by discrete nets of C*-algebras, see Section \ref{sec:discretenets}}, with the sites indexed by some set $X$. To start with, our system is disembodied, in the sense that we do not assume any geometric structure on $X$ is given a-priori. Let $A_{x}$ denote the C*-algebra of observables localized at site $x$. Typically $A_{x}$ is taken to be $M_{d}(\mathbbm{C})$ in the quantum case and $\mathbbm{C}^{d}$ in the classical case. Let $A$ be the C*-algebra of quasi-local observable, which is simply the infinite tensor product $\otimes_{x\in X} A_{x}$. Let $\phi$ be a state on $A$, and set 

$$C_{\phi}(x,y):=\sup_{a\in A_{x},\ b\in A_{y}}\frac{|\phi(ab)-\phi(a)\phi(b)|}{\|a\| \|b\|}.$$

$C_{\phi}(x,y)$ measures the maximal correlation between observables localized at points $x$ and $y$ respectively. $\phi$ is then said to be \textit{local}\footnote{we will consider higher order variations on this definition, see Definition \ref{higher cont dec} } with respect to a coarse structure $\mathcal{E}$ on $X$ if $C_{\phi}$ has \textit{controlled decay} with respect to $\mathcal{E}$, or in other words, for every $\epsilon>0$, there exists some controlled set $E$ such that for all $(x,y)\notin E$, with $C_{\phi}(x,y)<\epsilon$ (see Defintion \ref{controlled dec}). Before we state the main theorem, recall a coarse structure is \textit{connected} if every finite subset of $X\times X$ is controlled (see Definition \ref{coarsestructures}). We now formulate the following theorem, which is a special case of the more general Theorem \ref{univ. cs}.

\medskip

\begin{thm} Let $\phi$ be a state on an (abstract) spin system over the set $X$. Then

\begin{enumerate}
\item 
There is a coarse structure $\mathcal{E}_{\phi}$ on $X$ such that $\phi$ is local with respect to the coarse structure $\mathcal{E}$ if and only if $\mathcal{E}_{\phi}\subseteq \mathcal{E}$.
\item 
There is a \textit{connected} coarse structure $(\mathcal{E}_{\phi})_{con}$ on $X$ such that $\phi$ is local with respect to the connected coarse structure $\mathcal{E}$ if and only if $(\mathcal{E}_{\phi})_{con}\subseteq \mathcal{E}$.
\end{enumerate}
\end{thm}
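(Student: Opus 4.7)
The plan is to build $\mathcal{E}_{\phi}$ as the smallest coarse structure containing the sublevel sets of the correlation function $C_{\phi}$, and then read off the universal property from standard set-theoretic generalities about intersection-stable axioms.

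First I would reformulate the locality condition purely as a membership condition on subsets of $X \times X$. Setting
$$D_{\epsilon} := \{(x,y) \in X \times X \,:\, C_{\phi}(x,y) \geq \epsilon\},$$
I claim that $\phi$ is local with respect to $\mathcal{E}$ if and only if $D_{\epsilon} \in \mathcal{E}$ for every $\epsilon > 0$. The forward direction uses closure of coarse structures under taking subsets (the witnessing controlled set $E$ contains $D_{\epsilon}$), while the reverse direction is immediate on taking $E = D_{\epsilon}$ itself. It is also worth noting that each $D_{\epsilon}$ is symmetric, since observables supported at distinct sites commute in the spin system and so $C_{\phi}(x,y) = C_{\phi}(y,x)$, which ensures the generators interact well with the inverse-closure axiom.

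Next I would check that all of the defining axioms of a coarse structure (containing the diagonal, closure under subsets, inverses, finite unions, and compositions) are preserved under arbitrary intersection. This makes
$$\mathcal{E}_{\phi} := \bigcap\bigl\{\mathcal{E} \,:\, \mathcal{E} \text{ is a coarse structure on } X \text{ with } D_{\epsilon} \in \mathcal{E} \text{ for every } \epsilon > 0\bigr\}$$
a well-defined coarse structure (the family is nonempty since $\mathcal{P}(X \times X)$ belongs to it). By construction $D_{\epsilon} \in \mathcal{E}_{\phi}$ for every $\epsilon > 0$, hence $\phi$ is local with respect to $\mathcal{E}_{\phi}$; conversely, if $\phi$ is local with respect to some $\mathcal{E}$, then $D_{\epsilon} \in \mathcal{E}$ for every $\epsilon$, so $\mathcal{E}$ is one of the coarse structures being intersected, and therefore $\mathcal{E}_{\phi} \subseteq \mathcal{E}$. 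This proves part (1).

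For part (2) I would repeat the construction, but restricting the intersection to \emph{connected} coarse structures. Connectedness is preserved under intersection (every finite subset of $X \times X$ lies in each member of the family and hence in the intersection), and the family remains nonempty because $\mathcal{P}(X \times X)$ is itself connected. Defining $(\mathcal{E}_{\phi})_{con}$ by the analogous formula, the same argument as in (1), carried out within the subcollection of connected coarse structures, yields the stated universal property. The proof is essentially formal once the reformulation of locality as the condition $D_{\epsilon} \in \mathcal{E}$ is in hand; the ``main obstacle'' here is really just this reformulation step together with verifying symmetry of the generators $D_{\epsilon}$, since after that the result reduces to the standard observation that the axioms cutting out (connected) coarse structures are all intersection-stable.
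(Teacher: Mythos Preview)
Your argument is correct. The reformulation of locality as the membership condition $D_{\epsilon}\in\mathcal{E}$ for all $\epsilon>0$ is exactly the key step, and after that the universal property follows from intersection-stability of the coarse-structure axioms, as you say. (The symmetry observation for $D_{\epsilon}$ is true but unnecessary: the generated coarse structure automatically contains $D_{\epsilon}^{-1}$ whether or not $D_{\epsilon}$ is already symmetric.)

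The paper takes a slightly different, more constructive route. Rather than defining $\mathcal{E}_{\phi}$ as the intersection of all coarse structures containing the $D_{\epsilon}$, it builds $\mathcal{E}_{\phi}$ from below as the nested union $\bigcup_{\epsilon>0}\mathcal{E}_{\phi,\epsilon}$, where $\mathcal{E}_{\phi,\epsilon}:=\langle D_{\epsilon}\rangle$ is the monogenic coarse structure generated by a single sublevel set. Since the $D_{\epsilon}$ are nested in $\epsilon$, so are the $\mathcal{E}_{\phi,\epsilon}$, and the union is therefore again a coarse structure. The universal property then follows just as in your proof. Of course the two constructions agree: both produce the coarse structure generated by $\{D_{\epsilon}:\epsilon>0\}$.

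Your top-down argument is cleaner for establishing the bare universal property. The paper's bottom-up construction, however, yields the intermediate filtration $\{\mathcal{E}_{\phi,\epsilon}\}_{\epsilon>0}$ as a byproduct, and this filtration is used repeatedly later on: it gives the graph models $G_{f,\epsilon}$, the characterization of monogenicity as $\mathcal{E}_{\phi}=\mathcal{E}_{\phi,\epsilon}$ for some $\epsilon$ (hence the notion of \emph{stable range}), and is the workhorse in the stability-under-perturbation proofs. So the paper's approach buys a concrete handle on $\mathcal{E}_{\phi}$ that is essential for the subsequent results, whereas your approach settles the existence statement more directly.
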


\noindent $(\mathcal{E}_{\phi})_{con}$ in the above Theorem can be viewed as a ``connected completion" of $\mathcal{E}_{\phi}$ generated by adding all finite subsets of $X\times X$ to $\mathcal{E}_{\phi}$. 

The above theorem gives a complete characterization of (connected) coarse structures with respect to which $\phi$ is local, and gives natural candidates for a choice of ``universal coarse geometry". There are other natural variations of $\mathcal{E}_{\phi}$ that are similarly universal but with additional hypotheses. The main variation we consider in this paper involves \textit{higher order} controlled decay, which gives rise to the universal coarse structure $\mathcal{E}_{\widetilde{\phi}}$ (see Definition \ref{higher cont dec} and Theorem \ref{scaledef}). In this case, instead of considering operators localized at points, we consider operators localized in uniformly bounded regions. For resonable nice states, we would expect all these variations to agree. The reason we consider all these variations here is that they make the statement of some of our theorems more convenient.  

\subsection{Stability under quasi-local perturbation}. For a coarse geometry associated to a state to be meaningful as a basis for defining order parameters robust under local perturbation, the geometry itself should be robust under local perturbation. Since we are working in the thermodynamic limit, the more natural constraint is to require that the geometry is stable under \textit{quasi-local perturbation}. For the purposes of this paper, a quasi-local pertubation is defined to be quantum channel $\Phi$ on the quasi-local algebra which is a norm limit of quantum channels that are localized in finite regions (see Definition \ref{quasilocalperturbation}). For the case of a pure state on a quantum spin system, stability under quasi-local perturbation is neccessary and sufficient for a property to depend only on its quasi-equivalence class \cite{MR887100}.

To state one of our main results, recall that a coarse structure on a countable set $X$ is \textit{proper} if all bounded sets are finite, and monogenic if it can be generated by a single controlled set. The prototypical examples of proper, monogenic spaces are the coarse spaces associated to infinite, locally finite graphs, e.g. lattices $\mathbbm{Z}^{n}$ and more generally, Cayley graphs of groups. It turns out that the monogenic property of our universal coarse structure $\mathcal{E}_{\phi}$ is equivalent to the coarse scale invariance of the correlations (see Proposition \ref{coarsegeodesic}). One of our main results is the following theorem, which in the body of the paper is proved in a more general setting, see Theorem \ref{full stability}.

 \begin{thm} Let $\phi$ be a state in the thermodynamic limit of a spin system such that $\mathcal{E}_{\phi}$ is proper and monogenic. Then for any quasi-local perturbation $\Psi$, $(\mathcal{E}_{\phi\circ \Psi})_{con}=(\mathcal{E}_{\phi})_{con}$. 
\end{thm}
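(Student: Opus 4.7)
The plan is to prove the two inclusions $(\mathcal{E}_\phi)_{con}\subseteq(\mathcal{E}_{\phi\circ\Psi})_{con}$ and $(\mathcal{E}_{\phi\circ\Psi})_{con}\subseteq(\mathcal{E}_\phi)_{con}$ by decomposing each generating correlation set of one coarse structure as a union of a piece that lies in the other coarse structure and a piece that is \emph{finite} (and therefore absorbed by the connected completion). I would work throughout in the higher-order controlled decay framework of Definition~\ref{higher cont dec}, writing $\widetilde{C}_\phi(Y,Z)$ for the correlation of $\phi$ between bounded regions $Y,Z\subseteq X$, and invoke Theorem~\ref{scaledef} at the end to transfer the conclusion to $\mathcal{E}_\phi$ itself.

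The quasi-locality of $\Psi$ is the sole analytic input: for each $\epsilon>0$ there is a finite $F\subseteq X$ and a channel $\Psi_F$ acting as the identity on $A_{F^c}$ with $\|\Psi-\Psi_F\|<\epsilon$. For $Y,Z\subseteq F^c$ and any $a\in A_Y$, $b\in A_Z$, $\Psi_F$ fixes $a$, $b$, and $ab$ exactly, so the correlations of $\phi\circ\Psi_F$ and of $\phi$ between $A_Y$ and $A_Z$ are identical; combining with $\|\Psi-\Psi_F\|<\epsilon$ gives $|\widetilde{C}_\phi(Y,Z)-\widetilde{C}_{\phi\circ\Psi}(Y,Z)|=O(\epsilon)$. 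Hence the restrictions to $F^c\times F^c$ of the correlation sets of $\phi$ and of $\phi\circ\Psi$ coincide up to shifting thresholds by $O(\epsilon)$.

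Properness and monogeneity of $\mathcal{E}_\phi$ tame the ``near $F$'' pieces. Under these hypotheses, for any fixed controlled $E$, the $E$-neighborhood $F':=N_E(F)$ of the finite set $F$ is itself finite; hence any $E$-bounded region $Y$ meeting $F$ lies inside $F'$. Applying the higher-order decay of $\phi$ to the fixed region $F'$ produces a controlled $E'$ with $\widetilde{C}_\phi(F',Z)<\delta$ whenever $F'\times Z\not\subseteq E'$, so the bad set of $Z$'s is contained in $N_{E'}(F')$, again finite. For $\phi\circ\Psi$ the analogous estimate follows by substituting $\Psi\approx\Psi_F$: when $a\in A_Y\subseteq A_{F'}$ and $b\in A_Z$ with $Z\cap F=\emptyset$, one has $\Psi_F(ab)=\Psi_F(a)\cdot b$ with $\Psi_F(a)\in A_{F'}$, yielding $\widetilde{C}_{\phi\circ\Psi}(Y,Z)\leq\widetilde{C}_\phi(F',Z)+O(\epsilon)$, which reduces the structural claim for $\phi\circ\Psi$ to the one for $\phi$.

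Combining the analytic and structural ingredients, a generator $\{(Y,Z):\widetilde{C}_\phi(Y,Z)\geq\epsilon\}$ of $\mathcal{E}_\phi$ (restricted to $E$-bounded pairs) splits as a piece supported inside $F^c\times F^c$, which lies in $\mathcal{E}_{\phi\circ\Psi}$ at a shifted threshold, together with a piece where $Y$ or $Z$ meets $F$, which is finite by the structural step and hence lies in $(\mathcal{E}_{\phi\circ\Psi})_{con}$. The symmetric argument, swapping the roles of $\phi$ and $\phi\circ\Psi$, yields the reverse inclusion. The step I expect to be most delicate is the structural one: organizing the approximation so that the finite ``source'' region $F'$ obtained from $\mathcal{E}_\phi$'s geometry is compatible with the localization of $\Psi_F$, while tracking all the error terms carefully enough that the $O(\epsilon)$ shifts do not compound into something uncontrolled. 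Without the proper monogenic hypothesis the sets $N_E(F)$ need not be finite and this decomposition breaks down.
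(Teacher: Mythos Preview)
Your outline has the right instinct---split each generating set into a ``far from $F$'' piece (controlled by the other state) and a ``near $F$'' piece (finite)---but there are two genuine gaps, and the role of monogenicity is misplaced.

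First, working in the higher-order framework is unjustified here. The hypothesis is that $\mathcal{E}_\phi$ (built from the \emph{point} correlator $C_\phi$) is proper and monogenic; nothing is assumed about $\widetilde{C}_\phi$ or $\mathcal{E}_{\widetilde{\phi}}$. Your structural step ``applying the higher-order decay of $\phi$ to the fixed region $F'$'' therefore uses a property you do not have, and Theorem~\ref{scaledef} does not transfer conclusions between $\mathcal{E}_{\widetilde{\phi}}$ and $\mathcal{E}_\phi$.

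Second, even at the point level your reverse inclusion breaks. The ``symmetric argument'' requires controlling the piece of $E_{\epsilon,\phi\circ\Psi}$ with one coordinate in $F$, but you know neither that $\mathcal{E}_{\phi\circ\Psi}$ is proper nor where $\Psi_F(a)$ lives for $a\in A_x$, $x\in F$. The definition of a localized channel only says $\Psi_F|_{A_{F^c}}=\mathrm{id}$; it does not say $\Psi_F(A_F)\subseteq A_{F'}$. So the displayed inequality $\widetilde{C}_{\phi\circ\Psi}(Y,Z)\le \widetilde{C}_\phi(F',Z)+O(\epsilon)$ is not available: you are left with a $\phi$-correlation between the \emph{delocalized} element $\Psi_F(a)$ and a localized $b$, which $C_\phi$ does not bound.

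The paper avoids both issues by using monogenicity differently. Monogenicity is not used to make $N_E(F)$ finite (that is properness alone); it is used, via Proposition~\ref{coarsegeodesic}, to produce a \emph{stable range}: an interval of $\epsilon$ for which $\mathcal{E}_\phi=\langle E_{\epsilon,\phi}\rangle$. The argument then runs in two steps. For each localized approximant $\Psi_i$, the point correlators of $\phi$ and $\phi\circ\Psi_i$ agree on $(X\setminus F_i)^2$, and a properness lemma (removing a finite set does not change the connected completion) gives $(\mathcal{E}_{\phi\circ\Psi_i})_{con}=(\mathcal{E}_\phi)_{con}$, with the \emph{same} stable range. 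Then the continuity estimate of Proposition~\ref{stability1} sandwiches $E_{\epsilon,\phi\circ\Psi}$ between $E_{\epsilon\pm 3\delta,\phi\circ\Psi_i}$ for $\delta=\|\phi\circ\Psi-\phi\circ\Psi_i\|$; choosing $i$ large enough that $\epsilon\pm 3\delta$ both lie in the stable range collapses the sandwich to $(\mathcal{E}_\phi)_{con}$. The stable-range mechanism is exactly what lets you avoid ever analyzing $C_{\phi\circ\Psi}(x,y)$ for $x\in F$ directly.
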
 

There are two immediate consequences for a pure state $\phi$ in the thermodyamic limit of a spin system with $\mathcal{E}_{\phi}$ proper and monogenic.

\bigskip

\begin{enumerate}
    \item 
    (Corollary \ref{normclose}). If $\psi$ is any other pure state with $\||\phi-\psi\||<2$, $(\mathcal{E}_{\phi})_{con}=(\mathcal{E}_{\psi})_{con}$.

    \bigskip
    
    \item 
    If $(\mathcal{E}_{\phi})_{con}$ is a non-trival connected coarse structure, $\phi$ is not a quasi-local perturbation of a product state.
\end{enumerate}

These results show that, at least under the reasonable assumption that $\mathcal{E}_{\phi}$ is proper and monogenic, $(\mathcal{E}_{\phi})_{con}$ is robust against all local perturbations, making it a solid candidate for a universal order parameter.

\subsection{Universal coarse geometry of circuits.} Our discussion so far has focused on studying coarse geometry of states defined in terms of correlation decay, but in fact, our mathematical formalism for constructing universal coarse structures on a set $X$ is very general. The input is essentially any function $f:X\times X\rightarrow\mathbbm{R}_{+}$ (or, for the higher order setting, a function from the product of power sets $\widetilde{f}: \mathcal{P}(X)\times \mathcal{P}(X)\rightarrow \mathbbm{R}_{+}$, see Section \ref{sec 4}). In the above discussion we utilized the two-point correlation function $C_{\phi}$, but there are many other interesting functions related to physical objects that one could consider that we would expect to decay with distance.

An example we find particularly interesting are coarse structures arising from dynamics of quantum spin systems. A (discrete) time evolution on a quantum spin system is local with respect to a given geometry if its ``tails decay" with distance. More formally, using the same setup as before, if $\alpha$ is an automorphism of (or more generally, a unital completely positive (ucp) map on) the quasi-local C*-algebra $A$, then we consider the tail function

$$Q_{\alpha}(x,y):=\sup_{a\in A_{x}, b\in A_{y}} \frac{||[\alpha(a),b]||}{\|a\| \|b\|}.$$

We define the quantum dynamical coarse structure on $X$, $\mathcal{E}_{\alpha}$, to be the universal coarse structure such that $Q_{\alpha}$ has controlled decay. For dynamics arising from integrating gapped, local Hamitlonians or from quantum cellular automata, this construction generically recovers the original coarse geometry (see Section \ref{subsection 4.3} for futher discussion). Unlike the correlation coarse structure, which applies equally well to classical and quantum states, the dynamical coarse structure is only interesting for quantum systems for which localization of observables is detected by commutativity. Nevertheless, we investigate the quantum dynamical structure in parallel with the correlation coarse structure, since many results are completely analogous. Similarly to the correlation case, we can conisder the ``higher order" version $\mathcal{E}_{\widetilde{\alpha}}$. 

The quantum dynamical coarse structure gives an interesting intrinsic coarse geometry for quantum circuits in the thermodynamic limit, and thus the discussion above motivating the study of correlation coarse structures for state realization applies in parallel to universal properties of dynamics.

We have the following stability result precisely parallel to the case of correlation coarse structures (see Theorem \ref{thm: dynamicalcoarsestability}).

\begin{thm} Let $A$ be an (abstract) spin system over the set $X$, and $\alpha:A\rightarrow A$ be any ucp map such that $\mathcal{E}_{\alpha}$ is proper and monogenic. Then for any quasi-local perturbation  $\Psi:A\rightarrow A$, $(\mathcal{E}_{\alpha\circ \Psi})_{con}=(\mathcal{E}_{\alpha})_{con}$. 
\end{thm}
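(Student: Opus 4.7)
The plan is to prove the two inclusions $(\mathcal{E}_{\alpha\circ\Psi})_{con}\subseteq(\mathcal{E}_{\alpha})_{con}$ and $(\mathcal{E}_{\alpha})_{con}\subseteq(\mathcal{E}_{\alpha\circ\Psi})_{con}$ separately, mirroring the strategy for the correlation-case Theorem \ref{full stability} but with the tail function $Q$ in place of the correlation function $C$. The common ingredient is a norm approximation $\Psi=\lim_n\Psi_n$, with each $\Psi_n$ a ucp map strictly localized in a finite region $F_n\subseteq X$ (so $\Psi_n$ acts as the identity on $A_{F_n^c}$). Fixing $\epsilon>0$ and choosing $n$ with $\|\Psi-\Psi_n\|<\epsilon/4$, the reverse triangle inequality together with $\|\alpha\|=1$ gives
\[
\bigl|\,\|[\alpha(\Psi(a)),b]\|-\|[\alpha(\Psi_n(a)),b]\|\,\bigr|\le 2\|\Psi-\Psi_n\|<\epsilon/2
\]
for all unit $a\in A_x$ and $b\in A_y$, so $Q_{\alpha\circ\Psi}$ and $Q_{\alpha\circ\Psi_n}$ differ pointwise by at most $\epsilon/2$. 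It therefore suffices to compare $\mathcal{E}_{\alpha\circ\Psi_n}$ with $\mathcal{E}_\alpha$ modulo an $\epsilon/2$ correction in the decay threshold and then let $\epsilon\to 0$.

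For the inclusion $(\mathcal{E}_{\alpha\circ\Psi})_{con}\subseteq(\mathcal{E}_{\alpha})_{con}$, I would use that $\Psi_n(a)=a$ whenever $a\in A_x$ with $x\notin F_n$, giving $Q_{\alpha\circ\Psi_n}(x,y)=Q_\alpha(x,y)$ on pairs with $x\notin F_n$; for $x\in F_n$, $\Psi_n(a)\in A_{F_n}$ and hence $\|[\alpha(\Psi_n(a)),b]\|\le Q_{\widetilde{\alpha}}(F_n,\{y\})$. Here the monogenic and proper hypothesis enters via Proposition \ref{coarsegeodesic} and its higher-order counterpart (Theorem \ref{scaledef}): on pairs of bounded subsets, $\mathcal{E}_{\widetilde{\alpha}}$ coincides with $\mathcal{E}_\alpha$, and since $F_n$ is finite, hence bounded, $Q_{\widetilde{\alpha}}(F_n,\cdot)$ has controlled decay in $\mathcal{E}_\alpha$. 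Combining this with the controlled set from the $x\notin F_n$ regime, one obtains a single controlled set in $(\mathcal{E}_\alpha)_{con}$ off which $Q_{\alpha\circ\Psi_n}<\epsilon/2$, and therefore $Q_{\alpha\circ\Psi}<\epsilon$.

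For the reverse inclusion, let $\mathcal{E}$ be any connected coarse structure such that $Q_{\alpha\circ\Psi}$ has controlled decay; I show $Q_\alpha$ does too. Given $\delta>0$, pick $E_1\in\mathcal{E}$ with $Q_{\alpha\circ\Psi}<\delta/4$ off $E_1$, so after the approximation $Q_{\alpha\circ\Psi_n}<\delta/2$ off $E_1$. For pairs with $x,y\notin F_n$, $Q_\alpha(x,y)=Q_{\alpha\circ\Psi_n}(x,y)$, so $Q_\alpha<\delta$ off $E_1$ on this region. The remaining bad pairs lie in the strips $F_n\times X$ and $X\times F_n$, but the pairs on these strips with $Q_\alpha\ge\delta$ are contained in some $E_\delta'\in\mathcal{E}_\alpha$ by definition of $\mathcal{E}_\alpha$; by properness of $\mathcal{E}_\alpha$, $E_\delta'[F_n]$ is bounded hence finite, so $(F_n\times X)\cap E_\delta'\subseteq F_n\times E_\delta'[F_n]$ is a finite set, which lies in $\mathcal{E}$ by connectedness (and similarly for $X\times F_n$). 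The union with $E_1$ is then a controlled set in $\mathcal{E}$ witnessing $\delta$-decay of $Q_\alpha$. The main obstacle in both directions is this absorption of the $F_n$-strip correction into the relevant connected coarse structure: the forward direction requires the monogenic hypothesis through the scale-invariant identification $\mathcal{E}_\alpha=\mathcal{E}_{\widetilde{\alpha}}$ on bounded pairs, while the reverse direction requires properness to ensure that $E_\delta'[F_n]$ is finite and thus automatically absorbed by the connected completion.
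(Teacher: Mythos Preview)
Your reverse inclusion $(\mathcal{E}_{\alpha})_{con}\subseteq(\mathcal{E}_{\alpha\circ\Psi})_{con}$ is fine: you use properness of $\mathcal{E}_\alpha$ to see that the portion of $E_{Q_\alpha,\delta}$ meeting the strip $F_n\times X$ is finite and hence absorbed by any connected coarse structure. This is essentially what the paper does as well.

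The forward inclusion, however, has a genuine gap. First, the claim ``for $x\in F_n$, $\Psi_n(a)\in A_{F_n}$'' does not follow from Definition~\ref{quasilocalperturbation}: a localized channel is only required to satisfy $\Psi_n|_{A_{F_n^c}}=\mathrm{Id}$, and nothing forces $\Psi_n(A_{F_n})\subseteq A_{F_n}$. More seriously, even granting that, your appeal to ``$\mathcal{E}_{\widetilde{\alpha}}$ coincides with $\mathcal{E}_\alpha$ on pairs of bounded subsets'' is not a consequence of Proposition~\ref{coarsegeodesic} or Theorem~\ref{scaledef}. Those results say only that monogenic means $\mathcal{E}_\alpha=\mathcal{E}_{\alpha,\epsilon}$ for some $\epsilon$, and that $\mathcal{E}_{\widetilde{\alpha}}$ exists as a universal object; neither gives any control of $\widetilde{Q}_\alpha(F_n,\cdot)$ in terms of $\mathcal{E}_\alpha$. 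Controlled decay of $\widetilde{Q}_\alpha(F_n,\cdot)$ in $\mathcal{E}_\alpha$ is a strictly stronger hypothesis than what the theorem assumes. Without it you have no way to bound the set $\{y:Q_{\alpha\circ\Psi_n}(x,y)\ge\epsilon/2\}$ for $x\in F_n$, and that set could a priori be all of $X$.

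The paper's route avoids this entirely. It proves $(\mathcal{E}_\alpha)_{con}=(\mathcal{E}_{\alpha\circ\Psi_n})_{con}$ \emph{symmetrically in one step}: since $Q_\alpha$ and $Q_{\alpha\circ\Psi_n}$ agree on $(X\setminus F_n)\times(X\setminus F_n)$, the restricted coarse structures $\mathcal{E}_\alpha|_{(X\setminus F_n)^2}$ and $\mathcal{E}_{\alpha\circ\Psi_n}|_{(X\setminus F_n)^2}$ coincide; and for a \emph{proper} coarse structure $\mathcal{E}$, one has $\mathcal{E}_{con}=\langle\mathcal{E}|_{(X\setminus C)^2}\cup\mathcal{F}(X\times X)\rangle$ for any finite $C$. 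The point is that properness of $\mathcal{E}_{\alpha\circ\Psi_n}$ is \emph{also} inherited from the agreement on the restriction, so the same deletion argument applies to it. Monogenicity then enters only at the very end, through the stable-range sandwich $E_{\epsilon+\delta,\alpha\circ\Psi_n}\subseteq E_{\epsilon,\alpha\circ\Psi}\subseteq E_{\epsilon-\delta,\alpha\circ\Psi_n}$ coming from Proposition~\ref{dynamicalclos}, which transports the equality from $\Psi_n$ to $\Psi$. In short: the paper uses properness to handle the finite strip for \emph{both} $\alpha$ and $\alpha\circ\Psi_n$ simultaneously, rather than trying to estimate $Q_{\alpha\circ\Psi_n}$ on the strip via higher-order data that is not available.
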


There are two immediate consequences for unitary dynamics (or in other words, when the channel $\alpha$ is an \textit{automorphism of the quasi-local algebra}) on a \textit{concrete} spin system $A$, when $\mathcal{E}_{\alpha}$ is proper and monogenic.

\bigskip

\begin{enumerate}
\item (Corollary \ref{cor:dynamicalcorrollary}). If $\beta\in \text{Aut}(A)$ with $\|\alpha-\beta\|<1$, then $(\mathcal{E}_{\alpha})_{\text{con}}=(\mathcal{E}_{\beta})_{\text{con}}$.

\bigskip

\item 
(c.f Proposition \ref{prop:inner auto}). If $(\mathcal{E}_{\alpha})_{\text{con}}$ is a non-trivial connected coarse structure, $\alpha$ is not an inner automorphism.
\end{enumerate}

\bigskip

\noindent \textbf{State preparation}. In Section \ref{loc pres chan}, we tie (higher-order) correlation and dynamical coarse structures together in the context of state preparation on a quantum processor. We study the behavior of coarse structures of states under locality preserving channels. As a consequence of this analysis, we obtain the following as a direct consequence of Corollary \ref{dynamics and states}. 

\begin{thm}
 Suppose $\phi$ is a state and $\alpha$ is a circuit that prepares $\phi$ from a product state. Then $\mathcal{E}_{\widetilde{\phi}}\subseteq \mathcal{E}_{\widetilde{\alpha}}$    
\end{thm}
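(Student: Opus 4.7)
The plan is to apply the universal property of the higher-order dynamical coarse structure $\mathcal{E}_{\widetilde{\alpha}}$ (the dynamical analog of Theorem~\ref{scaledef}): the containment $\mathcal{E}_{\widetilde{\phi}} \subseteq \mathcal{E}_{\widetilde{\alpha}}$ is equivalent to $\phi$ exhibiting higher-order controlled decay of correlations with respect to $\mathcal{E}_{\widetilde{\alpha}}$. So the task reduces to showing that for each uniform size bound $N$ on regions and each $\epsilon > 0$, there exists an $\mathcal{E}_{\widetilde{\alpha}}$-controlled set $E$ such that
\[
\left|\phi(ab) - \phi(a)\phi(b)\right| \;<\; \epsilon
\]
for every disjoint pair $(R,S) \notin E$ with $|R|,|S| \leq N$ and every $a\in A_R$, $b\in A_S$ of unit norm.

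Fix $\epsilon$ and choose a sufficiently small $\delta > 0$ (to be specified at the end). By the defining controlled-decay property of $\mathcal{E}_{\widetilde{\alpha}}$, there is a controlled set $E \in \mathcal{E}_{\widetilde{\alpha}}$ such that $\widetilde{Q}_\alpha(R,S) < \delta$ and $\widetilde{Q}_\alpha(S,R) < \delta$ for all $(R,S)\notin E$ (symmetrizing by union if necessary). Since $\alpha$ is a $\ast$-homomorphism, $\alpha(ab) = \alpha(a)\alpha(b)$, giving
\[
\phi(ab)-\phi(a)\phi(b) \;=\; \psi(\alpha(a)\alpha(b)) - \psi(\alpha(a))\psi(\alpha(b)).
\]
A standard averaging step — the tracial partial trace $E_S^{\mathrm{tr}}\colon A \to A_{X\setminus S}$, equivalent to integrating over unitaries of the finite-dimensional algebra $A_S$ via Haar measure — produces $\tilde{a} := E_S^{\mathrm{tr}}(\alpha(a)) \in A_{X\setminus S}$ with $\|\alpha(a) - \tilde{a}\| \leq \widetilde{Q}_\alpha(R,S) < \delta$, and symmetrically $\tilde{b} \in A_{X\setminus R}$ with $\|\alpha(b) - \tilde{b}\| < \delta$. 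Telescoping, we reduce modulo $O(\delta)$ to bounding $\psi(\tilde{a}\tilde{b}) - \psi(\tilde{a})\psi(\tilde{b})$.

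Using the product structure $\psi = \psi_R \otimes \psi_S \otimes \psi_M$ with $M := X\setminus(R\cup S)$, this residual term unwinds algebraically to a correlation $\psi_M(mn) - \psi_M(m)\psi_M(n)$ of two elements $m,n \in A_M$ obtained from $\tilde{a},\tilde{b}$ by tracing over their $R$- and $S$-parts against $\psi_R$, $\psi_S$. The main obstacle is bounding this residual correlation, because a priori $m$ and $n$ can have overlapping supports in $M$ and need not factorize under the product state $\psi_M$. The plan for closing this gap is to iterate the partial-trace approximation across intermediate regions: for bounded $T\subseteq M$ sitting "near $S$" in the sense of $\mathcal{E}_{\widetilde{\alpha}}$ (so $(R,T)$ lies outside a possibly enlarged controlled set still belonging to $\mathcal{E}_{\widetilde{\alpha}}$ by its closure axioms), the bound $\widetilde{Q}_\alpha(R,T) < \delta$ permits a further partial-trace approximation of $\alpha(a)$ by an element supported away from $T$ as well. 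Covering a sufficient neighborhood of $S$ inside $M$ by such regions $T$ (and symmetrically for $R$) confines the effective $M$-supports of $\tilde{a},\tilde{b}$ to coarsely disjoint sets; the product state $\psi_M$ then factorizes the residual correlation up to an $O(\delta)$ error, giving $|\phi(ab)-\phi(a)\phi(b)| = O(\delta)$. Taking $\delta$ sufficiently small yields the required $\epsilon$-bound, completing the proof.
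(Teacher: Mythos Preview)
Your argument has a genuine gap at the ``residual correlation'' step, and the iteration you sketch to close it does not work as written.

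After the partial trace over $S$ you have $\tilde a\in A_{X\setminus S}$ and $\tilde b\in A_{X\setminus R}$, but their supports can still overlap arbitrarily inside $M=X\setminus(R\cup S)$, so the product state $\psi_M$ has no reason to factorize $\psi_M(mn)$. Your fix is to cover a ``sufficient neighborhood of $S$ inside $M$'' by bounded regions $T$ and partial-trace over each, using $\widetilde Q_\alpha(R,T)<\delta$. The problem is that the number of such regions $T$ needed is not uniformly bounded: the neighborhood you must clear out is an $\mathcal E_{\widetilde\alpha}$-controlled thickening of $S$, whose cardinality depends on the controlled set and has no a priori bound in terms of $|R|,|S|\le N$. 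Each partial trace contributes an error $\delta$, so the accumulated error is (number of $T$'s)$\cdot\delta$, and you cannot make this smaller than $\epsilon$ uniformly. The closure axioms of a coarse structure let you enlarge controlled sets, but they do not give you any cardinality control on $E[S]$.

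The paper's route avoids this entirely by localizing in the \emph{other} direction: rather than tracing $\alpha(a)$ out of regions near $S$, one shows directly that $\alpha(a)$ is $\delta$-close to an element of $A_{H[R]}$ for a single controlled set $H\in\mathcal E_{\widetilde\alpha}$ (this is the ``controlled spread'' property, which follows from higher-order commutator decay via the conditional-expectation estimate of \cite{MR3050163}). One conditional expectation, one error term. Then for $(x,y)$ outside a suitable controlled set the regions $H[R]$ and $H[S]$ are disjoint, and since $\psi$ is a product state the correlation vanishes exactly on the approximants. The step you are missing is precisely this: small commutators with \emph{all} operators localized outside a fixed controlled neighborhood of $R$ force $\alpha(a)$ to be close to something localized \emph{inside} that neighborhood, in a single stroke.
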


If we interpret the coarse structures $\mathcal{E}_{\widetilde{\phi}}$ and $\mathcal{E}_{\widetilde{\alpha}}$ as a sort of ``coarse complexity class", this result says any circuit used to prepare a state must be at least as coarsely complex as the state. For example, if we want to prepare a state whose universal (higher-order) coarse structure is equivalent to $\mathbbm{Z}^{3}$ on a quantum processor whose hardware is coarsely equivalent to $\mathbbm{Z}^{2}$, we can not use a circuit which is (higher-order) local with respect to the hardware geometry.

\subsection{Coarse dependence of order parameters} In the final section of the paper, we consider the observation that many entanglement-based order parameters of a state, such as topological order, depend explicitly on some kind of geometry to be defined. We have proposed (a family of) candidates for canonical coarse geometries that need not be compatible with spatial geometries directly, but we need to work out the extent that order parameters of interest only depend on coarse geometries in the first place. This is the topic of Section \ref{coarsedependence}.  

Here, we work with an explicit metric $d$ on the set of sites $X$, and want to show various order parameters are invariant under coarse equivalence. We make the standing assumption that the spaces are monogenic, which allows us to reduce coarse equivalence to quasi-equivalence (see Proposition \ref{coarsemono}). Under this assumption on $d$, $d$ is coarsely equivalent to another metric $d^{\prime}$ if and only if there are constants $L>0, C\ge 0$ such that for any $x,y\in X$,

$$L^{-1}d(x,y)-C\le d'(x,y)\le Ld(x,y)+C$$

The goal, then, is to see which definitions of various order parameters, depending a-priori on a metric $d$, are invariant under this equiavalence relation.

\medskip

\noindent \textbf{Superselection sectors and topological order}. The first order parameter we investigate in Section \ref{topord} is the W*-category of cone-localized superselection sectors of a state $\phi$ defined on a spin system in $\mathbbm{Z}^{n}$. In 2-dimensions, under some assumptions this category can be upgraded to a braided tensor category, which characterizes the states' \textit{topological order} \cite{MR2804555, MR4362722}. For $(X,d)$ having finite asymptotic dimension (see Definition \ref{asymptoticdim}), we give a coarse variation of the definition of superselction sectors in terms of coarse cones, and we show that this W*-category only depends on the coarse structure. We also show that in the case that the space is $\mathbbm{Z}^{n}$ with the Euclidean metric, this agrees with the usual definition of superselction sectors in terms of localization in honest cones (see Proposition \ref{topordprop}), hence this W*-category is a coarse order parameter.

\bigskip

\noindent \textbf{Decay of correlations}. Another important family of order parameters are related to the decay \textit{rate} of correlation functions. In \ref{decay rates}, we investigate to what extent the decay rate depends only on the coarse structure of the metric. Generic decay rates are not always coarsely defined. 

We first investigate the three most common classes of decay that are commonly considered in the condensed matter literature: 0 correlation length (correlations are $0$ when distance is large), finite correlation length (correlations have at most exponential decay), and infinite correlation length (correlations have super-exponential decay). 

\begin{prop}(c.f. Proposition \ref{coarseinvariancecorrelations}).
The definitions correlation length $0$, finite correlation length, and infinite correlation length for a state $\phi$ each depend only on the quasi-equivalence class of the metric. Thus if $\mathcal{E}_{\phi}$ is monogenic, then having $0$, finite, or infinite correlation length is a coarse order parameter.
\end{prop}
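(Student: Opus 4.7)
The plan is to first show that each of the three decay conditions depends only on the quasi-equivalence class of the metric, and then invoke the monogenic hypothesis together with Proposition \ref{coarsemono} to upgrade from quasi-equivalence invariance to coarse-structure invariance. Suppose $d$ and $d'$ are quasi-equivalent metrics on $X$, witnessed by constants $L > 0$ and $C \ge 0$ with
\begin{equation*}
L^{-1} d(x,y) - C \le d'(x,y) \le L d(x,y) + C.
\end{equation*}
Rearranging the left inequality gives $d(x,y) \ge L^{-1}(d'(x,y) - C)$, so any threshold or exponential-rate bound formulated in terms of $d$ transports to a bound of the same form in $d'$ after rescaling the rate by a factor of $L$ and absorbing a constant, and symmetrically in the opposite direction.

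The $0$ correlation length condition ``$C_\phi(x,y) = 0$ whenever $d(x,y) > R$'' transports directly to ``$C_\phi(x,y) = 0$ whenever $d'(x,y) > L(R+C)$''. A finite-correlation-length bound $C_\phi(x,y) \le A e^{-\lambda d(x,y)}$ transforms, via the same substitution, into
\begin{equation*}
C_\phi(x,y) \le A e^{\lambda C/L}\, e^{-(\lambda/L)\, d'(x,y)},
\end{equation*}
an exponential bound with positive rate $\lambda/L$. Infinite correlation length, interpreted as the complementary condition to finite correlation length within the paper's trichotomy, inherits invariance automatically; alternatively, if interpreted as a super-exponential bound $C_\phi(x,y) \le A e^{-f(d(x,y))}$ with $f(r)/r \to \infty$, precomposition with the affine map $r \mapsto L^{-1}(r - C)$ preserves super-linear growth, again yielding a bound of the same form. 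By symmetry of quasi-equivalence, the converse implications hold in each case, and all three conditions are invariants of the quasi-equivalence class of the metric.

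For the second part of the proposition, assume $\mathcal{E}_\phi$ is monogenic. By Proposition \ref{coarsemono}, under this hypothesis two metrics inducing the same coarse structure on $X$ are quasi-equivalent, so any invariant of the quasi-equivalence class of a generating metric is an invariant of $\mathcal{E}_\phi$ itself. Consequently each of the three correlation-length properties is a coarse order parameter. There is no substantial obstacle: the content reduces to elementary manipulation of the defining inequality for quasi-equivalence together with the monogenic-to-quasi-equivalence reduction. The only point that requires mild care is pinning down a precise formulation of ``infinite correlation length'' so that its invariance follows without a separate argument from that of the finite case; once this is fixed, the rest is bookkeeping.
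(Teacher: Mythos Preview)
Your treatment of the $0$ and finite correlation length cases is essentially identical to the paper's: transport the defining inequality via the quasi-isometry constants and observe that the exponential bound survives with rescaled rate $\lambda/L$ and adjusted prefactor. (Minor bookkeeping: the inequality $d(x,y)\ge L^{-1}(d'(x,y)-C)$ actually follows from rearranging the \emph{right} inequality $d'(x,y)\le Ld(x,y)+C$, not the left one as you state.)

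The infinite correlation length case, however, has a genuine gap. Neither of your two candidate interpretations matches the paper's Definition~\ref{defn:correlationlength}. The paper defines infinite length as: for \emph{every} exponential $g(t)=Ae^{-Bt}$ there is a controlled set $E$ such that $f(x,y)>g(d_\epsilon(x,y))$ for all $(x,y)\notin E$. This is a \emph{lower} bound on $f$ by every exponential, i.e.\ $f$ decays \emph{slower} than any exponential. It is strictly stronger than the negation of finite length (so your ``complementary condition'' shortcut does not apply), and it is the opposite of your ``super-exponential upper bound $C_\phi\le Ae^{-f(d)}$ with $f(r)/r\to\infty$,'' which would describe \emph{faster}-than-exponential decay. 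The correct argument is the mirror image of your finite-length computation with the inequality reversed: given target constants $A',B'>0$, set $A=A'e^{B'C}$ and $B=B'L^{-1}$; the hypothesis then supplies a controlled set outside of which $f(x,y)>Ae^{-Bd(x,y)}\ge A'e^{-B'd'(x,y)}$, using $d'(x,y)\ge L^{-1}d(x,y)-C$. This is exactly what the paper does in its proof of Proposition~\ref{coarseinvariancecorrelations}. Once you fix the definition, the argument is no harder than the finite case, but as written your proof does not handle it.
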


While the property of having finite correlation length is coarsely well defined, the actual value of the correlation length is not, which leads us to believe that generally the precise asymptotic ``coarse rate of decay" for correlations is not something that would make sense for correlation functions.

However, an important family of order parameters are \textit{critical exponents}, which help determine the scaling limit CFT that is believed to characetrize the universality class of critical states of spin systems (see, for example \cite{stanley1987introduction}). These numbers are typically defined in terms of degrees of ploynomials for parameters of the system with algebraic decay. In our setting, we (unsurprisingly) focus on algebraic decay of correlations. In section \ref{decay rates}, we develop a bit of theory that allows us to make sense of when a class of decay functions on $\mathbbm{R}^{+}$ are asymptotically invariant under quasi-equivalent changes of the metric, introducing the notion of \textit{coarsely stable}. As a consequence of this theory we are able to conclude the following.

\begin{prop}(See Proposition \ref{criticalexp-welldefined}). Algebraic decay of correlations and the critical exponent are independent of the quasi-equivalence class of the metric.
\end{prop}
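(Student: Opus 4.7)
The plan is to deduce the proposition as a direct application of the coarse stability machinery developed earlier in Section \ref{decay rates}. Quasi-equivalence of metrics $d$ and $d'$ gives constants $L>0$ and $C\geq 0$ satisfying the two-sided comparison displayed above, so the problem reduces to verifying that the class of algebraically decaying bounds is preserved under the induced substitution $r\mapsto L^{-1}(r-C)$.

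The core calculation I would carry out is elementary. Suppose $C_\phi(x,y)\leq K\, d(x,y)^{-\alpha}$ for all pairs with $d(x,y)\geq R_0$. From $d(x,y)\geq L^{-1}(d'(x,y)-C)$, once $d'(x,y)\geq 2C$ we obtain $d(x,y)\geq (2L)^{-1} d'(x,y)$, hence
\[
C_\phi(x,y)\leq K\bigl((2L)^{-1} d'(x,y)\bigr)^{-\alpha} = K\,(2L)^{\alpha}\, d'(x,y)^{-\alpha}
\]
for all sufficiently separated pairs in $d'$. Thus algebraic decay of $C_\phi$ with exponent $\alpha$ with respect to $d$ implies algebraic decay with the same exponent (and an adjusted prefactor) with respect to $d'$. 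Exchanging the roles of $d$ and $d'$ yields the reverse implication. Since the sets of admissible exponents for the two metrics coincide, their supremum, namely the critical exponent, is identical.

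To put this into the abstract framework of Section \ref{decay rates}, I would verify that the family $\mathcal{F}_\alpha=\{f:\mathbb{R}_+\to\mathbb{R}_+\mid f(r)=O(r^{-\alpha})\}$ is coarsely stable, i.e.\ closed under pre-composition by any affine substitution $r\mapsto L^{-1}(r-C)$ (restricted to $r\geq 2C$). The inequality $L^{-1}(r-C)\geq (2L)^{-1}r$ for $r\geq 2C$ is precisely what makes $\mathcal{F}_\alpha$ closed under this operation, and the coarse invariance of algebraic decay (and the associated exponent) then follows formally from the general invariance result of that section.

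The main obstacle here is not analytic but definitional: one must ensure that the phrasing of \emph{algebraic decay} and \emph{critical exponent} used in the statement matches the axiomatic notion of coarse stability exactly, taking care of the usual bookkeeping (``eventually'' vs.\ ``for all $r\geq R_0$'', treatment of the multiplicative prefactor, and handling of the supremum in the definition of $\alpha$). Once these are reconciled the conclusion is immediate, and invoking Proposition \ref{coarsemono} to pass from quasi-equivalence to coarse equivalence of monogenic coarse structures upgrades the statement to a coarse invariance result whenever $\mathcal{E}_\phi$ is monogenic.
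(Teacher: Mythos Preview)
Your argument is correct and the core computation---that $(Lt+C)^{-\alpha}$ is bounded above and below by constant multiples of $t^{-\alpha}$ for large $t$---is exactly what the paper establishes in Proposition~\ref{criticalexp-welldefined}. The paper packages this slightly differently: rather than applying the estimate directly to $C_\phi$, it first defines the abstract notion of a \emph{coarsely invariant} test function $g:\mathbb{R}_+\to\mathbb{R}_+$ (one for which $g(Lt+C)$ is trapped between constant multiples of $g(t)$), proves a general lemma that the coarse commensurability class of any $f:X\times X\to\mathbb{R}_+$ decaying coarsely uniformly as such a $g$ depends only on the quasi-isometry class of the metric, and then verifies that $t^{-\gamma}$ is coarsely invariant with distinct exponents giving non-commensurate functions. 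Your direct approach is more hands-on but covers the same ground.

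One genuine discrepancy worth flagging: you treat algebraic decay as a one-sided upper bound and define the critical exponent as the supremum of admissible $\alpha$, whereas the paper's notion of ``decays coarsely uniformly as $t^{-\gamma}$'' requires a \emph{two-sided} bound $C_1 d(x,y)^{-\gamma}\le C_\phi(x,y)\le C_2 d(x,y)^{-\gamma}$ for large distances, and the critical exponent is then that unique $\gamma$. Your argument adapts immediately to the lower bound as well (just run the same inequality in the other direction using $d'(x,y)\le Ld(x,y)+C$), so this is the definitional bookkeeping you already anticipated rather than a gap. But note that under the paper's convention the exponent need not exist at all (e.g.\ if decay is exponential, or oscillates between polynomial rates), whereas under yours the supremum always exists in $[0,\infty]$; the two agree precisely when the paper's two-sided condition holds.
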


To us, this result strongly suggests that what we think of as universal features of a system, typically defined as properties of a scaling limit QFT, in fact make sense and could be defined at the coarse level instead, which might be more accessible from a mathematically rigorous viewpoint. This line of reasoning suggests that it may be of interest to develop a notion of ``coarse equivalence" of quantum field theories (for example, the framework of algebraic quantum field theories), and instead of defining a universality class in terms of renormalization group fixed points, define a universality class as a coarse equivalence class of quantum field theories obtained in the thermodynamic limit (possibly with some kind of stabilization).

\subsection{Discussion.}To put our perspective in context, we note that we were inspired by  similar ideas for extracting geometric structures from states and/or dynamics that have already appeared in the literature to describe \textit{emergent spacetime} (for example, see \cite{qi2013exactholographicmappingemergent, MR3773571, MR3600376} and references therein). These works have somewhat different goals, mostly focusing on (small-scale) metric geometry with the aim of obtaining something that looks like gravity on their emergent spatial structure. This contrasts with the goals of our project, which focuses on large-scale geometry emerging only in the thermodynamic limit of a spin system, for the purpose of serving as a basis for universality classes. However, these ideas use essentially the same mathematical starting point of using quantum information as a basis for defining geometry. Our program is also related to \cite{Qi2019determininglocal}, where the authors similarly utilize the correlation functions with the aim of reconstructing Hamiltonians from states.

Another aspect of our analysis we wish to emphasize is that the choice of the ``sites" is taken as given for us. From the perspective of state realization on a quantum processor, this is a natural starting point since these machines have a built-in choice of sites and local observables. In the abstract setting, one could wonder if we could derive a site structure itself from just a state on an algebra. This seems unlikely, since any two states in the thermodynamic limit of a spin system are related by an automorphism, but it would be very interesting to see if some natural additional structure, such as a Hamiltonian, could give rise to a decomposition into sites  

A natural follow-up question is whether we can find interesting order parameters that are valued in algebraic structure related to coarse spaces. For example, coarse spaces have (co)-homology and K-theories, and it would be interesting to work out if states have invariants valued in these groups, as in the case of disordered band insulators \cite{EM}. Another interesting issue is the question of actually computing aspects of the emergent coarse structure in practice, either theoretically or from a large but finite data set (See section \ref{asymptoticdimgrowthrate}). This problem has many similarities to the problems addressed in topological data anlysis \cite{10.3389/frai.2021.667963} and in particular, persistance homology which we hope to explore in the future.

Finally, we remark on our mathematical formalism, which some readers may find excessively general. Although we have stated our results above for quantum spin systems, all our definitions and theorems are formulated in the language of abstract operator algebras, since proofs work at this level of generality. One possible advantage is that our framework applies equally well to classical systems with probability distribution states (i.e. ``macrostates") by assuming the algebras in question are commutative, as well as more complicated algebras arising from quantum field theory and topological order. In addition, we note that our framework for extracting coarse structures on the set $X$ is also very general and may be of independnet interest. Indeed, it only requires as input a function $f:X\times X\rightarrow \mathbbm{R}_{+}$, and thus many other objects relevant in physics (e.g. a Hamiltonian) or even in classical network theory could be used to cook up a function $f$ to produce a relevant coarse structure.

\subsection{Structure of the paper} The outline of the paper is as follows. Section 2 reviews the C*-algebraic formulation of states, observables, and dynamics, and introduces the framework of discrete nets of algebras used throughout the paper. Section 3 gives a mostly self-contained review of the important basic concepts in coarse geometry. Section 4 is the most substantial section. We introduce our formalism for extracting coarse geometries on a set $X$ from functions $f:X\times X\rightarrow \mathbbm{R}$, and discuss our main examples arising from correlation functions and dynamics. In section 5, we prove our stability results for both correlation and dynamical structures. In section 6, we investigate the interaction between dynamical and correlation coarse structure. Finally in Section 7 we discuss the dependence of superselection sectors and decay rates on the coarse structure.

\subsection{Acknowledgements}

We would like to thank Pieter Naaijkens, David Penneys and Dominic Williamson for interesting and useful discussions around these topics. Both authors were supported by NSF DMS- 2247202.

\section{C*-algebra formalism}

The mathematical setup for this paper uses the C*-algebra framework for describing physical systems. We briefly review this here, but refer the reader to \cite{MR887100, MR3643288} for all definitions and results we mention below concerning C*-algebras, as well as a more extensive overview of their role in mathematical physics.

The basic idea is that a mathematical description of any sort of dynamical system requires specifying three basic ingredients and how they interact: observables (the kind of measurements you can make, in principle), states (possible configurations of the system, or perhaps possible states of knowledge of the configurations), and dynamics (how the states and/or observables change in time). Different kinds of theories (or even different formulations of the same theory) may postulate different flavors of mathematical structures to instantiate these concepts, and there may even be different interpretations of the same mathematical structure. 

In the C*-algebraic approach to physical systems, the structure of the three ingredients is determined by a C*-algebra $A$. Recall a C*-algebra $A$ is a  complex Banach $*$-algebra, whose norm satisfies the C*-condition $||a^{*}a||=||a||^{2}$. The standard examples of (unital) C*-algebras are the algebra $B(H)$ of all bounded operators on the Hilbert space $H$, which for an $n$-dimensional Hilbert space is isomorphic to $M_{n}(\mathbbm{C})$. Commutative examples are $C(X)$, the algebra of complex-valued continuous functions on a compact Hausdorff space $X$, with supremum norm and pointwise multiplication and addition. In fact, by Gelfand duality this class comprises precisely the commutative unital C*-algebras. The commutative algebras can be used to formulate classical theories, while noncommutativity gives rise to quantum phenomena.

Algebraic quantum theories postulate the following:

\begin{enumerate}
\item 
\textbf{Observables} of the theory are the self-adjoint elements of a C*-algebra $A$.
\item 
\textbf{States} are bounded linear functionals $\phi:A\rightarrow \mathbbm{C}$ such that $||\phi||=1$ and $\phi(x^{*}x)\ge 0$.
\item 
\textbf{Dynamics} are $*$-automorphisms\footnote{ or more generally, unital completely positive (ucp) maps for open dynamics} $\alpha\in \text{Aut}(A)$,
\end{enumerate}

The set of states $S(A)$ forms a convex space, and we are often interested in \textit{pure} states, which are the extreme points in $S(A)$. These are sometimes considered the ``true" configurations of the system, since they contain maximal information in a certain sense, while a general state contains some (classical) uncertainty. Given a state $\phi\in S(A)$ and an observable $a\in A$, the value $\phi(a)$ is interpreted as the expected value of the outcome of measuring the observable $a$ when the system is in the state $\phi$.  For an abelian C*-algebra $A\cong C(X)$, the pure states correspond to evaluations at a point $x\in X$, while arbitrary states correspond to Radon measures on $X$. Applied to the C*-algebra of compact operators $K(H)$ on the Hilbert space $H$, the state space can be identified with density matrices $\rho$ via the correspondence $\rho\leftrightarrow \text{Tr}(\cdot \rho)$. This recovers the standard quantum mechanical picture of states as density matrices, where states are given by tracing against density matrices.

A single automorphism $\alpha\in Aut(A)$ is interpreted as a discrete time evolution. If a system is in state $\phi$ at time $0$, then after one time step, it is in state $\phi\circ \alpha$. A continuous time evolution is decribed by a homomorphism $\mathbbm{R}\rightarrow \text{Aut}(A)$, $t\mapsto \alpha_{t}$. In this case, if a system is initially in state $\phi$ at time $t=0$, then at time $t$ the state is $\phi\circ \alpha_{t}$. If the homomorphism is continuous in the point norm topology, $\{\alpha_{t}\}$ has an infinitesimal generator $\delta$ which is a (generally unbounded) derivation on the C*-algebra $A$. This can be identified as the Hamiltonian of the system, and the equation expressing that time evolution is generated by $\delta$ is a generalization of the Schrodinger equation (for further details, see \cite{MR887100}).

More generally, we may consider \textit{open dynamics}, which are given instead by \textit{quantum channels}. These are expressed in the Heisenberg picture as unital, completely positive maps $\alpha: A\rightarrow A$, and the action on states is $\phi \mapsto \phi\circ \alpha$ (see \cite{MR1976867}). Clearly, automorphisms (which we will call ``unitary dynamics") give examples of quantum channels. However, channels actually simultaneously generalize states and dynamics, since any state also defines a channel, via the map $a\mapsto \phi(a)1_{A}$. In this paper, we will be thinking of a unital completely positive maps mostly as generalized dynamics.

\bigskip

\subsection{Discrete nets of C*-algebras}\label{sec:discretenets}. The fundamental example of a quantum system where the algebraic framework can be very useful is the thermodynamic limit of a \textit{spin system} \cite{MR1441540}. Here, we start with a (usually countably infinite) set of sites $X$. At each site $x\in X$ we have a unital C*-algebra $A_{x}$ of localized observables. In this paper, we will assume this algebra is nuclear, so that we can ignore the technicalities arises from the different types of tensor products. The examples we have in mind are spin systems, so that $A_{x}\cong M_{n}(\mathbbm{C}))$ or $\mathbbm{C}^{n}$ for some $n$. However, all the math works virtually unchanged more generally.

For any finite subset $F\subseteq X$, we define the  C*-algebra

$$A_{F}:=\otimes_{x\in F} M_{d}(\mathbbm{C}).$$

\noindent This means we take the tensor product algebra whose tensor factors are explicitly indexed by the elements of $F$. For $F\subseteq G$, we have a natural inclusion $A_{F}\hookrightarrow A_{G}$, given by $a\mapsto 1^{\otimes G-F}\otimes a$. We define the C*-algebra

$$A=\text{colim}_{F\in \mathcal{F}(X)} A_{F},$$

\noindent where $\mathcal{F}(X)$ denotes the partially ordered set of finite subsets, and the colimit is taken in the category of C*-algebras. For the less categorically inclined, if we identify each $A_{F}$ with its image in $A_{G}$ for $F\subseteq G$, then this colimit is just the completion of the union algebra (for more technical details, see \cite{MR1783408}). Identifying each $A_{F}$ with its image in $A$ equips $A$ with a distinguished family of subalgebras indexed by finite regions of $X$. $A$, taken together with this family of subalgebras, provides a mathematically precise description of the local observables of the system. For two disjoint regions $F\cap G=\varnothing$, $[A_{F},A_{G}]=0$. This results in a connection between the algebraic relations of the observables and their ``location in space".

While spin systems are the main example we consider, most of the definitions we provide make sense (and may be of interest) in greater generality, in particular for nets of symmetric operators in an ordinary spin system with a global gauge symmetry. The following definition, which abstracts some of the main features of spin systems we described above, has been used recently in the study of symmetric quantum cellular automata and in the theory of local topological order \cite{Jones2023DHRBO}, \cite{JL24, jones2023localtopologicalorderboundary, tomba2023boundaryalgebraskitaevquantum}. It is essentially a discrete metric version of an algebraic quantum field theory (AQFT) in the sense of Haag and Kastler \cite{MR165864, MR1405610}, but without a distinguished Hilbert space representation or symmetry group.

\begin{defn}
A \textit{discrete net of C*-algebras} consists of a countable set $X$, a unital C*-algebra $A$ called the \textit{quasi-local algebra}, and for each finite subset $F\subseteq X$, a unital subalgebra $A_{F}\subseteq A$ (called a local subalgebra) such that

\begin{enumerate}
\item
If $F\subseteq G$, then $A_{F}\subseteq A_{G}$.
\item
IF $F\cap G=\varnothing$, then $[A_{F},A_{G}]=\varnothing$.
\item 
$\bigcup_{F\in\mathcal{F}(X)} A_{F}$ is norm dense in $A$.
\end{enumerate}
\end{defn}

Given \textit{any} subset $F\subseteq X$ (not neccessarily finite), we can define $A_{F}$ to be the C*-algebra generated by the collection of $A_{G}$ where $G\subseteq F$ is finite, allowing us to make sense of the \textit{quasi-local observables} localized in any arbitrary region $F$. Thus we have an assignment of a subalgebra of $A$ to arbitrary subsets $\mathcal{P}(X)$ of $X$. 

Secondly, in practice, $X$ typically has the structure of a metric space (e.g. the metric space $\mathbbm{Z}^{d}\subseteq \mathbbm{R}^{d}$), and one usually asks their nets to satisfy additional assumptions that tie the algebra to the geometry (for example, something like algebraic Haag duality in \cite{Jones2023DHRBO}). However, in this paper, the whole point is that there is not a pre-existing geometry, and the purpose is to \textit{derive} a geometric structure from correlations or dynamics of the system.

Thirdly, there is nothing fundamental about the unital assumption of $A$ or the local algebras $A_{F}$, and if we prefer to use non-unital C*-algebras, we should likely require that the inclusions are non-degenerate in some sense. However, the examples we have in mind (spin systems and related examples) all have the property that the local algebras $A_{F}$ are finite-dimensional, which are automatically unital. For locally infinite dimensional nets, it may be more useful to require the additional hypothesis that the local algebras are in fact von Neumann algebras (which is automatic in the locally finite dimensional case), though this makes no difference to this discussion for our purposes.

If we are thinking of a discrete net of C*-algebras as generalizing a spin system, the major difference (aside from the type of the local C*-algebra) is witnessed by \textit{algebraic entanglement}. For a general discrete net $A$, for $F\cap G=\varnothing$, our inclusions gives a natural map $A_{F}\otimes A_{G}\mapsto A_{F\cup G}$. In an ordinary spin system, this is an isomorphism of algebras. For a general net, this may fail to be injective or surjective. In other words, products of operators localized in disjoint regions may be $0$, and local operators may not factorize as sums of products of operators localized in $F$ and $G$  respectively. Physically, this latter condition implies that local observables are not completely determined by the observables localized at points. See Section \ref{OtherCoarse} for further discussion. Below we give some naturally occurring examples of discrete nets beyond ordinary spin systems.

\bigskip

\begin{enumerate}
\item
\textbf{Symmetric spin systems}. Let $A:=\otimes_{x\in X} B$ be a generalized spin system, and suppose we have an action of the group $G$ on the local C*-algebra $B$. Then the diagonal action on $A$ is called an \textit{on-site global symmetry}. We can consider the C*-algebra $A^{G}$ of $G$ invariant operators, and the local operators $A^{G}_{F}$ is just $A^{G}\cap A_{F}$. These systems generically exhibit algebraic entanglement.
\item 
\textbf{Categorical spin systems}. Associated to higher categorical structures (e.g. unitary fusion categories, unitary braided fusion categories, etc), one can construct nets of finite dimensional C*-algebras. The most studied of these are \textit{fusion spin chains} \cite{Jones2023DHRBO}. These arise either as the local invariant operators under a (weak) Hopf or categorical symmetry, but also as the boundary algebras of 2+1D topological orders \cite{jones2023localtopologicalorderboundary}.
\item
\textbf{AQFTs}. A (Euclidean) algebraic quantum field theory is a net of C*-algebras $B$ over a (complete) Riemannian manifold $M$, assigning C*-algebras to (bounded) open sets (c.f. \cite{doi:10.1142/S0129055X99000362}). Take a geodesic triangulation $S$ of $M$ such that the vertices are at at least $\delta$ distance apart for some fixed $\delta>0$. If we let $X$ be the set of vertices of the triangulation, we can define a discrete net of C*-algebras $A$ over $X$ by defining, for any finite subset $F\subseteq X$, $A_{F}:=B_{N_{\frac{\delta}{2}}(S(F))}$, where$S(F)$ denotes the subcomplex spanned by the vertices in $F$. This is a ``discretization" of the original theory.
\end{enumerate}

\section{Coarse geometry}\label{coarsestructures}

As explained in the introduction, coarse structures on a set are designed to provide an abstract characterization of large scale geometric structure. They were originally developed by John Roe, emerging from applications in geometric group theory and index theory of Dirac operators on manifolds. A coarse structure is very a flexible notion that captures our intuition for large scale equivalence, ignoring the small scale structure of the geometry. For example, $\mathbbm{R}^{n}$ with the usual Euclidean metric and $\mathbbm{Z}^{n}$ with the taxi-cab metric are coarsely equivalent, despite not even having the same cardinality as sets! For a comprehensive reference on coarse geometry, we refer to the textbook of Roe \cite{MR2007488}. In this section, we brielfy recall some of the basic terminology and constructions we will need.

\begin{defn}\label{Coarse structure} Let $X$ be a set. A \textit{coarse structure} on $X$ is a collection $\mathcal{E}$ of subsets of $X\times X$ satisfying the following conditions:

\begin{enumerate}
    \item 
    $\mathcal{E}$ contains the diagonal, and is closed under taking subsets and finite unions.
    \item
    If $E\in\mathcal{E}$, then $E^{-1}=\{(x,y)\ :\ (y,x)\in E\}\in \mathcal{E}$
    \item
    If $E,F\in \mathcal{E}$, then $E\circ F:=\{(x,z)\ : (x,y)\in E, \text{and}\ (y,z)\in F\ \text{for some y}\}\in \mathcal{E}$.
\end{enumerate}

\end{defn}

The sets $E\in \mathcal{E}$ are called \textit{controlled}. A set $B\subseteq X$ is called \textit{bounded} if $B\times B\in \mathcal{E}$. The motivation for this terminology arises from the following motivating example, which was discussed in the introduction.

\begin{ex}(Bounded coarse structure of a metric space). If $X$ is a set, recall that a metric on $X$ is a function $d:X\times X\rightarrow \mathbbm{R}_{+}\cup \{\infty\}$ which is symmetric, definite, and satisfies the triangle inequality. Define a set $E\subseteq X\times X$ to be controlled if $$\displaystyle \sup_{(x,y)\in E} d(x,y)<\infty.$$ The resulting coarse structure, which we denote $\mathcal{E}^{d}$ is called the \textit{bounded coarse structure} or also simply the \textit{metric coarse structure} of $(X,d)$.
\end{ex}

The idea is that the bounded coarse structure of a metric space captures its large-scale behavior, by telling us which collection of pairs of points are a uniformly bounded distance apart as they go off to infinity. This is in some sense an inversion of the uniform structure of a metric space, which is similarly defined but capture the small-scale structure of a metric.

We call a coarse space $(X,\mathcal{E})$ \textit{metrizable} if there is some metric $d$ on $X$ whose bounded coarse structure is $\mathcal{E}$. By \cite{MR2007488}, a coarse structure is metrizable if and only if it is \textit{countably generated}. All the coarse structures that occur naturally in this paper will be metrizable.

For any set $B\subseteq X$ and any controlled set $E\in \mathcal{E}$, the $E$-ball $E[B]:=\{x\in E\ : (y,x)\in E\ \text{for some y}\in B\}$. A (discrete) coarse structure is called \textit{proper} if the $E$-ball of any bounded set is finite. Note that this implies all bounded sets are finite. This has a generalization to coarse structures on spaces that already have an underlying topology, where we require E-balls of bounded sets to be compact, but here we will only be using the ``discrete" version described above.

A coarse space $(X,\mathcal{E})$ is \textit{coarsely connected} if it contains the collection of finite subsets $\mathcal{F}(X\times X)$. Equivalently, a coarse space is coarsely connected if every pair of points is contained in a controlled set. Thus we see that for a metric space $(X,d)$, $\mathcal{E}^{d}$ is coarsely connected if and only if the metric takes finite values for any pair of points.

A \textit{coarse function} between coarse spaces $(X,\mathcal{E})$ and $(Y, \mathcal{F})$ is a function $\rho:X\rightarrow Y$ such that the inverse image of bounded sets are bounded, and $(\rho\times \rho)(\mathcal{E})\subseteq \mathcal{F}$ (the image of controlled sets are controlled). Two coarse functions $\rho, \gamma:(X,\mathcal{E})\rightarrow (Y, \mathcal{F})$ are \textit{close} if $\{(\rho(x),\gamma(x))\ :\ x\in X\}\in \mathcal{F}$. Closeness is an equivalence relation on the set of coarse functions. The \textit{coarse category} consists of coarse spaces and coarse maps up to the equivalence relation of closeness, with usual composition of functions. We say two coarse spaces are \textit{coarsely equivalent} if they are isomorphic in the coarse category. 

\begin{ex}\label{Cayley}
Let $G$ be a finitely generated group. For any finite generating set $S$, we can define the Cayley graph whose vertices are the elements of $G$, with an edge between $h$ and $g$ is there is an $s\in S$ such that $sh=g$. Utilizing the path metric makes $G$ into a discrete metric space. Any two choices of finite generating set yield equivalent coarse structure on $G$. These are the coarse spaces that play such a prominent role in geometric group theory.
\end{ex}

\begin{ex}An illuminating example demonstrating change in cardinality is the following: let $(X, d)$ be a metric space, let $R>0$, and let $Y\subseteq X$ such that for every $x\in X$ there exists a $y\in Y$ with $d(x,y)<R$. Then $Y$ is coarsely equivalent to $X$. Thus any separable metric space (e.g. $\mathbbm{R}^{n}$) is coarsely equivalent to a discrete metric space.
\end{ex}

In this paper, we will mostly be studying different coarse structures on the same set and whether or not they are contained in each other, which is slightly different from the usual perspective in coarse geometry. Indeed, our function comparing coarse structures is always the identity, and it may not be coarse in situations that we might expect. Indeed, if $\mathcal{E}$ and $\mathcal{F}$ are two coarse structures on $X$ with $\mathcal{E}\subseteq \mathcal{F}$, it is not necessarily the case that the identity is a coarse function from $(X,\mathcal{E})\rightarrow (X,\mathcal{F})$, since $\mathcal{F}$-bounded sets are not neccessarily $\mathcal{E}$-bounded. 

If $\mathcal{E}$ and $\mathcal{F}$ are coarse structures on $X$, and $\mathcal{E}\subseteq \mathcal{F}$ we say that $\mathcal{E}$ is \textit{smaller} than $\mathcal{F}$ or that $\mathcal{F}$ is \textit{larger} than $\mathcal{E}$. There is a smallest coarse structure, the \textit{discrete} coarse structure, which consists only of all subsets of the diagonal. There is a largest coarse structure, the \textit{indiscrete} coarse structure, which consists of all subsets of $X\times X$. The discrete coarse structure is the bounded coarse structure of a metric which gives an infinite value between any pair of distinct points, while the indiscrete coarse structure is given by any metric which is globally bounded. The coarse structure $\mathcal{F}(X\times X)$ of finite subsets is the smallest connected coarse structure, and we will frequently call this the trivial connected coarse structure.

We have the following easy proposition, which gives us a useful tool for comparing coarse structures in the metrizable case.

\begin{prop}\label{comparison}
Let $d$ and $d^{\prime}$ be metrics on $X$. Then $\mathcal{E}^{d}\subseteq \mathcal{E}^{d^{\prime}}$ if and only if there is some non-decreasing function $\rho:\mathbbm{R}_{+}\rightarrow \mathbbm{R}_{+}$ with 
$$d^{\prime}(x,y)\le \rho(d(x,y)),$$

\noindent where we interpret $\rho(\infty)=\infty$.
\end{prop}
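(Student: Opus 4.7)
The plan is to handle each direction separately; the $(\Leftarrow)$ direction is a one-line monotonicity argument, while the $(\Rightarrow)$ direction amounts to choosing $\rho$ to be the optimal such bound.

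For $(\Leftarrow)$: suppose a non-decreasing $\rho$ with $d'(x,y) \le \rho(d(x,y))$ exists, and let $E \in \mathcal{E}^{d}$. Setting $M := \sup_{(x,y) \in E} d(x,y) < \infty$, monotonicity yields $\sup_{(x,y) \in E} d'(x,y) \le \rho(M) < \infty$, so $E \in \mathcal{E}^{d'}$.

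For $(\Rightarrow)$, I would define
$$\rho(t) := \sup\{d'(x,y) : x,y \in X,\ d(x,y) \le t\}$$
for each $t \in \mathbbm{R}_{+}$. The key step is verifying that $\rho(t) < \infty$ for every finite $t$. To this end, introduce the ``diagonal $t$-neighborhood''
$$E_t := \{(x,y) \in X \times X : d(x,y) \le t\},$$
which lies in $\mathcal{E}^{d}$ by construction. The hypothesis $\mathcal{E}^{d} \subseteq \mathcal{E}^{d'}$ then forces $E_t \in \mathcal{E}^{d'}$, which by definition of the bounded coarse structure means exactly $\sup_{(x,y)\in E_t} d'(x,y) < \infty$; and this supremum is precisely $\rho(t)$.

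The remaining properties fall out immediately: $\rho$ is non-decreasing because the defining supremum is taken over a family of sets that grows with $t$, and for any $x,y$ with $d(x,y) = t < \infty$ the pair $(x,y)$ lies in $E_t$, giving $d'(x,y) \le \rho(t) = \rho(d(x,y))$. When $d(x,y) = \infty$ the asserted inequality holds vacuously under the convention $\rho(\infty) = \infty$. I do not foresee any genuine obstacle: the entire substantive content is that the family $\{E_t\}_{t \ge 0}$ is cofinal in $\mathcal{E}^{d}$, so uniformly bounding the $d'$-diameter on each $E_t$ is the same as coarse-structure containment.
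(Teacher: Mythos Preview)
Your proof is correct and follows essentially the same approach as the paper: both directions match, with the only cosmetic difference being that you define $\rho(t)$ as the optimal supremum (which is automatically non-decreasing), whereas the paper picks an arbitrary bound $S_R$ for each $R$ and then adjusts the choices to make $\rho(R):=S_R$ increasing.
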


\begin{proof}
Suppoe $\mathcal{E}^{d}\subseteq \mathcal{E}^{d^{\prime}}$. Then for every $R\ge 0$, $\{(x,y)\in X\times X\ :\ d(x,y)\le R\}\in \mathcal{E}^{d^{\prime}}$, we can choose some $S_{R}$ such that $d^{\prime}(x,y)\le S_{R}$. Since we are free to choose arbitrarily large $S_{R}$, we can choose $S_{R}$ so that the function $\rho(R):=S_{R}$ is increasing. This satisfies the criterion. Conversely, given such a $\rho$, let $E\in \mathcal{E}^{d}$, and choose an $R\ge 0$ such that $d(x,y)\le R$ for all $(x,y)\in E$. Then $d^{\prime}(x,y)\le \rho(d(x,y))\le \rho(R)$, thus $E\in \mathcal{E}^{d^{\prime}}$.
\end{proof}

The intersection of any family of coarse structures is again coarse. Thus if $\mathcal{S}$ is any family of subsets of $X\times X$, there is a smallest coarse structure containing $\mathcal{S}$, which we call the coarse structure generated by $\mathcal{S}$ and denote by $\langle \mathcal{S}\rangle$. This coarse structure is obtained as the intersection of all coarse structures containing it (this is non-empty since every family $\mathcal{S}$ is contained in the indiscrete coarse structure).  We have the following observation:

\begin{prop}\label{genra}
If $\{\mathcal{E}_{\epsilon}\}_{\epsilon\in I}$ is a family of coarse structures on the set $X$, then

$$\langle \mathcal{E}_{\epsilon}\rangle_{\epsilon \in I}=\{E\subseteq X\times X\ :\ E\subseteq E_{1}\circ \dots E_{n}\ \text{where}\ E_{i}\in \mathcal{E}_{\epsilon_{i}},\ \epsilon_{i}\in I\}$$
\end{prop}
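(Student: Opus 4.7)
Denote the right-hand side by $\widetilde{\mathcal{E}}$. The plan is the standard generation-from-below argument: first show that $\widetilde{\mathcal{E}}$ is itself a coarse structure containing every $\mathcal{E}_\epsilon$, which forces $\langle \mathcal{E}_\epsilon \rangle_{\epsilon \in I} \subseteq \widetilde{\mathcal{E}}$; then show that any coarse structure containing every $\mathcal{E}_\epsilon$ must contain $\widetilde{\mathcal{E}}$, giving the reverse inclusion.

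Most of the axioms of Definition~\ref{Coarse structure} for $\widetilde{\mathcal{E}}$ follow directly from the defining expression. Closure under subsets is built in, and the diagonal $\Delta$ lies in every $\mathcal{E}_\epsilon$ and hence in $\widetilde{\mathcal{E}}$ as a length-one composition. For inverses, if $E \subseteq E_1 \circ \cdots \circ E_n$ with $E_i \in \mathcal{E}_{\epsilon_i}$, then $E^{-1} \subseteq E_n^{-1} \circ \cdots \circ E_1^{-1}$ and each $E_i^{-1} \in \mathcal{E}_{\epsilon_i}$. Closure under composition follows by concatenating two witnessing expressions and invoking associativity of $\circ$. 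The reverse inclusion $\widetilde{\mathcal{E}} \subseteq \langle \mathcal{E}_\epsilon \rangle_{\epsilon \in I}$ is equally direct: any coarse structure $\mathcal{G}$ containing all the $\mathcal{E}_\epsilon$ must contain every composition $E_1 \circ \cdots \circ E_n$ with $E_i \in \mathcal{E}_{\epsilon_i}$ by induction on axiom (3), and hence every subset thereof by axiom (1).

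The main obstacle is closure of $\widetilde{\mathcal{E}}$ under finite unions: if $E \subseteq E_1 \circ \cdots \circ E_n$ with $E_i \in \mathcal{E}_{\epsilon_i}$ and $F \subseteq F_1 \circ \cdots \circ F_m$ with $F_j \in \mathcal{E}_{\delta_j}$, the naive attempt to merge factor-by-factor fails because $E_i \cup F_j$ need not lie in any single $\mathcal{E}_\epsilon$. The trick I plan to use is the set-theoretic inclusion
\[ P \cup Q \;\subseteq\; (P \cup \Delta) \circ (Q \cup \Delta), \]
valid for any $P, Q \subseteq X \times X$ (insert $\Delta$ on the right of a $P$-pair and on the left of a $Q$-pair). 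Applying this with $P = E_1 \circ \cdots \circ E_n$ and $Q = F_1 \circ \cdots \circ F_m$, and observing that $P \cup \Delta \subseteq (E_1 \cup \Delta) \circ \cdots \circ (E_n \cup \Delta)$ with each factor in $\mathcal{E}_{\epsilon_i}$ (since each $\mathcal{E}_\epsilon$ contains $\Delta$ and is closed under finite unions), and similarly for $Q \cup \Delta$, associativity of $\circ$ presents $E \cup F$ as a subset of a single composition of elements each drawn from some $\mathcal{E}_\epsilon$. Hence $E \cup F \in \widetilde{\mathcal{E}}$, completing the verification.
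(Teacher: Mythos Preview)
Your proof is correct and follows exactly the paper's strategy: show the right-hand side is a coarse structure containing each $\mathcal{E}_\epsilon$, then show it is contained in any such coarse structure. The paper's proof simply asserts that the right-hand side is ``obviously'' a coarse structure, whereas you have carefully verified the one genuinely non-trivial axiom (closure under finite unions) via the $(P\cup\Delta)\circ(Q\cup\Delta)$ trick.
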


\begin{proof}
    Obviously the collection $\mathcal{S}:=\{E\subseteq X\times X\ :\ E\subseteq E_{1}\circ \dots E_{n}\ \text{where}\ E_{i}\in \mathcal{E}_{\epsilon_{i}},\ \epsilon_{i}\in I\}$ is a coarse structure containing ${\mathcal{E}_{\epsilon}}$, hence contains $\langle \mathcal{E}_{\epsilon}\rangle_{\epsilon \in I}$. Conversely any coarse structure containing all $\mathcal{E}_{\epsilon}$ will neccessarily contain $\mathcal{S}$, hence $\mathcal{S}\subseteq \langle \mathcal{E}_{\epsilon}\rangle_{\epsilon \in I}$.
\end{proof}

\begin{ex}\label{connected quotient} If $(X,\mathcal{E})$ is not coarsely connected, we can always consider its connected quotient $\mathcal{E}_{\text{con}}$, obtained by staking the coarse structure generated by $\mathcal{E}$ and all finite subsets, or in other words $\mathcal{E}_{con}:=\langle \mathcal{E}\cup \mathcal{F}(X\times X)\rangle$.
\end{ex}

\begin{ex}\label{monogenic}
A coarse structure is called \textit{monogenic} if it is generated by a single subset $E\subseteq X\times X$. If $G$ is a graph, then the bounded coarse structure on the vertices $V(G)$ is monogenic with generator $E=\{(x,y)\ : \{x,y\}\in E(G)\}$. In fact, this graph presentation is characteristic for monogenic coarse structures. If $\mathcal{E}=\langle F\rangle$ is a monogenic coarse structure on $X$, consider the graph $G_{F}$ whose vertices are $X$ and $\{x,y\}\in E(G_{F})$ if either $(x,y)\in F$ or $(y,x)\in F$. Then it is easy to see that $\mathcal{E}$ is the bounded coarse structure induced from the path metric on $G_{F}$, which we denote by $d_{F}$. 

In particular, monogenic coarse structures are precisely metrizable by path metrics on some graphs, and from this we can conclude that they are restrictions of \textit{geodesic} metrics. More generally, one can show that a coarse space is coarsely geodesic if and only if it is monogenic \cite[2.57]{MR2007488}.

Note that in the monogenic case, the coarse structure is coarsely connected if and only if the graph $G_{F}$ is connected, and is proper if and only if $G_{F}$ is locally finite.
\end{ex}

\begin{prop}\label{coarsemono}(c.f. \cite[1.4.14]{MR4646531}) Suppose $(X,\mathcal{E})$ is monogenic with generator $F$, and let $d_{F}$ be the metric as in Example \ref{monogenic}. Then for any other metric $d^{\prime}$, $\mathcal{E}\subseteq \mathcal{E}^{d^{\prime}}$ if and only if there is some $L>0$ such that $$d^{\prime}(x,y)\le Ld_{F}(x,y).$$

In particular, if $G, F$ are both generators for $\mathcal{E}$, then there is a positive constant $K$ such that

$$\frac{1}{K} d_{G}(x,y)\le d_{F}(x,y)\le K d_{G}(x,y).$$
\end{prop}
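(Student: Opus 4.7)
The plan is to exploit the geodesic nature of $d_F$: by Example \ref{monogenic}, $\mathcal{E} = \langle F \rangle = \mathcal{E}^{d_F}$ and $d_F$ is the path metric on the graph $G_F$ whose edges come from $F \cup F^{-1}$. Since $\mathcal{E}$ is generated by a single set, the inclusion $\mathcal{E} \subseteq \mathcal{E}^{d'}$ is equivalent to the single statement that $F$ (and hence $F \cup F^{-1}$, using the closure axioms of a coarse structure) is controlled for $d'$, i.e., $L := \sup_{(x,y) \in F \cup F^{-1}} d'(x,y) < \infty$. This observation is what lets the global metric comparison be deduced from a bound along a single ``edge step.''

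For the forward direction, assume $\mathcal{E} \subseteq \mathcal{E}^{d'}$ and extract $L$ as above. Given $x, y \in X$ with $d_F(x,y) = n < \infty$, pick a geodesic $x = x_0, x_1, \ldots, x_n = y$ in $G_F$; each consecutive pair $(x_i, x_{i+1})$ lies in $F \cup F^{-1}$, so $d'(x_i, x_{i+1}) \le L$. The triangle inequality for $d'$ then yields $d'(x, y) \le \sum_{i=0}^{n-1} d'(x_i, x_{i+1}) \le L n = L \, d_F(x,y)$. The case $d_F(x,y) = \infty$ is vacuous under the convention $L \cdot \infty = \infty$, and $d_F(x,y) = 0$ forces $x = y$ and hence both sides vanish.

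The reverse direction is immediate: a pointwise bound $d'(x,y) \le L \, d_F(x,y)$ implies that every $d_F$-bounded set is $d'$-bounded, so $\mathcal{E}^{d_F} \subseteq \mathcal{E}^{d'}$ (alternatively, this is a special case of Proposition \ref{comparison} with $\rho(t) = Lt$), and $\mathcal{E} = \mathcal{E}^{d_F}$ finishes the argument. For the ``in particular'' clause, suppose $F$ and $G$ both generate $\mathcal{E}$. Applying the main equivalence with $d' = d_G$ gives $L > 0$ with $d_G \le L \, d_F$; swapping the roles of $F$ and $G$ yields $L' > 0$ with $d_F \le L' \, d_G$. Rearranging the first bound to $\tfrac{1}{L} d_G \le d_F$ and combining with $d_F \le L' d_G$, the choice $K := \max(L, L')$ gives $\tfrac{1}{K} d_G(x,y) \le d_F(x,y) \le K \, d_G(x,y)$.

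There is no serious obstacle here: the monogenic hypothesis reduces the entire problem to analyzing the single generator $F$, and the path-metric characterization of $d_F$ then converts a global inequality of metrics into the bound at a single edge. The only minor bookkeeping is in handling the symmetry (working with $F \cup F^{-1}$ rather than $F$ alone) and the convention regarding infinite distances between different coarse components.
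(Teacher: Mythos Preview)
Your proof is correct and follows essentially the same approach as the paper: both take a geodesic path in $G_F$ of length $n$ and use the triangle inequality to bound $d'(x,y)$ by $n$ times a per-edge constant (you extract this constant directly as $\sup_{(x,y)\in F\cup F^{-1}} d'(x,y)$, while the paper routes through Proposition~\ref{comparison} and uses $\rho(1)$). The paper does not spell out the ``in particular'' clause, which you handle correctly by applying the main statement symmetrically and taking $K=\max(L,L')$.
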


\begin{proof}
Clearly $$d^{\prime}(x,y)\le Ld_{F}(x,y)$$ implies $\mathcal{E}\subseteq \mathcal{E}^{d^{\prime}}$ by Proposition \ref{comparison}. In the other direction, by Proposition \ref{comparison} there exists some non-decreasing $\rho:\mathbbm{R}_{+}\rightarrow \mathbbm{R}_{+}$ such that $d^{\prime}(x,y)\le \rho(d_{F}(x,y))$. Let $(x=x_{0}, x_{1}, \dots,\ x_{n}=y)$ be a path in $G_{F}$ so that $d_{F}(x,y)=n$. Then 

$$d^{\prime}(x,y)\le \sum_{i} d^{\prime}(x_{i},x_{i+1})\le \sum_{i} \rho(d_{F}(x_{i},x_{i+1}))=n\rho(1)=\rho(1)d_{F}(x,y)$$
\end{proof}

This says that for monogenic coarse structures, there is in fact a canonical \textit{Lipschitz equivalence class} of metric, which is actually stronger than coarse equivalence. However, in this paper we are more interested in the coarse equivalence class of the metric, since ultimately it is the underlying coarse structure we will be studying.

It turns out that the metric $d_{F}$ arising from a monogenic coarse structure is an example of a \textit{quasi-geodesic metric space}, and for these, coarse equivalence reduces to the easier to work with notion of quasi-isometry. To formalize this we have the following definition.

\begin{defn}(\cite[1.4.10]{MR4646531})
    A metric space $(X,d)$ is quasi-geodesic if there exists an $L, C>0$ such that for any $x,y\in X$ with $d(x,y)=t<\infty$, there exists a function $\gamma: [0,t]\rightarrow X$ such that $\gamma(0)=x$, $\gamma(t)=y$, and for all $0\le s,s^{\prime}\le t$ $$L^{-1}d(\gamma(s),\gamma(s^{\prime}))-C\le |s^{\prime}-s|\le Ld(\gamma(s),\gamma(s^{\prime}))+C $$
\end{defn}

Clearly if $\mathcal{E}$ is generated by $F$, then $d_{F}$ is quasi-geodesic, so monogenic coarse structures can always be realized by quasi-geodesic metrics. The following proposition makes this useful.

\begin{prop}(\cite[1.4.14]{MR4646531}) If $(X,d)$ is a quasi-geodesic metric space (and in particular if $\mathcal{E}^{d}$ is mongenic) and $d^{\prime}$ is any other coarsely equivalent metric, then there are $L>0, C\ge 0$ such that

$$L^{-1}d(x,y)-C\le d^{\prime}(x,y)\le Ld(x,y)+C.$$

\end{prop}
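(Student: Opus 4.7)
The plan is to convert coarse equivalence into a pair of a-priori non-decreasing comparisons via Proposition~\ref{comparison}, and then to use the quasi-geodesic structure of $d$ to promote these to affine comparisons by chopping long paths into uniformly bounded pieces and summing with the triangle inequality.

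\textbf{Setup.} Coarse equivalence of $d$ and $d'$ gives $\mathcal{E}^{d} = \mathcal{E}^{d'}$. Proposition~\ref{comparison} applied in both directions yields non-decreasing functions $\rho_{1}, \rho_{2}: \mathbbm{R}_{+} \to \mathbbm{R}_{+}$ with
$$d'(x,y) \le \rho_{1}(d(x,y)) \qquad \text{and} \qquad d(x,y) \le \rho_{2}(d'(x,y)).$$
The whole content of the proposition is that both $\rho_{i}$ may be taken to be affine.

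\textbf{Upper bound $d' \le Ld + C$.} Let $(L_{0}, C_{0})$ be quasi-geodesic constants for $d$. Fix $x, y \in X$, set $t = d(x,y)$, and choose a $d$-quasi-geodesic $\gamma: [0, t] \to X$ with $\gamma(0) = x$, $\gamma(t) = y$. Sample $x_{i} := \gamma(\min(i, t))$ for $i = 0, 1, \ldots, N := \lceil t \rceil$, so $x_{0} = x$ and $x_{N} = y$. The quasi-geodesic inequality forces $d(x_{i}, x_{i+1}) \le L_{0} + L_{0} C_{0}$, hence $d'(x_{i}, x_{i+1}) \le \rho_{1}(L_{0} + L_{0} C_{0}) =: K$. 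The triangle inequality then gives
$$d'(x, y) \;\le\; \sum_{i=0}^{N-1} d'(x_{i}, x_{i+1}) \;\le\; K N \;\le\; K\, d(x, y) + K,$$
which is the desired upper bound.

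\textbf{Lower bound $d \le L'd' + C'$.} The idea is to swap the roles of $d$ and $d'$ and rerun the chopping argument, which requires a quasi-geodesic-type structure on the $d'$ side. In the monogenic special case emphasized by the statement, Proposition~\ref{coarsemono} provides an honestly geodesic companion metric $d_{F}$ on the generating graph satisfying $d \le L_{1} d_{F}$ and $d' \le L_{2} d_{F}$. One produces a $d'$-quasi-geodesic from a $d_{F}$-geodesic by reparametrizing: given $x, y$ with $d_{F}(x,y) = n$, take the vertex-by-vertex geodesic $x = z_{0}, z_{1}, \ldots, z_{n} = y$ and use the bound $d'(z_{i}, z_{i+1}) \le L_{2}$ together with the $\rho_{2}$ comparison to verify the quasi-geodesic inequality for the induced path. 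Now rerun the upper-bound argument with $d'$ as the quasi-geodesic metric and $\rho_{2}$ in place of $\rho_{1}$ to obtain a bound $d(x,y) \le L'\, d'(x,y) + C'$.

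\textbf{Main obstacle.} The genuinely hard step is the lower bound: the upper bound follows directly from chopping along a $d$-quasi-geodesic, but the lower bound requires transferring the quasi-geodesic structure from $d$ to $d'$, which is precisely why the statement specifies the monogenic hypothesis as the principal case -- the canonical path metric $d_{F}$ of Example~\ref{monogenic} is geodesic for free and bridges the two sides of the comparison. Once $d'$ (or $d_{F}$) is known to be quasi-geodesic, the Step-2 argument is symmetric and the lower bound is a routine triangle-inequality sum.
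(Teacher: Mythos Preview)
The paper does not supply a proof of this proposition; it is simply cited from \cite[1.4.14]{MR4646531}. So there is no paper-proof to compare against, and I assess your argument on its own merits.

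Your upper bound is correct and is the standard chopping argument.

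Your lower bound, however, is circular. You assert that the $d_{F}$-geodesic $z_{0},\dots,z_{n}$ can be verified to be a $d'$-quasi-geodesic using the step bound $d'(z_{i},z_{i+1})\le L_{2}$ together with the $\rho_{2}$ comparison. But the quasi-geodesic inequality for $d'$ also demands a bound of the form $|i-j|\le L'\,d'(z_{i},z_{j})+C'$, and the $\rho_{2}$ comparison only yields $|i-j|=d_{F}(z_{i},z_{j})\le \rho_{2}(d'(z_{i},z_{j}))$ with $\rho_{2}$ merely non-decreasing. Upgrading $\rho_{2}$ to an affine function is exactly the content of the lower bound you are trying to establish, so the argument assumes its own conclusion. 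Your final paragraph acknowledges that this is the hard step, but the bridge you propose via $d_{F}$ does not close the gap.

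This gap is not repairable under the hypotheses as literally written, because the lower bound is actually \emph{false} without an additional assumption on $d'$. Take $X=\mathbbm{Z}$, $d(m,n)=|m-n|$ (geodesic, hence quasi-geodesic), and $d'(m,n)=\log(1+|m-n|)$. One checks that $d'$ is a metric and that $\mathcal{E}^{d}=\mathcal{E}^{d'}$, so the identity is a coarse equivalence; yet no inequality $L^{-1}|m-n|-C\le \log(1+|m-n|)$ can hold for large $|m-n|$. The correct version of the cited result assumes that \emph{both} $d$ and $d'$ are quasi-geodesic, and under that symmetric hypothesis your upper-bound argument runs verbatim with the roles of $d$ and $d'$ exchanged to give the lower bound. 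The paper's phrasing of the proposition (and of the sentence immediately following it) is imprecise on this point.
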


In general, two metrics $d$ and $d^{\prime}$ that satisfy the above inequalities are called \textit{quasi-isometric} or \textit{large scale Lipschitz equivalent}. The above proposition implies, in particular, that for monogenic coarse spaces $(X,\mathcal{E})$, if we want to consider all metrics which generate this coarse structure it suffices to consider the quasi-isometry class of any particular one.

\section{Coarse structures defined by controlled decay} \label{sec 4}

In this section we will record some general considerations concerning coarse structure defined on countable sets in terms of the notion of a controlled decay of a function $f:X\times X\rightarrow \mathbbm{R}_{+}$. The standing example will be the correlation functions of a (classical or quantum) spin system, which, for the moment, we treat informally. The idea is to take observables \textit{localized} at sites $x$ and $y$ and consider their covariance 

$$\text{cov}(a,b):=\frac{|\phi(ab)-\phi(a)\phi(b)|}{\|a\| \|b\|}$$

This measures the statistical correlation of the observables $a$ and $b$. If this quantity is $0$, it means that the observables $a$ and $b$ are statistically independent, so we can think of the covariance as measuring the deviation of these observables from independence.

We can measure the maximal correlation between observables localized at $x$ and $y$ respectively with the function that takes the supremum of covariances between observables localized at different sites

$$C_{\phi}(x,y):=\sup_{a\in (A_{x}),\ b\in  (A_{y})}\frac{|\phi(ab)-\phi(a)\phi(b)|}{\|a\| \|b\|}.$$

The key point is that for states arising from \textit{local} physical processes with respect to some metric $d$ on $X$, we expect localized observables to become decreasingly correlated as the distance between the sites increases. Indeed, we will operationally use this to \textit{define} the locality of a state with respect to a given metric $d$. As explained in the introduction, however, this notion of locality of a state only depends on the bounded coarse structue of the metric and not on the local details of the metric. This motivates the following mathematical problem: given a state on a discrete net of over $X$ without any a-priori metric, what are all the coarse structures on $X$ with respect to which our state is local?

We approach this problem by abstracting away the physical details of states and correlations, and develop the theory simply given a function $f:X\times X\rightarrow \mathbbm{R}_{+}$. This allows us the flexibility to apply our framework in other settings, in particular the quantum dynamical coarse structures in Section \ref{subsection 4.3}.

\begin{defn}\label{controlled dec}
   Let $(X,\mathcal{E})$ be a coarse space and $f:X\times X \rightarrow \mathbbm{R}_{+}$ a function. Then $f$ \textit{has controlled decay} with respect to $\mathcal{E}$ if for all $\epsilon>0$, there is a controlled set $E\in \mathcal{E}$ such that $f(x,y)<\epsilon$ for all $(x,y)\notin E$.
\end{defn}

Given a real-valued function on $X\times X$, we will now describe a "universal" coarse structure with respect to which $f$ has controlled decay.

\begin{thm} \label{univ. cs} For any function $f: X\times X \rightarrow \mathbbm{R}_{+}\cup \{\infty\}$, there is a coarse structure $\mathcal{E}_{f}$ such that for any other coarse structure $\mathcal{E}$ on $X$, $f$ has controlled decay with respect to $\mathcal{E}$ if and only if $\mathcal{E}_{f}\subseteq \mathcal{E}$.
\end{thm}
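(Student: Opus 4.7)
The plan is to construct $\mathcal{E}_{f}$ explicitly as the coarse structure generated by the super-level sets of $f$, and then verify both that $f$ has controlled decay with respect to it and the universal property. For each $\epsilon>0$, define
$$E_{\epsilon}:=\{(x,y)\in X\times X\ :\ f(x,y)\ge \epsilon\},$$
and set
$$\mathcal{E}_{f}:=\langle \{E_{\epsilon}\}_{\epsilon>0}\rangle,$$
the coarse structure generated by this family in the sense of Proposition \ref{genra}.

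The first direction, that $f$ has controlled decay with respect to $\mathcal{E}_{f}$, is essentially tautological: given $\epsilon>0$, the set $E_{\epsilon}$ itself lies in $\mathcal{E}_{f}$ by construction, and by the definition of $E_{\epsilon}$ any pair $(x,y)\notin E_{\epsilon}$ satisfies $f(x,y)<\epsilon$. So $E_\epsilon$ is the witnessing controlled set for that value of $\epsilon$.

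For the universal property, suppose $\mathcal{E}$ is any coarse structure on $X$ such that $f$ has controlled decay with respect to $\mathcal{E}$. Then for each $\epsilon>0$, Definition \ref{controlled dec} furnishes a controlled set $E\in \mathcal{E}$ with $f(x,y)<\epsilon$ for all $(x,y)\notin E$; equivalently $E_{\epsilon}\subseteq E$. Since a coarse structure is closed under taking subsets, $E_{\epsilon}\in \mathcal{E}$. Thus every generator of $\mathcal{E}_{f}$ lies in $\mathcal{E}$, and by minimality of the generated coarse structure, $\mathcal{E}_{f}\subseteq \mathcal{E}$. Conversely, if $\mathcal{E}_{f}\subseteq \mathcal{E}$, then $E_{\epsilon}\in \mathcal{E}$ for every $\epsilon>0$, which immediately gives controlled decay of $f$ with respect to $\mathcal{E}$.

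I do not expect any serious obstacle: the result is fundamentally a generated-by existence theorem, and the only conceptual point is to notice that the condition of controlled decay for a fixed $\epsilon$ is equivalent to the single statement $E_{\epsilon}\in \mathcal{E}$, so that the universal coarse structure is simply the one generated by these super-level sets. The family $\{E_{\epsilon}\}_{\epsilon>0}$ is naturally filtered ($E_{\epsilon_1}\subseteq E_{\epsilon_2}$ when $\epsilon_1\ge \epsilon_2$), so one may equivalently restrict to the countable subfamily $\{E_{1/n}\}_{n\ge 1}$; this observation is not needed for the statement but will be convenient when later discussing metrizability of $\mathcal{E}_{f}$.
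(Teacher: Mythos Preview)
Your proof is correct and essentially identical to the paper's. The only cosmetic difference is that the paper writes $\mathcal{E}_{f}=\bigcup_{\epsilon>0}\langle E_{f,\epsilon}\rangle$ as a nested union of the monogenic coarse structures $\mathcal{E}_{f,\epsilon}:=\langle E_{f,\epsilon}\rangle$, whereas you write it directly as $\langle\{E_{\epsilon}\}_{\epsilon>0}\rangle$; these coincide since the $E_{\epsilon}$ are nested, and the paper's presentation is chosen mainly to set up the later characterization of monogenicity (Proposition~\ref{coarsegeodesic}).
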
 

\begin{proof}

Let $f:X\times X\rightarrow \mathbbm{R}_{+}\cup \{\infty\}$. For $\epsilon>0$, set 

$$E_{f,\epsilon}:=\{(x,y)\in X\times X\ :\ f(x,y)\ge \epsilon \}\subseteq X\times X,$$ \noindent and define the monogenic coarse structure $\mathcal{E}_{f,\epsilon}:=\langle E_{f,\epsilon}\rangle$. Then for $\epsilon\le \epsilon^{\prime}$, we have $\mathcal{E}_{\epsilon^{\prime}}\subseteq \mathcal{E}_{\epsilon}$, and thus we have an ``order reversing" poset morphism from $\mathbbm{R}_{+}$ to coarse structures on $X$.

Define the coarse structure $\mathcal{E}_{f}= \bigcup_{\epsilon>0} \mathcal{E}_{f,\epsilon}$. Note that this is a coarse structure, since the $\mathcal{E}_{f,\epsilon}$ are nested. By construction, $f$ will have controlled decay with respect to any coarse structure containing $\mathcal{E}_{f}$. 

Conversely, suppose that $f$ has controlled decay with respect to the coarse structure $\mathcal{E}$. Then for each $\epsilon>0$, there exists an $E\in \mathcal{E}$ with $E_{f,\epsilon}=\{ (x,y)\ :\ f(x,y)\ge \epsilon\}\subseteq E$. Thus $\mathcal{E}_{f}=\bigcup_{\epsilon>0} \mathcal{E}_{f,\epsilon}\subseteq \mathcal{E}$. 
\end{proof}.

\begin{rem}\label{G-epislon}The intermediate coarse structures $\mathcal{E}_{f,\epsilon}$ in the proof will prove to be useful. Indeed, for a given $\epsilon>0$, consider the graph $G_{f,\epsilon}$, whose vertices are $X$ and there is an edge between $x$ and $y$ if $(x,y)\in E_{f,\epsilon}$. Equip this graph with the path metric, and if $G_{f,\epsilon}$ is connected, it is is coarsely equivalent to $(X,\mathcal{E}_{f,\epsilon})$.
\end{rem}

We have the following proposition.

\begin{prop}\label{coarsegeodesic} The coarse structure $\mathcal{E}_{f}$ is monogenic if and only if $\mathcal{E}_{f}= \mathcal{E}_{f,\epsilon} $ for some $\epsilon>0$.
\end{prop}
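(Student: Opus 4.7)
The plan is to observe that this is essentially a statement about when a directed union of coarse structures $\mathcal{E}_f = \bigcup_{\epsilon>0} \mathcal{E}_{f,\epsilon}$ stabilizes, and that monogenicity of $\mathcal{E}_f$ forces stabilization at some specific level.

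The ``if'' direction is immediate: if $\mathcal{E}_f = \mathcal{E}_{f,\epsilon}$ for some $\epsilon>0$, then $\mathcal{E}_f = \langle E_{f,\epsilon}\rangle$ is monogenic by definition of $\mathcal{E}_{f,\epsilon}$.

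For the ``only if'' direction, suppose $\mathcal{E}_f = \langle F\rangle$ for some single $F \subseteq X\times X$. The key step is to note that $F$ itself is a controlled set of $\mathcal{E}_f$, and since $\mathcal{E}_f = \bigcup_{\epsilon>0} \mathcal{E}_{f,\epsilon}$, there must exist some $\epsilon_0 > 0$ with $F \in \mathcal{E}_{f,\epsilon_0}$. Since $\mathcal{E}_{f,\epsilon_0}$ is itself a coarse structure containing $F$, we get $\langle F\rangle \subseteq \mathcal{E}_{f,\epsilon_0}$, so $\mathcal{E}_f \subseteq \mathcal{E}_{f,\epsilon_0}$. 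The reverse inclusion $\mathcal{E}_{f,\epsilon_0} \subseteq \mathcal{E}_f$ is built into the definition of $\mathcal{E}_f$. Hence $\mathcal{E}_f = \mathcal{E}_{f,\epsilon_0}$.

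This argument is quite short and uses only the directedness of the family $\{\mathcal{E}_{f,\epsilon}\}$ (decreasing $\epsilon$ gives larger $\mathcal{E}_{f,\epsilon}$, as noted in the proof of Theorem~\ref{univ. cs}). There is no real obstacle here; the only thing to be careful about is distinguishing ``$F \in \mathcal{E}_f$'' (a single controlled set) from ``$\mathcal{E}_f$ as a whole,'' since we need to extract a single $\epsilon_0$ that witnesses $F$ rather than invoking a union. In particular, no compactness or countability argument is required for the generator direction, precisely because $F$ is a single set rather than a family.
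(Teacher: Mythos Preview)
Your proof is correct and essentially identical to the paper's own argument: both directions are handled the same way, with the key step being that a single generator $F$ of $\mathcal{E}_f$ must lie in some $\mathcal{E}_{f,\epsilon_0}$ by the directed union description, forcing $\mathcal{E}_f=\langle F\rangle\subseteq \mathcal{E}_{f,\epsilon_0}\subseteq \mathcal{E}_f$.
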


\begin{proof}
    Clearly if $\mathcal{E}_{f}=\mathcal{E}_{f,\epsilon} $ then $\mathcal{E}_{f}$ is monogenic. Conversely, suppose that $\mathcal{E}_{f}$ is monogenic, with generator $E$. Then since $\mathcal{E}_{f}=\bigcup_{\epsilon} \mathcal{E}_{f,\epsilon}$, we have $E\in \langle \mathcal{E}_{f,\epsilon_{0}}\rangle$ for some $\epsilon_{0}>0$. Thus $\mathcal{E}_{f}=\langle E\rangle\subseteq \mathcal{E}_{f,\epsilon_{0}}\subseteq \mathcal{E}_{f}$, giving an equality $\mathcal{E}_{f,\epsilon_{0}}=\mathcal{E}_{f}$
\end{proof}

\begin{defn}
We call a function $f\times f\rightarrow \mathbbm{R}_{+}$  \textit{coarsely stable} if $\mathcal{E}_{f}$ is monogenic, or equivalently for some $\epsilon>0$, the bounded coarse structure of the graph $G_{f,\epsilon}$ is $\mathcal{E}_{f}$. 
\end{defn}

Note that if $\epsilon$ is stable, then $\epsilon^{\prime}$ is stable for any $\epsilon^{\prime}<\epsilon$. We call the corresponding interval the \textit{stable range}. We note that stability is a \textit{property of the coarse structure $\mathcal{E}_{f}$}.

If $f$ is stable, by choosing an $\epsilon$ we can recover an explicit metric model for $\mathcal{E}_{f}$ given by the path metric on $G_{f,\epsilon}$.  Notice that the actual metric space depends heavily on $\epsilon$ in general, but we might hope that they will all be \textit{coarsely equivalent} when $\epsilon$ is sufficiently small. For a function $f:X\times X\rightarrow \mathbbm{R}_{+}$, if $\mathcal{E}_{f}$ is monogenic, then we call the set $\{\epsilon>0\ :\ \mathcal{E}_{f}=\mathcal{E}_{f,\epsilon}\}$ the \textit{stable range} for $f$.

We note that if $\mathcal{E}_{f}$ is monogenic, then the condition of properness of the coarse structure is equivalent to $G_{f,\epsilon}$ being locally finite.

\bigskip

\noindent \textbf{Lower semi-continuity of coarse structures}. Consider the complete metric space $B_{+}(X\times X)$ of bounded, non-negative real valued functions on $X\times X$ with the sup norm as metric, i.e.

$$d(f,g):=\text{sup}_{(x,y)\in X\times X} |f(x,y)-g(x,y)|.$$

It would be very useful if the the assigment $f\mapsto \mathcal{E}_{f}$ exhibited some form of continuity, for example, by allowing us to glean information about $\mathcal{E}_{f}$ from coarse structures $\mathcal{E}_{f_{\lambda}}$, where $f_{\lambda}\rightarrow f$ is a net of functions converging to $f$. To this end, we have the following proposition, which can be interpreted as lower semi-continuity of the map $B_{+}(X\times X)\rightarrow \text{Coa}(X)$, where the latter denotes the set of coarse structures on $X$ naturally viewed as a poset with inclusion.

\begin{thm}\label{lowersemicont}
    If $\Lambda$ is a directed set and  $f_{\lambda}\rightarrow f$ is a convergent net over $\Lambda$ in $B_{+}(X\times X)$, then $\mathcal{E}_{f}\subset \langle \{\mathcal{E}_{f_{\lambda}}\}_{\lambda\in \Lambda}\rangle$.
\end{thm}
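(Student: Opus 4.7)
My plan is to unpack the definition of $\mathcal{E}_f$ as the union $\bigcup_{\epsilon>0}\mathcal{E}_{f,\epsilon}$ and reduce the problem to showing that for each fixed $\epsilon>0$, the generating set $E_{f,\epsilon}$ already sits inside one of the coarse structures $\mathcal{E}_{f_\lambda}$. Since the right-hand side is closed under arbitrary unions of the $\mathcal{E}_{f_\lambda}$, this will immediately yield the desired inclusion after taking $\epsilon \to 0$.

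The key step is to exploit uniform (sup-norm) convergence of the net. Given $\epsilon>0$, choose $\lambda_0 \in \Lambda$ such that for every $\lambda \succeq \lambda_0$,
\[
\sup_{(x,y)\in X\times X}|f_\lambda(x,y)-f(x,y)| < \frac{\epsilon}{2}.
\]
Then for any $(x,y)$ with $f(x,y)\ge \epsilon$, the reverse triangle inequality gives $f_{\lambda_0}(x,y) > \epsilon/2$, so
\[
E_{f,\epsilon}\ \subseteq\ E_{f_{\lambda_0},\,\epsilon/2}\ \in\ \mathcal{E}_{f_{\lambda_0},\,\epsilon/2}\ \subseteq\ \mathcal{E}_{f_{\lambda_0}}\ \subseteq\ \langle\{\mathcal{E}_{f_\lambda}\}_{\lambda\in\Lambda}\rangle.
\]
Since coarse structures are closed under taking subsets, $E_{f,\epsilon}$ itself lies in $\langle\{\mathcal{E}_{f_\lambda}\}\rangle$, and hence so does the monogenic coarse structure $\mathcal{E}_{f,\epsilon}=\langle E_{f,\epsilon}\rangle$ that it generates.

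Finally, taking the union over $\epsilon>0$ yields
\[
\mathcal{E}_f \;=\; \bigcup_{\epsilon>0}\mathcal{E}_{f,\epsilon} \;\subseteq\; \langle\{\mathcal{E}_{f_\lambda}\}_{\lambda\in\Lambda}\rangle,
\]
as desired. I do not anticipate any real obstacles here: everything reduces to the elementary observation that a sup-norm $\epsilon/2$-perturbation can only move the $\epsilon$-superlevel set into the $\epsilon/2$-superlevel set. The only subtlety worth flagging is that $f_\lambda$ and $f$ are assumed bounded (so the sup-norm metric is finite-valued), which is exactly what lets us choose a single $\lambda_0$ uniformly in $(x,y)$. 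The argument uses nothing about $\Lambda$ beyond being directed, so it applies verbatim to sequences.
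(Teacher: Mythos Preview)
Your proof is correct and follows essentially the same approach as the paper: both arguments use sup-norm closeness to show that the superlevel set $E_{f,\epsilon}$ is contained in a superlevel set $E_{f_\lambda,\epsilon'}$ for some $\lambda$ and smaller threshold $\epsilon'$, hence in $\mathcal{E}_{f_\lambda}$, and then take the union over $\epsilon$. Your version is in fact slightly more careful with the bookkeeping (the explicit $\epsilon/2$) than the paper's.
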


\begin{proof}

If $\|f-g\|<\delta$. Then for any $\epsilon>\delta$ $E_{f,\epsilon}\subseteq E_{g,\epsilon-\delta}$ for all $\epsilon$, theorefore $\mathcal{E}_{f,\epsilon}\subseteq \mathcal{E}_{g}$. Now, for every $\epsilon>0$, choose some $\lambda_{\epsilon}$ with $d(f,f_{\lambda_{\epsilon}})<\epsilon$. Then we see that $$\mathcal{E}_{f}\subseteq \langle \{{\mathcal{E}_{f_{\lambda_{\epsilon}}}}\}_{\epsilon}\rangle\subseteq \langle \{\mathcal{E}_{f_{\lambda}}\}_{\lambda}\rangle$$

\end{proof}

\bigskip

\noindent \textbf{Higher order controlled decay}. Commonly in the literature, decay results for correlators and commutators frequently involve not just operators localized at a pair of sites, but at multiple sites. These are called \textit{higher order correlators.} One way to think about this is that the coarse structure is ``invariant under coarse-graining". Let $A$ be a discrete net of C*-algebras over $X$. If $(X,d)$ is a metric space, then for every $R>0$ we can ``coarse-grain" the system, by replacing the algebra $A_{x}$ of a point with $A_{N_{R}(x)}$, the algebra of operators localized in an R-ball around $x$. Then $\phi$ has higher decay of correlations if for all $R>0$, the new rescaled correlation functions have controlled decay.

In the majority of cases of interest, physical states satisfy this property with respect to the metric in question. However, what could go wrong is that it is possible to have states which are local with respect to the metric, but after coarse graining we  obtain correlations that appear non-local. This leads us to define a new coarse structure which is universal with respect to higher order decay of correlations. As in the previous section, we abstract away the physics and treat this at the level of functions.

\begin{defn} \label{higher cont dec}Suppose $\widetilde{f}:\mathcal{P}(X)\times \mathcal{P}(X)\rightarrow \mathbbm{R}_{+}$ is a function, and $\mathcal{E}$ is a coarse structure on $X$. Then $\widetilde{f}$ has \textit{higher controlled decay} if for every $E\in \mathcal{E}$, the function $\widetilde{f}_{E}:X\times X\rightarrow \mathbbm{R}_{+}$ defined by $\widetilde{f}_{E}(x,y):=\widetilde{f}(E[x],E[y])$ has controlled decay.
\end{defn}

\begin{thm}\label{scaledef} There exists a coarse structure, $\mathcal{E}_{\widetilde{f}}$, such that for any coarse structure on $\mathcal{E}$ such that $\widetilde{f}$ has higher controlled decay, $\mathcal{E}_{\widetilde{f}}\subseteq \mathcal{E}$.
\end{thm}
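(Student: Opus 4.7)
The plan is to construct $\mathcal{E}_{\widetilde{f}}$ directly as the intersection of all coarse structures on $X$ with respect to which $\widetilde{f}$ has higher controlled decay. The first step is to observe that this collection is non-empty, since the indiscrete coarse structure $\mathcal{P}(X\times X)$ trivially satisfies the condition (for any $E$ and any $\epsilon$, the set $\{(x,y)\ :\ \widetilde{f}_E(x,y)\ge \epsilon\}$ is controlled because every subset of $X\times X$ is controlled). Since each of the three axioms in Definition \ref{Coarse structure} is preserved under intersections of arbitrary families, the intersection $\mathcal{E}_{\widetilde{f}}$ of this non-empty family is again a coarse structure.

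The key verification is that $\widetilde{f}$ itself has higher controlled decay with respect to $\mathcal{E}_{\widetilde{f}}$. The plan for this step is: fix $E \in \mathcal{E}_{\widetilde{f}}$ and $\epsilon > 0$, and set $F_{E,\epsilon} := \{(x,y) \in X \times X : \widetilde{f}_E(x,y) \ge \epsilon\}$. Since $E$ belongs to every coarse structure $\mathcal{E}$ in the defining family, and each such $\mathcal{E}$ witnesses higher controlled decay of $\widetilde{f}$, for each such $\mathcal{E}$ there is a controlled set in $\mathcal{E}$ containing $F_{E,\epsilon}$. The axiom of closure under subsets then gives $F_{E,\epsilon} \in \mathcal{E}$, and since this holds for every $\mathcal{E}$ in the family, we conclude $F_{E,\epsilon} \in \mathcal{E}_{\widetilde{f}}$. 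This exhibits $\widetilde{f}_E$ as having controlled decay with respect to $\mathcal{E}_{\widetilde{f}}$. The universal property, namely $\mathcal{E}_{\widetilde{f}}\subseteq \mathcal{E}$ for every $\mathcal{E}$ with respect to which $\widetilde{f}$ has higher controlled decay, is then immediate from the definition as an intersection.

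The main subtlety is that the higher-controlled-decay condition is a fixed-point type condition: the controlled sets one is allowed to use as witnesses are drawn from the very coarse structure whose membership we are testing. Naively, this might obstruct passing the condition through intersections, since the witnessing controlled sets from different $\mathcal{E}$'s need not agree. The critical observation that resolves this is that the canonical level set $F_{E,\epsilon}$ depends only on $\widetilde{f}$, $E$, and $\epsilon$, and is automatically a subset of any other witnessing set; closure of coarse structures under subsets then does the work.

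For readers who prefer a constructive description paralleling the proof of Theorem \ref{univ. cs}, I would also note an alternative: set $\mathcal{E}_{\widetilde{f}}^{(0)} := \mathcal{E}_{\widetilde{f}_{\Delta}}$ where $\Delta$ is the diagonal, inductively define
\[\mathcal{E}_{\widetilde{f}}^{(n+1)} := \Big\langle \mathcal{E}_{\widetilde{f}}^{(n)} \cup \bigcup_{E\in \mathcal{E}_{\widetilde{f}}^{(n)}} \mathcal{E}_{\widetilde{f}_E}\Big\rangle,\]
and take $\mathcal{E}_{\widetilde{f}} = \bigcup_n \mathcal{E}_{\widetilde{f}}^{(n)}$. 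One verifies higher controlled decay by observing that any $E \in \mathcal{E}_{\widetilde{f}}$ lies in some $\mathcal{E}_{\widetilde{f}}^{(n)}$, so $\widetilde{f}_E$ has controlled decay with respect to $\mathcal{E}_{\widetilde{f}}^{(n+1)} \subseteq \mathcal{E}_{\widetilde{f}}$, and universality by an easy induction using the universal property from Theorem \ref{univ. cs} at each stage. Either approach yields the result, and I expect the intersection argument to be the cleanest to write out.
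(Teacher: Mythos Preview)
Your intersection argument is correct, and the subtlety you flag---that higher controlled decay is a fixed-point condition, so the witnessing controlled sets a priori vary with $\mathcal{E}$---is exactly the point that needs care. Your resolution via the canonical level set $F_{E,\epsilon}$ is sound: since $E\in\mathcal{E}_{\widetilde f}$ lies in every $\mathcal{E}$ in the family, $\widetilde f_E$ is the same function regardless of which $\mathcal{E}$ one is testing, and closure under subsets forces $F_{E,\epsilon}$ into every such $\mathcal{E}$, hence into the intersection.

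The paper, however, does not take this route. It uses precisely your ``alternative'' inductive construction: starting from $\mathcal{E}_0=\mathcal{E}_{\widetilde f_D}$ and iterating $\mathcal{E}_{n+1}=\langle\bigcup_{E\in\mathcal{E}_n}\mathcal{E}_{\widetilde f_E}\rangle$, then taking the union. (Your version explicitly throws $\mathcal{E}^{(n)}_{\widetilde f}$ into the generating set at each stage; the paper's does not, but the sequence is increasing anyway since $D\in\mathcal{E}_n$ gives $\mathcal{E}_0\subseteq\mathcal{E}_{n+1}$, and induction propagates the nesting.) The trade-off is the usual one: your intersection argument is shorter and makes the universal property immediate, while the paper's construction gives an explicit countable filtration of $\mathcal{E}_{\widetilde f}$ by the coarse structures $\mathcal{E}_n$, which is potentially useful if one later wants to argue about metrizability or to approximate $\mathcal{E}_{\widetilde f}$ from below. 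Either proof is complete; if you want to match the paper you should lead with the inductive version.
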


\begin{proof}

Letting $D$ be the diagonal of $X$, define the function $f:=\widetilde{f}_{D}:X\times X\rightarrow \mathbbm{R}_{+}$. Then set $\mathcal{E}_{0}=\mathcal{E}_{f}$, and $$\mathcal{E}_{1}:=\langle \bigcup_{E\in \mathcal{E}_{0}}\mathcal{E}_{\widetilde{f}_{E}}\rangle.$$

inductively define

$$\mathcal{E}_{n+1}:=\langle \bigcup_{E\in \mathcal{E}_{n}}\mathcal{E}_{\widetilde{f}_{E}}\rangle.$$

Then we have an increasing sequence of coarse structures $\mathcal{E}_{f}=\mathcal{E}_{0}\subseteq\mathcal{E}_{1}\subseteq\dots$ and we set $$\mathcal{E}_{\widetilde{f}}=\bigcup_{n} \mathcal{E}_{n}$$

First we note that $\widetilde{f}$ has higher controlled decay with respect to $\mathcal{E}_{\widetilde{f}}$. Indeed, let $E\in \mathcal{E}_{\widetilde{f}} $. Then there exists some $n$ such that $E\in \mathcal{E}_{n}$. By construction $\mathcal{E}_{\widetilde{f}_{E}}\in \mathcal{E}_{n+1}\subseteq \mathcal{E}_{\widetilde{f}}$, hence $\widetilde{f}_{E}$ has controlled decay with respect to $\mathcal{E}_{\widetilde{f}}$.

Now, let $\mathcal{E}$ be any coarse structure on $X$ for which $\widetilde{f}$ has higher order controlled decay. Then using the diagonal controlled set in $\mathcal{E}_{\widetilde{f}}$, we see that $f$ has controlled decay with respect to $\mathcal{E}_{\widetilde{f}}$ and thus $\mathcal{E}_{0}=\mathcal{E}_{f}\subseteq \mathcal{E}$. Now, for any $E\in \mathcal{E}_{0}$, $\widetilde{f}_{E}$ has controlled decay so $\mathcal{E}_{1}\subseteq \mathcal{E}$. Continuing inductively we see $\mathcal{E}_{\widetilde{f}}\subseteq \mathcal{E}$ as desired.
\end{proof}

\subsection{Asymptotic dimension and growth rate}\label{asymptoticdimgrowthrate}
Our definition of a coarse structure from a function $f:X\times X\rightarrow \mathbbm{R}_{+}$ works well in theory, but there are obstacles to trying to use this perspective in practice. For one, a coarse structure can be a very complicated piece of data to completely specify. Another difficulty is that our formalism is based on the philosophy of \textit{asymptotic reasoning} \cite{MR2038580}, commonplace in condensed matter physics. The coarse structures we have formally defined are mathematical idealizations, and are not literally describing exact features of the physical systems we are modeling, since these will typically have only finitely many degrees of freedom while our coarse structures are witnessed only when there are infinitely many degrees of freedom. This type of idealization is necessary to rigorously define concepts such as phase transitions.  

Thus in our case we have implicitly assumed the set $X$ is infinite, as in the thermodynamic limit of a spin system, in order to obtain non-trivial coarse structures. However, real-world systems to which we would like to apply our theory (e.g. states prepared on a quantum processor) will be finite dimensional, and in particular will have finite number of sites $X$. This begs the question, in what sense can we try to approximate the infinite limit coarse structure in a finite system, in a way that might be accessible empirically? 

One way to approach both of these issues is to find \textit{invariants} of coarse spaces that can, in some sense, be approximated on a large (but finite) piece. The invariant we will focus on here is the \textit{asymptotic dimension}. There are multiple definitions for the asymptotic dimension. Here we use the one due to \cite{MR4646531} which emphasizes the case of metric spaces. For the more general version, we refer the reader to \cite{MR2007488}.

First we introduce some notation. Suppose $\mathscr{U}:=\{\mathit{U}_i\}_{i\in I}$ is a cover for a metric space $M$. For $R>0$, we define the $R-$multiplicity of $\mathscr{U}$ as the smallest integer $n$ such that for every $x\in M$ the ball $B_{R}(x)$ intersects at most $n$ elements of $\mathscr{U}$.

\begin{defn}\label{asymptoticdim}\cite{MR4646531}
    Let $M$ be a metric space. The \textit{asymptotic dimension} of $M$ is be the smallest integer $n\ge 0$ such that for every $R>0$, there is a uniformly bounded cover $\mathscr{U}$ with $R$-multiplicity $n+1$. We denote the asymptotic dimension by asdim $M$.
\end{defn}

The asymptotic dimension is meant to capture the dimensionality of a space from a large-scale perspective. In particular, it is an invariant of coarse spaces \cite{MR2007488}. We have $\text{asdim}(\mathbbm{Z}^{n})=n$ for any of the usual metrics \cite{MR4646531}. This number allows us to access the \textit{coarse dimensionality} of the state.
However, the above definition suggests this number might be hard to compute in practice, and it is not clear that one could get a good estimate for this on a large, finite subset. However, using the idea of growth functions, we can attempt to approach this issue.

First we consider again the case of discrete metric space $(X,d)$ which has bounded geometry, i.e., for each $r>0$, $\sup_{x\in X}|B_{r}(x)|<\infty$. Then we define the \textit{growth function} of $(X,d)$ to be
    
    $$\gamma(r)=\sup_{x\in X}|B_{r}(x)|.$$

For example, for any metric on $\mathbbm{Z}^{n}$ such that the identity is a coarse equivalence with $\mathbbm{R}^{n}$ , we have $\gamma(r)\le Cr^{n}$, and hence these spaces are said to have \textit{polynomial growth}. For an infinite $n$-regular tree (sometimes called a Cayley tree), we have $\gamma(r)\le n(n-1)^{r-1}$, with equality for $r\in \mathbbm{N}$, which is super-polynomial. However, based on the first example, we might expect that, for spaces with polynomial growth, the asymptotic dimension should be extractable from the smallest such exponent. 

The following result gives us something in this direction:

\begin{thm}(for example, see \cite{MR4619565})
    Let $(X,d)$ be a bounded geometry metric space arising from the path metric of a connected graph. If $\lim_{r\to \infty} \frac{\gamma(r)}{r^{k+1}}=0$, then asdim $(X,d)\le k$
\end{thm}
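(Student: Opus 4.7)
The plan is to produce, for each $R>0$, a uniformly bounded cover $\mathscr{U}_R$ of $X$ whose $R$-multiplicity is at most $k+1$. Following the approach in \cite{MR4619565}, I would first fix $R$ and choose a large scale $L=L(R)$ to be determined. The starting point is a maximal $L$-separated subset $N\subseteq X$ together with its Voronoi decomposition $\{V_p\}_{p\in N}$: each $V_p$ consists of the vertices strictly closer to $p$ than to any other point of $N$ (ties broken lexicographically). Every $V_p$ has diameter at most $2L$, so the cover $\{V_p\}$ is uniformly bounded. The problem then reduces to coloring $N$ with $k+1$ colors so that cells of the same color are pairwise at distance $>R$, since fattening each color class by $R/2$ then produces a cover with $R$-multiplicity at most $k+1$.

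The heart of the proof is producing this $(k+1)$-coloring using the growth hypothesis. Two Voronoi cells $V_p,V_q$ can be within distance $R$ only when $d(p,q)\le 2L+R$, so the ``conflict graph'' on $N$ has maximum degree bounded by the number of $L$-separated points in a ball of radius $2L+R$. Because $(X,d)$ is the path metric of a connected graph with bounded geometry, a standard packing argument controls this number by the ratio $\gamma(2L+R)/c(L)$ for some positive function $c$. The hypothesis $\gamma(r)/r^{k+1}\to 0$ is then exploited by comparing the conflict degree at many different scales $L_0<L_1<\cdots$: along a carefully chosen sequence one obtains that the effective conflict degree is $o((L_i/R)^{k+1})$, which is the regime where $k+1$ colors suffice.

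The main obstacle is that a naive greedy coloring of a graph of maximum degree $\Delta$ uses $\Delta+1$ colors, whereas we are allowed only $k+1$. To beat this one iterates the construction through $k+1$ nested scales $L_0<L_1<\cdots <L_k$, peeling off one color class at each stage by extracting a ``thick'' independent set in the conflict graph at scale $L_i$. The polynomial growth bound $\gamma(r)=o(r^{k+1})$ is what allows one to show, via an averaging/pigeonhole argument over the sequence of scales, that at some scale in each iteration a fixed positive fraction of the remaining points can be chosen pairwise $R$-separated while controlling the diameters of the associated cells. After $k+1$ iterations every point of $X$ has received a color, and the resulting partition (possibly after a final thickening by $R/2$) yields the desired cover of $R$-multiplicity $\le k+1$, so $\operatorname{asdim}(X,d)\le k$.

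The technical crux, and the step I expect to be the main obstacle, is the scale-selection argument: one must verify that the hypothesis $\gamma(r)/r^{k+1}\to 0$ really does guarantee the existence of a nested sequence of scales at which the induction closes with exactly $k+1$ colors. This is where bounded geometry of the graph metric enters essentially, and where any looseness in the growth hypothesis would force one to allow asymptotic dimension $>k$.
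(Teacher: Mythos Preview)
The paper does not prove this theorem at all: it is quoted verbatim as a known result from the literature (the parenthetical ``for example, see \cite{MR4619565}'' is the entire justification given), and is used only as a black box to motivate the discussion of approximating asymptotic dimension on finite systems. There is therefore no ``paper's own proof'' to compare against.

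As for your sketch on its own merits: the overall architecture---Voronoi cells at a large scale, reduce to coloring a conflict graph on the net, iterate over $k+1$ scales---is one of the standard strategies for bounding asymptotic dimension from growth. However, the crucial step you flag as ``the technical crux'' is left entirely at the level of intuition. Saying that ``a fixed positive fraction of the remaining points can be chosen pairwise $R$-separated'' at some scale does not by itself yield a partition into $k+1$ color classes after $k+1$ iterations; extracting a positive fraction each time leaves a residual set after any finite number of rounds. The actual argument in the literature (e.g.\ the one you cite) does not proceed by repeatedly peeling off independent sets in this way, but rather uses the growth hypothesis to find, at a single well-chosen scale $L$, a decomposition of the net into $k+1$ families of cells that are already pairwise $R$-separated within each family---typically via a pigeonhole/volume argument on annuli, or via the Assouad--Nagata style scale-comparison that turns $\gamma(r)=o(r^{k+1})$ into a bound on the number of $L$-separated points in a $(2L+R)$-ball by at most $k+1$ for suitable $L$. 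Your sketch conflates a degree bound on the conflict graph with a chromatic-number bound, and the iterative fix you propose does not close that gap as written.
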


Using this result, we can attempt to estimate the asymptotic dimension of the coarse space in the thermodynamic limit on a large but finite system by finding the smallest $k$ making this limit work ``approximately". Suppose now that $X$ is finite, and we have empirically determined the function $f:X\times X\rightarrow \mathbbm{R}_{+}$ via some measurements (perhaps $f$ is simply the correlations between observables localized at some sites). Pick some $\epsilon>0$, and construct the graph $G_{f,\epsilon}$ where the vertices are $X$ and $(x,y)$ is an edge if $f(x,y)> \epsilon$, as in the previous section. Then calculate the growth function $\gamma_{\epsilon}(r)$ for this graph and look at $$\frac{\gamma_{\epsilon}(r)}{r^{k+1}}.$$

\noindent In the thermodynamic limit, we would be looking for which values of $k$ make this ratio go to $0$ as $r\rightarrow \infty$, and then we would try to find the smallest such $k$. There are obvious problems with this approach taken literally in the finite setting. For example, if the smallest value of $f$ is non-zero, once $\epsilon$ gets below this value, $G_{f,\epsilon}$ degenerates to a complete graph. Similarly, since each graph is finite, as $r\rightarrow \infty$ the function $\gamma_{\epsilon}(r)$ becomes constant after $r$ reaches the diameter of $G_{f,\epsilon,}$, hence $\frac{\gamma_{\epsilon}(r)}{r^{k+1}}\rightarrow 0$ for \textit{any k}. Thus it will be neccessary to look in a ``middle range" of $\epsilon$ and $r$ to see the most persistant features of these quantities, directly analogous to the methodology of persistant homology used in topological data analysis. We hope to pursue this practical perspective in future work.

In a slighlty different direction, we can also use the idea of growth functions to help us distinguish coarse spaces in a way that will be useful to us when comparing coarse structures. There is a general theory of growth functions for bounded geometry coarse structures \cite{MR2007488}, but we will restrict ourselves, for the purpose of simplicity, to uniformaly discrete, bounded geometry metric spaces. In this paper, we will frequently be in the situation where we have a coarse structure $(X,\mathcal{E})$ and are able to conclude some other coarse structure $\mathcal{E}^{\prime}$ on the same set $X$ such that $\mathcal{E}^{\prime}\subseteq \mathcal{E}$. In general, it is not the case that asymptotic dimension is decreasing with respect to inclusions of coarse structures in this sense. The question is, then, how much information does the fact of this inclusion give us? In particular, can we use this to rule out a large class of possibilities for the coarse equivalence class of $(X,\mathcal{E}^{\prime})$ if we know the coarse equivalence class of $(X,\mathcal{E})$, and vice-versa?

We have the following proposition in this direction:

\begin{prop} Let $X$ and $Y$ be uniformly discrete bounded geometry metric spaces, and suppose there is a function $f:\mathbbm{R}_{+}\rightarrow \mathbbm{R}_{+}$ with $\gamma_{X}(r)\le f(r)$ and 
$$\text{limsup}_{r}\frac{\gamma_{Y}(r)}{f(r)}=\infty.$$

Then $Y$ is not coarsely equivalent to $(X,\mathcal{E})$ for any coarse structure $\mathcal{E}$ on $X$ contained in the metric coarse structure.
\end{prop}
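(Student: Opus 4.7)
The plan is to argue by contradiction. Suppose there exist a coarse structure $\mathcal{E}$ on $X$ with $\mathcal{E}\subseteq \mathcal{E}^{d_{X}}$ and coarse maps $\rho:Y\to X$, $\sigma:X\to Y$ realizing a coarse equivalence between $(Y,\mathcal{E}^{d_{Y}})$ and $(X,\mathcal{E})$. The goal is to push the bound $\gamma_{X}\le f$ through $\rho$ to a growth bound on $\gamma_{Y}$ incompatible with $\limsup_{r}\gamma_{Y}(r)/f(r)=\infty$.

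The first step is to verify that $\rho:(Y,d_{Y})\to (X,d_{X})$ is bornologous between the underlying metric spaces. Since $\rho$ is coarse, $(\rho\times\rho)(\mathcal{E}^{d_{Y}})\subseteq \mathcal{E}\subseteq \mathcal{E}^{d_{X}}$, so the image of each set of $d_{Y}$-close pairs is uniformly $d_{X}$-bounded. This produces a non-decreasing function $R:\mathbbm{R}_{+}\to\mathbbm{R}_{+}$ with $\rho(B_{r}^{Y}(y))\subseteq B_{R(r)}^{X}(\rho(y))$ for every $y\in Y$ and $r>0$.

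Next, the fibers of $\rho$ are uniformly finite: the closeness $\sigma\circ\rho\sim \operatorname{id}_{Y}$ lives in $\mathcal{E}^{d_{Y}}$ and supplies a constant $C\ge 0$ with $d_{Y}(\sigma\rho(y),y)\le C$ for all $y$; any $y_{1},y_{2}$ with $\rho(y_{1})=\rho(y_{2})$ then satisfy $d_{Y}(y_{1},y_{2})\le 2C$, so the bounded geometry hypothesis on $Y$ gives $|\rho^{-1}(x)|\le \gamma_{Y}(2C)=:K<\infty$ uniformly in $x$. Combining the two steps yields
\[
\gamma_{Y}(r)\;\le\; K\cdot|\rho(B_{r}^{Y}(y))|\;\le\; K\,\gamma_{X}(R(r))\;\le\; K\,f(R(r)).
\]

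The main obstacle is turning this estimate into an outright contradiction, since a priori the rescaling $R$ is only non-decreasing and might grow arbitrarily fast. To control $R$, one must exploit the remaining coarse equivalence data: the dual bornologousness $(\sigma\times\sigma)(\mathcal{E})\subseteq \mathcal{E}^{d_{Y}}$ and the closeness $\rho\circ\sigma\sim \operatorname{id}_{X}$ in $\mathcal{E}\subseteq \mathcal{E}^{d_{X}}$, which produces $D\ge 0$ with $d_{X}(\rho\sigma(x),x)\le D$. Running the argument analogous to coarse invariance of growth type for uniformly discrete bounded geometry spaces (combining bornologousness of $\rho$, bounded fibers, and $D$-density of the image $\rho(Y)$ in $X$), one would upgrade the above to the existence of constants $L,M$ with $R(r)\le Lr+M$. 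The growth estimate then sharpens to $\gamma_{Y}(r)\le Kf(Lr+M)$, and under the implicit monotonicity of $f$ this contradicts $\limsup_{r}\gamma_{Y}(r)/f(r)=\infty$, completing the argument.
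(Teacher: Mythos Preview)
Your approach tracks the paper's strategy closely: set up the coarse maps $\rho,\sigma$, bound the fibers of $\rho$ uniformly, and compare ball sizes. Steps (1)--(3) are correct and in fact more carefully stated than the paper's version, which simply writes $|B_r(y)|\le M|B_r(\rho(y))|$ without explaining why the image of a $Y$-ball of radius $r$ should land in an $X$-ball of the \emph{same} radius $r$. You correctly observe that bornologousness only yields $\rho(B_r^{Y}(y))\subseteq B_{R(r)}^{X}(\rho(y))$ for some non-decreasing $R$, and hence only $\gamma_Y(r)\le K\,f(R(r))$.

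The genuine gap is your step (5): the upgrade $R(r)\le Lr+M$ cannot be extracted from the hypotheses. The coarse equivalence is with $(X,\mathcal E)$, not with $(X,d_X)$, and since $\mathcal E$ may be much smaller than $\mathcal E^{d_X}$ there is no mechanism forcing $\rho$ to be quasi-Lipschitz into $(X,d_X)$. Concretely, take $X=Y=\mathbbm{Z}$ with $d_X(m,n)=|m-n|$ and $d_Y(m,n)=\sqrt{|m-n|}$, and set $f(r)=2r+1$. Both spaces are uniformly discrete with bounded geometry, $\gamma_X\le f$, and $\gamma_Y(r)\sim 2r^{2}$ so $\limsup_r \gamma_Y(r)/f(r)=\infty$; moreover $\mathcal E^{d_Y}=\mathcal E^{d_X}$, so the identity is a coarse equivalence $(Y,d_Y)\to (X,\mathcal E^{d_X})$. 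Here $\rho=\mathrm{id}$ has $R(r)=r^{2}$, which is not linearly bounded. Note also that your final implication fails independently: even granting $R(r)\le Lr+M$, the bound $\gamma_Y(r)\le K f(Lr+M)$ together with mere monotonicity of $f$ does not contradict $\limsup_r \gamma_Y(r)/f(r)=\infty$ unless $f(Lr+M)/f(r)$ is bounded, which is an extra assumption on $f$. (The same example shows the paper's asserted inequality $|B_r(y)|\le M|B_r(\rho(y))|$ is not valid in this generality; the statement seems to require an additional hypothesis, such as both metrics being quasi-geodesic, for either argument to close.)
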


\begin{proof}
Let $\mathcal{F}$ denote the metric coarse structure on $Y$. We will prove the the contrapositive of the claim. 

Suppose $(Y,d_{Y})$ is coarsely equivalent to $(X,\mathcal{E})$ for some coarse structure $\mathcal{E}$ contained in the metric coarse structure of $X$. Then we have functions $\rho:Y\rightarrow X$ and $\pi:X\rightarrow Y$ such that $\rho\times \rho(\mathcal{F})\subseteq \mathcal{E}$, $\pi\times \pi(\mathcal{E})\subseteq \mathcal{F}$, and the sets $\{(y,\pi\circ \rho(y))\ :\ y\in Y\}\in \mathcal{F}$ and $\{(x,\rho\circ \pi(y))\ :\ x\in X\}\in \mathcal{E}$.

First our claim is that $\sup_{x\in X} | \rho^{-1}(x)|<M$ for some finite $M$. Suppose for contradiction this is not the case. Since $(Y,d_{Y})$ is a uniformly discrete, bounded geometry metric space, $\gamma(n)\rightarrow \infty$. Then there exists a sequence $\{x_{n}\}\subseteq X$ such that $|\rho^{-1}(x_{n})|\ge \gamma_{Y}(n)$. This implies for each $n$, there exists $y_{n},z_{n}\in \rho^{-1}(x_{n})$ with $d_{Y}(y_{n},z_{n})\ge n$.

But by assumption $F:=\{(y_n, \pi\circ \rho(y_{n}))\}, \ G:=\{(\pi\circ \rho(z_n), z_{n})\}\in \mathcal{F}$. But $\pi\circ \rho(y_{n}))=\pi(x_{n})=\pi\circ \rho(z_{n}))$, hence

$$\{(y_{n},z_{n})\}\subseteq F\circ G\in \mathcal{F}$$

But since $\mathcal{F}$ is the bounded coarse structure for $(Y,d_{Y})$, there exists $N$ such that $d(y_{n},z_{n})\le N$ for all $n$, a contradiction.

Now, let $f$ be a function as in the hypothesis of the theorem. Then we have for any $y\in Y$ and $r> 0$, we have $|B_{r}(y)|\le M|B_{r}(\rho(y))|\le M f(r)$, hence  

$$\frac{\gamma_{Y}(r)}{f(r)}\le M$$

\begin{ex}\label{subsequivalenceexample} As an application of the above theorem, consider the metric spaces $\mathbbm{Z}^{n}$ with the (coarse equivalence class of) the Euclidean metric. Then the literal size of balls $\mathbbm{Z}^{n}$ will depend on the details of the metric you choose, but will asymptotically behave like $Cr^{n}$ for some constant $C$. Thus for any $m>n$, the above theorem implies we cannot have  $\mathbbm{Z}^{m}$ (with the usual metric course structure) equivalent to a subcoarse structure of $\mathbbm{Z}^{n}$ (with the usual metric coarse structure). 
\end{ex}

\end{proof}

\subsection{Correlation coarse structure}

Let $A$ be a discrete net of C*-algebras over $F$, let $B$ be any Banach algebra, and denote by $\mathcal{B}_{1}(A\rightarrow B)$ the space of linear maps from $A$ to $B$ with norm one. Then for $\phi\in \mathcal{B}_{1}(A\rightarrow B)$ we define the 2-point correlation functions

$$C_{\phi}(x,y)=\sup_{a\in (A_{x})_{1},b\in (A_{y})_{1}} \|\phi(ab)-\phi(a)\phi(b)\|.$$

and the higher order correlation

$$\widetilde{C}_{\phi}(F,G)=\sup_{a\in (A_{F})_{1},b\in (A_{G})_{1}} \|\phi(ab)-\phi(a)\phi(b)\|.$$

 We will be interested in the case where $B$ is a C*-algebra and $\phi$ is a ucp map (quantum channel), and even more specifically when $B=\mathbbm{C}$ so that $\phi$ is a state. In this case we can interpret $C_{\phi}(x,y)$ as the maximal correlation of observables localized at the sites $x$ and $y$ respectively. This is the main case we will be interested in.

\begin{defn} We define the correlation coarse structures by $\mathcal{E}_{\phi}:=\mathcal{E}_{C_{\phi}}$ and $\mathcal{E}_{\widetilde{\phi}}:=\mathcal{E}_{\widetilde{C}_{\phi}}$
\end{defn}

We have the following inclusions of coarse structures 
$$
\begin{tikzcd} % [row sep=small]
\mathcal{E}_{\phi}\arrow[r, phantom, sloped, "\subseteq"]\arrow[d, phantom, sloped, "\subseteq"] & \mathcal{E}_{\widetilde{\phi}}\arrow[d, phantom, sloped, "\subseteq"]\\
(\mathcal{E}_{\phi})_{\text{con}}\arrow[r, phantom, sloped, "\subseteq"] & (\mathcal{E}_{\widetilde{\phi}})_{\text{con}}\\
\end{tikzcd}
$$

\noindent where the bottom row are the connected quotients as in Example \ref{connected quotient}.

 The motivation is that many states associated with a Hamiltonian, local with respect to some metric $(X,d)$, have (higher order) decay of correlations. We use the term ``local" for Hamiltonian loosely, but typically it means the interaction terms become smaller in norm as the support sets grow in diameter, usually with some specified bound. Similarly we will say that a state is \textit{local} with respect to some coarse structure to mean the coarse structure conatins either $\mathcal{E}_{\phi}$ or $\mathcal{E}_{\widetilde{\phi}}$. We list some of the motivating examples below.
    
\begin{ex}\textbf{Ground states of gapped Hamiltonains}
For a local Hamiltonian of a quantum spin system on a discrete metric space, if $H$ has a unique ground state then this state will generically have correlations with (higher order) controlled decay with respect to the metric \cite{Hastings:2005pr, MR2217299}. The function $C_{\phi}$ will generically have exponential decay. A subclass of gapped Hamiltonains are those arising from commuting projector Hamiltonians with local topological order \cite{10.1063/1.3490195}. The correlations of these ground states will typically have controlled support.
\end{ex}

\begin{ex} \textbf{Equilibrium states}.
In classical or quantum spin systems with a local Hamiltonian on a discrete metric space, ``thermal equilibrium" states (Gibbs states or KMS states) have controlled (typically exponential) decay at low and high temperatures {Araki, Gibbs states of a one dimensional quantum lattice}. Near critical temperatures, however, the correlations of equilibrium states tend to decay super exponentially. The decay class is usually algebraic, and is indicative of scale-free behavior \cite{Bluhm2022exponentialdecayof}, \cite{10.1063/1.4921305}, \cite{PhysRevX.4.031019}

\end{ex}

\begin{ex}\label{prepared} \textbf{States prepared by locality preserving channels}
In the previous examples, the states in question are all associated to a local Hamiltonian. However, it is fairly easy to construct states local with respect to a coarse structure without reference to any Hamiltonian. Let $A$ be a discrete net over $X$ and let $\mathcal{E}$ be a fixed coarse structure. If $\Psi:A\rightarrow A $ is a locality preserving channel with respect to $\mathcal{E}$ (See Definion \ref{locality pres}), then in [Corollary \ref{dynamics and states}], we show that for any product state $\phi$, $\mathcal{E}_{\widetilde{\phi\circ \Psi}}\subseteq \mathcal{E}$. Locality preserving channels are easy to construct (e.g. (quantum) cellular automata or even more concretely, finite depth circuits, see Example \ref{QCA}), and thus we can easily construct a large variety of states whose coarse structures are (at least contained in) any desired coarse structure without having to rely on any a Hamiltonian to mediate locality. 
\end{ex}

\begin{ex} \textbf{Spin systems on Cayley graphs}. Let $\Gamma$ be a finitely generated discrete group. Recall that the set $\Gamma$ has a canonical coarse structure arising from a \textit{Cayley graph} for the group. Pick a finite generating set $S$, and connect points $x$ and $y$ by an edge if there is a generator $g\in S$ such that $y=gx$. The coarse structure for the path metric on this graph is indepenedent of the choice of finite generating set, and we denote it $\mathcal{E}_{\Gamma}$. We will assume we have picked some $S$ once and for all, and denote the metric by $d_{\Gamma}$.

We note that in this setting, there is a right action of $\Gamma$ on $\Gamma$, defined by $R_{g}(h):=hg$. Note that $R_{g}$ are isometries of $(\Gamma,d_{\Gamma})$

Now let $A$ be a discrete net of C*-algebras over $\Gamma$, where the latter is viewed as a set. An action $\Gamma\rightarrow \text{Aut}(A)$ is called \textit{covariant} if $g(A_{F})\subseteq A_{R_{g^{-1}}(F)}$. The standard example is a generalized spin system $A:=\otimes_{x\in \Gamma} A_{x}$, where all the $A_{x}$ are equal, and $\Gamma$ acts by permuting the tensor factors on the right.

The following theorem is a tool which makes it possible to build many examples of states local with respect to the coarse structure $\mathcal{E}_{\Gamma}$ by considering a state for a spin system on a Cayley graph invariant under translation symmetry.

\end{ex}

\begin{thm} Let $\Gamma$ be a finitely generated discrete group and $A$ a locally finite dimensional discrete covariant net of C*-algebras over $\Gamma$. For any $\Gamma$-invariant factor state $\phi$ on $A$, $\mathcal{E}_{\widetilde{\phi}}\subseteq \mathcal{E}_{\Gamma}$. In particular, if $\phi$ is the unique ground state for a translation invariant Hamiltonian on $A$, then $\mathcal{E}_{\widetilde{\phi}}\subseteq \mathcal{E}_{\Gamma}$.
\end{thm}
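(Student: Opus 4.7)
My approach is to reduce the claim to a cluster property for factor states and then use finite-dimensionality of local subalgebras to upgrade pointwise decay to uniform decay. Every $E \in \mathcal{E}_\Gamma$ sits inside a diagonal ball $B_R := \{(x,y) : d_\Gamma(x,y) \leq R\}$, and $B_R$ is closed under the right-diagonal action of $\Gamma$ because $d_\Gamma$ is right-invariant. By monotonicity of $\widetilde{C}_\phi$ in its arguments, it suffices to show $(\widetilde{C}_\phi)_{B_R}$ has controlled decay with respect to $\mathcal{E}_\Gamma$ for every $R$. Using covariance $\alpha_g(A_F)\subseteq A_{Fg^{-1}}$ together with the identity $B_R[x]g^{-1} = B_R[xg^{-1}]$ (a consequence of right-invariance of $d_\Gamma$) and the $\Gamma$-invariance of $\phi$, the substitution $a \mapsto \alpha_x(a)$, $b \mapsto \alpha_x(b)$ yields
\[\widetilde{C}_\phi(B_R[x], B_R[y]) \leq \widetilde{C}_\phi(B_R[e], B_R[yx^{-1}]).\]
Hence the whole problem reduces to showing $\widetilde{C}_\phi(B_R[e], B_R[h]) \to 0$ as $d_\Gamma(e, h) \to \infty$.

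\textbf{Cluster property from factoriality.} I would pass to the GNS triple $(\pi, H, \Omega)$ of $\phi$ and let $U$ denote the canonical unitary implementation of $\Gamma$ on $H$ fixing $\Omega$, which exists by $\Gamma$-invariance. Fix $b \in A_{B_R[e]}$ and consider the bounded net $\pi(\alpha_{h^{-1}}(b))$ indexed by $h \in \Gamma$. For any local $a \in A_F$ with $F$ finite, once $d_\Gamma(e, h)$ is large enough the support of $\alpha_{h^{-1}}(b)$ lies in a region disjoint from $F$, so by the disjoint-commutation axiom $[a, \alpha_{h^{-1}}(b)] = 0$. Any weak-operator cluster point $T$ of the bounded net therefore commutes with $\pi(a)$ for every local $a$, and hence with all of $\pi(A)$ by norm density, placing $T$ in $\pi(A)'$. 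On the other hand $T \in \pi(A)''$ as a weak limit of elements of $\pi(A)$, so $T \in Z(\pi(A)'') = \mathbbm{C}\mathbf{1}$ by factoriality, and evaluating at $\Omega$ gives $T = \phi(b)\mathbf{1}$. Since every cluster point is the same scalar, the bounded net converges weakly to $\phi(b)\mathbf{1}$, yielding $\phi(a\,\alpha_{h^{-1}}(b)) \to \phi(a)\phi(b)$ pointwise for each $a, b \in A_{B_R[e]}$.

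\textbf{Uniformization, conclusion, and main obstacle.} Because the Cayley graph is locally finite and each $A_x$ is finite-dimensional, $A_{B_R[e]}$ is finite-dimensional, so its unit ball is compact. Expanding $a, b$ in a fixed basis with uniformly bounded coefficients and using bilinearity upgrades the pointwise limit to uniform convergence on the unit ball, giving $\widetilde{C}_\phi(B_R[e], B_R[h]) \to 0$ as $d_\Gamma(e,h) \to \infty$. Combined with Step 1, this shows $\{(x,y) : \widetilde{C}_\phi(B_R[x], B_R[y]) \geq \epsilon\}$ is contained in some $B_M \in \mathcal{E}_\Gamma$, establishing higher controlled decay and hence $\mathcal{E}_{\widetilde{\phi}} \subseteq \mathcal{E}_\Gamma$. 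The ``in particular'' claim follows because the set of ground states is a face of the state space; since translation permutes ground states, a unique ground state is automatically $\Gamma$-invariant, and uniqueness together with the face property makes it extreme in the state space, hence pure, hence a factor state. The main obstacle is the cluster-property step: one must ensure the weak subnet argument produces an honest limit (which works precisely because every cluster point is forced to be the same scalar $\phi(b)\mathbf{1}$), and the passage from pointwise to uniform decay relies on local finite-dimensionality --- without it one would need an additional equicontinuity or normality argument.
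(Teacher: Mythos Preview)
Your proof is correct and follows essentially the same route as the paper: reduce via translation invariance to showing $(\widetilde{C}_\phi)_{E_R}(e,h) \to 0$ as $d_\Gamma(e,h)\to\infty$, invoke the factor-state cluster property, and use local finite-dimensionality to pass from pointwise to uniform decay. The only notable difference is that the paper cites the cluster property for factor states on quasi-local algebras as a black box from Bratteli--Robinson (which already yields uniformity in the ``far'' variable), whereas you prove it directly via GNS and weak compactness and then uniformize in both variables; this makes your argument more self-contained at the cost of a little extra work.
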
 

\begin{proof}
For $R\ge 0$, let $E_{R}:=\{(x,y)\in \Gamma\times \Gamma\ :\ d_{\Gamma}(x,y)\le R\}$. It suffices to check $(C_{\phi})_{E_{R}}$ has controlled decay for all $R\ge 0$.
Let $R, \epsilon>0$. Let $e\in \Gamma$ be the identity, and choose a basis $\{a_{1},\dots, a_{n}\}$ of $A_{N_{R}(e)}$ such that $||a_{i}||\le 1$. Since all norms on a finite-dimensional C*-algebra are equivalent, there is some constant $D>0$ such that if $a=\sum \lambda_{i}a_{i}$, $\sum_{i} |\lambda_{i}|\le D\|a\|$.

Now, recall a net of C*-algebras in our sense gives us a \textit{quasi-local} C*-algebra in the sense of Bratteli and Robinson \cite{MR887100} where the poset in question is the set $\mathcal{P}(\Gamma)$ of all subsets of $\Gamma$, and the $\perp$ relation is disjointness. Then by \cite[2.6.5]{MR887100}, since $\phi$ is a factor state, there exists finite subsets $F_{1}, \dots, F_{n}$ such that $\sup_{b\in (A_{F^{c}_{i}})_{1}}|\phi(a_{i}b)-\phi(a_{i})\phi(b)|<\frac{\epsilon}{D}$.

Choose $S>0$ sufficiently large such that $F_{1},\dots, F_{n}, N_{R}(e)\subseteq N_{S}(e)$. Then for any $a=\sum \lambda_{i}a_{i}\in (A_{N_{R}(e)})_{1}$ and $b\in A_{N_{S}(e)^{c}}$, we have 

$$|\phi(ab)-\phi(a)\phi(b)|\le \left(\sum|\lambda_{i}|\right)\frac{\epsilon}{D}<\epsilon.$$

Hence if we choose $T>R+S$, we have for any $x\in \Gamma$ with $d_{\Gamma}(e,x)>T$, 

$$(C_{\phi})_{E_{R}}(e,x)<\epsilon$$

Now, for any $y,z\in \Gamma$ with $d_{\Gamma}(y,z)>T$, then $d_{\Gamma}(e,zy^{-1})>T$, and by translation invariance of $\phi$ we have

$$(C_{\phi})_{E_{R}}(y,z)=(C_{\phi})_{E_{R}}(e,zy^{-1})<\epsilon.$$

For the final statement, if a ground state of a Hamiltonian $H$ is unique, then it is pure, hence a factor state \cite{MR887100}, and if $H$ is translation invariant, so is its unique ground state.
\end{proof}

It is easy to show that a if a discrete group $\Gamma$ is \textit{amenable}, then one can always do an avergaing type argument to actually produce an invariant state.

\begin{rem} In all the above examples, we have results that guarantee that $\phi$ is local with respect to some coarse structure $\mathcal{F}$. However, it is fairly easy to come across sufficient conditions that yields the stronger statement $\mathcal{F}=\mathcal{E}_{\phi}$. For example, if the state has some non-trivial uniform degree of entanglement between nearby sites $x,y$, then this should suffice. More precisely, suppose we are working on with coarse structure $\mathcal{F}$ corresponding to the quasi-geodesic metric space $(X,d)$, and that there exists $R>0$ and $c>0$ such that if $d(x,y)\le R$ then $C_{\phi}(x,y)\ge c$. Then we have $\mathcal{F}\subseteq \mathcal{E}_{\phi}$. This condition will occur fairly generically, for example, in states constructed from finite depth quantum circuits as in Example \ref{prepared}.
\end{rem}

\bigskip

\noindent \textbf{Close states}. Note that the space of bounded linear maps $\mathcal{B}_{1}(A\rightarrow B)$ with norm $1$ has a natural topology, with metric given by the norm $$||\phi-\psi||=\sup_{a\in (A)_{1}} \|\phi(a)-\psi(a)\|.$$

First we will show that the function $\mathcal{B}_{1}(A\rightarrow B)\rightarrow \text{BdFun}(X\times X)$ is continuous, where the latter denotes the space of bounded real-valued functions on $X\times X$ with the sup norm $|| \cdot ||_{\infty}$, which combined with Theorem \ref{lowersemicont} will give a lower semicontinuity of the map $\phi\mapsto \mathcal{E}_{\phi}$.

\begin{prop}\label{stability1}
Let $\phi,\psi\in \mathcal{B}_{1}(A\rightarrow B)$ such that $\|\phi-\psi\| \leq \delta$. Then $\|C_{\phi}-C_{\psi}\|_{\infty} \leq 3 \delta $. As a consequence, 

\begin{enumerate}
\item 
$\phi\mapsto C_{\phi}$ is continuous and $\phi\mapsto \mathcal{E}_{\phi}$ is lower semicontinuous in the sense of Theorem \ref{lowersemicont}.
\item 
If $\epsilon-3\delta>0$, then $\mathcal{E}_{\epsilon+3\delta,\phi}\subseteq\mathcal{E}_{\epsilon,\psi}\subseteq \mathcal{E}_{\epsilon-3\delta,\phi}$.
\end{enumerate}
\end{prop}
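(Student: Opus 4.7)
The plan is to first establish the pointwise estimate $|C_{\phi}(x,y) - C_{\psi}(x,y)| \leq 3\delta$ at every pair $(x,y) \in X \times X$; the sup-norm bound then follows by taking suprema over $X\times X$. The key input is the algebraic identity
\[
[\phi(ab) - \phi(a)\phi(b)] - [\psi(ab) - \psi(a)\psi(b)] = (\phi - \psi)(ab) - \phi(a)(\phi - \psi)(b) - (\phi - \psi)(a)\psi(b),
\]
obtained by adding and subtracting $\phi(a)\psi(b)$. For $a \in (A_{x})_{1}$ and $b \in (A_{y})_{1}$ we have $\|ab\|, \|\phi(a)\|, \|\psi(b)\| \leq 1$, so each of the three terms on the right has norm at most $\delta$ by hypothesis on $\|\phi-\psi\|$. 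The triangle inequality on the left, followed by the reverse triangle inequality on $\|\cdot\|$, yields $\bigl|\|\phi(ab) - \phi(a)\phi(b)\| - \|\psi(ab) - \psi(a)\psi(b)\|\bigr| \leq 3\delta$. A one-sided sup-swap (fix $\psi$-side, dominate by $C_{\phi}(x,y) + 3\delta$, then take the sup over $a,b$, and symmetrically) then gives $|C_{\phi}(x,y) - C_{\psi}(x,y)| \leq 3\delta$, hence $\|C_{\phi} - C_{\psi}\|_{\infty} \leq 3\delta$.

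Consequence (1) then follows immediately: continuity of $\phi \mapsto C_{\phi}$ is a restatement of the sup-norm bound, and lower semicontinuity of $\phi \mapsto \mathcal{E}_{\phi}$ is a direct application of Theorem \ref{lowersemicont} to the (now-verified) continuity. For consequence (2), the pointwise bound translates into containments of the level sets $E_{C_{\phi},\epsilon} = \{(x,y)\, :\, C_{\phi}(x,y) \geq \epsilon\}$: any pair with $C_{\phi}(x,y) \geq \epsilon + 3\delta$ automatically satisfies $C_{\psi}(x,y) \geq \epsilon$, yielding $E_{C_{\phi},\epsilon+3\delta} \subseteq E_{C_{\psi},\epsilon}$, and symmetrically $E_{C_{\psi},\epsilon} \subseteq E_{C_{\phi},\epsilon-3\delta}$ whenever $\epsilon - 3\delta > 0$. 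Passing to the generated monogenic coarse structures (as in the proof of Theorem \ref{univ. cs}) gives the claimed chain of inclusions.

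I do not foresee any serious obstacle here. The argument is essentially a triangle-inequality estimate in the target Banach algebra $B$, combined with the general framework already established in Theorems \ref{univ. cs} and \ref{lowersemicont}. The only points requiring care are choosing the algebraic splitting so that each error term carries exactly one factor of $\phi - \psi$ applied to an element of norm at most one (so that the estimate depends linearly on $\delta$ with constant $3$), and orienting the inclusions in (2) in the correct direction when transferring the pointwise bound to the level sets.
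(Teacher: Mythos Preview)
Your proposal is correct and follows essentially the same approach as the paper: both arguments split the difference $[\phi(ab)-\phi(a)\phi(b)]-[\psi(ab)-\psi(a)\psi(b)]$ into three terms each carrying one factor of $\phi-\psi$ applied to a norm-one element, bound each by $\delta$, take suprema, and then read off the level-set containments to obtain the chain of coarse-structure inclusions. The only cosmetic difference is that you write the splitting as a single algebraic identity, whereas the paper applies the triangle inequality in two stages (first isolating $\|\phi(ab)-\psi(ab)\|\le\delta$ and then $\|\phi(a)\phi(b)-\psi(a)\psi(b)\|\le 2\delta$).
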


\begin{proof}

For any $a\in (A_{x})_{1}, b\in (A_{y})_{1}$, we have

\begin{align*}
\| \phi(ab)-\phi(a)\phi(b)\|&\le \|\phi(ab)-\psi(ab)\|+\|\phi(a)\phi(b)-\psi(a)\psi(b)\|+\|\psi(ab)-\psi(a)\psi(b)\|\\
&\le 3\delta + \|\psi(ab)-\psi(a)\psi(b)\|
\end{align*}

Taking the supremum over $a\in (A_{x})_{1}, b\in (A_{y})_{1} $, we obtain

$$C_{\phi}(x,y)\le C_{\psi}(x,y)+3\delta$$

By switching $\phi$ and $\psi$ and putting these together, we obtain

$$||C_{\phi}-C_{\psi}||\le 3\delta.$$

If $(x,y)\in E^{\phi}_{\epsilon}=\{(x,y)\in X\times X\ : C_{\phi}(x,y)\ge \epsilon\}$, then $C_{\phi}(x,y)\ge C_{\phi}(x,y)-3\delta$, and thus $(x,y)\in E^{\psi}_{\epsilon-3\delta}$. Thus 

$$\mathcal{E}_{\epsilon, \phi}=\langle E^{\phi}_{\epsilon}\rangle\subseteq \langle E^{\psi}_{\epsilon-3\delta} \rangle=\mathcal{E}_{\epsilon-3\delta, \psi}.$$

The other inclusion is obtained by switching $\phi$ and $\psi$.

\end{proof}

\begin{cor}
    Let $\phi, \psi\in \mathcal{B}_{1}(A\rightarrow B)$ such that $\mathcal{E}_{\phi}$ and $\mathcal{E}_{\psi}$ are monogenic, and let $L$ be the length of the largest (open) interval contained in the itnersection of the stable ranges of $\phi$ and $\psi$. If  $\|\phi-\psi\|<\frac{L}{6}$, then $\mathcal{E}_{\phi}=\mathcal{E}_{\psi}$.
\end{cor}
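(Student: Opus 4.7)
The plan is to exhibit a single value of $\epsilon$ at which Proposition \ref{stability1}(2) can be applied to sandwich $\mathcal{E}_{\psi}$ between two copies of $\mathcal{E}_{\phi}$. Set $\delta := \|\phi - \psi\|$, so by hypothesis $6\delta < L$. Since the stable range of any $f$ with $\mathcal{E}_{f}$ monogenic is downward closed in $(0,\infty)$ (noted immediately after the definition of \emph{coarsely stable}), the intersection of the stable ranges of $\phi$ and $\psi$ is again downward closed. Hence the largest open interval contained in it has the form $(0, L)$.

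First I would choose $\epsilon \in (3\delta, L - 3\delta)$, which is nonempty precisely because $6\delta < L$. This simultaneously ensures $\epsilon - 3\delta > 0$ and that the three numbers $\epsilon - 3\delta,\, \epsilon,\, \epsilon + 3\delta$ all lie in $(0, L)$, hence in both stable ranges. In particular,
$$\mathcal{E}_{\epsilon + 3\delta, \phi} \;=\; \mathcal{E}_{\epsilon - 3\delta, \phi} \;=\; \mathcal{E}_{\phi}, \qquad \mathcal{E}_{\epsilon, \psi} \;=\; \mathcal{E}_{\psi}.$$

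Next, applying Proposition \ref{stability1}(2) (which is available because $\epsilon - 3\delta > 0$) yields the sandwich
$$\mathcal{E}_{\phi} \;=\; \mathcal{E}_{\epsilon + 3\delta, \phi} \;\subseteq\; \mathcal{E}_{\epsilon, \psi} \;=\; \mathcal{E}_{\psi} \;\subseteq\; \mathcal{E}_{\epsilon - 3\delta, \phi} \;=\; \mathcal{E}_{\phi},$$
and therefore $\mathcal{E}_{\phi} = \mathcal{E}_{\psi}$.

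There is no serious obstacle here: once one locates the correct $\epsilon$, the argument is pure bookkeeping on top of Proposition \ref{stability1}(2). The role of the constant $1/6$ is transparent—we need $3\delta$ of slack on each side of $\epsilon$ in order to invoke that proposition while keeping both $\epsilon - 3\delta$ and $\epsilon + 3\delta$ inside the common stable range, and this requires an open window of length at least $6\delta$ inside the intersection of the two stable ranges.
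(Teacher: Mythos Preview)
Your proof is correct and follows essentially the same approach as the paper: choose $\epsilon$ so that $\epsilon-3\delta$ and $\epsilon+3\delta$ both lie in the common stable range, then apply Proposition~\ref{stability1}(2) to obtain the sandwich $\mathcal{E}_{\phi}=\mathcal{E}_{\epsilon+3\delta,\phi}\subseteq \mathcal{E}_{\epsilon,\psi}\subseteq \mathcal{E}_{\epsilon-3\delta,\phi}=\mathcal{E}_{\phi}$. You are slightly more explicit than the paper in justifying that the intersection of the stable ranges is the interval $(0,L)$ via downward closure, but the argument is otherwise identical.
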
 

\begin{proof}
Let $\delta=\|\phi-\psi\|$ and choose $\epsilon$ in the intersection of the two stable ranges such that $(\epsilon-3\delta, \epsilon+3\delta)\subseteq (0,L)$. Then from the above proposition, we see

$$\mathcal{E}_{\phi}=\mathcal{E}_{\epsilon+3\delta, \phi}\subseteq \mathcal{E}_{\epsilon, \psi}\subseteq \mathcal{E}_{\epsilon-3\delta, \phi}=\mathcal{E}_{\phi},$$

hence $\mathcal{E}_{\phi}=\mathcal{E}_{\epsilon,\psi}=\mathcal{E}_{\psi}$.
\end{proof}

We would like it to be the case that if states are sufficiently close, their coarse structures agree on the nose. However, as we have pointed out the above proposition does not give us enough information to accurately compare coarse structures. However, with some additional assumptions we will be able to conclude such a result. For any state $\phi$, let $\phi^{u}$ denote the state $\phi(u\cdot u^{*})$.

\begin{rem}\label{Tensor prod 1} \textbf{Choice of tensor product decomposition for states on spin systems}. We emphasize here that the correlation coarse structure depends heavily on our choice of local subalgebras. For example, we can consider different tensor product decompositions of spin systems. In concrete contexts (e.g. simulating a state on a quantum processor) it is easy to make a natural choice for the tensor product decomposition. However, it is interesting to ask how different things can look with respect to a different tensor product decomposition. Suppose $\phi,\psi\in \mathcal{B}_{1}(A\rightarrow B)$ , and we have an automorphism $\alpha: A\rightarrow A$ such that $\phi\circ \alpha=\psi$. Then consider a new point-structure defined over the same base set $X$ by $\{A^{\prime}_{x}=\alpha(A_{x})\}_{x\in X}$. Note that this site structure is local (resp. generating) if and only if $\{A_{x}\}_{x\in X}$ is. Now, with the new site structure, we compute

\begin{align*}C^{\prime}_{\phi}(x,y)&=\sup_{a^{\prime}\in (A^{\prime}_{x})_{1},b^{\prime}\in (A^{\prime}_{y})_{1}} \|\phi(a^{\prime}b^{\prime})-\phi(a^{\prime})\phi(b^{\prime})\|\\
&=\sup_{a\in (A_{x})_{1},b\in (A_{y})_{1}} \|\phi(\alpha(ab))-\phi(\alpha(a))\phi(\alpha(b))\|\\
&=C_{\psi}(x,y)
\end{align*}

Thus with respect to the new site structure $\{A^{\prime}_{x}\}$, the two point correlation function for $\phi$ looks exactly like the correlation function for $\psi$ with respect to the original coarse structure.

To reiterate, two different site structures $\{A^{\prime}_{x}\}_{x\in X}$ and $\{A_{x}\}_{x\in X}$ for the same state $\phi$ can give wildly different coarse structures and decay rates. In particular, we could have a state that from one site structure looks like a 3D state with 0 correlation length, while with respect to another site structure, it looks like a 2D  state with infinite correlation length (for formal definitions of correlation length, see \ref{defn:correlationlength})

The obvious explanation for what is going on is that the automorphism $\alpha$ taking $\phi$ to $\psi$ is wildly ``non-local". In other words, we expect the point-like operators $a\in A^{\prime}_{x}$ to be wildly non-local with respect to the original site structure. We might imagine that if we assume some sort of ``locality-preserving" type of condition  on $\alpha$ that maybe we would fare better in this regard. See section \ref{loc pres chan} for further discussion.
\end{rem}

\subsection{Dynamical coarse structure}\label{subsection 4.3}

In our definition of discrete nets of algebra, we have a built-in algebraic version of locality, which is a formalization of the intuition that for observables localized in different regions of space, for \textit{any} state of the system we expect to be able to form a joint probability distribution for the measurement outcomes (i.e. the elements of the algebra commute). This shows that for quantum systems, we expect spatial separation to be connected to commutativity of localized operators.

An important dynamical manifestation of this idea is the \textit{Lieb-Robinson bound} for quantum spin systems \cite{MR312860, PhysRevB.69.104431, MR2217299}. Suppose $H$ is a (for example, strictly) local Hamiltonian for a spin system on a square lattice. Let $\alpha_{t}$ be the automorphism of the quasi-local algebra obtained by evolving the system for time $t$. Then there exists positive constants $C_{t},A$ such that for $a\in A_{x}, b\in A_{y}$

$$||[\alpha_{t}(a),b]||\le C_{t} e^{-A d(x,y)}$$

This tells us that the ``spread of information" has a speed limit in quantum spin systems, and has a wide range of important theoretical consequences in mathematical physics. If we view a channel $\alpha: A\rightarrow A$ as a time evolution, this leads us to define the following:

\begin{defn} Let $A$ be a discrete net of algebras and $\alpha\in B_{1}(A\rightarrow A)$. Define $Q_{\alpha}(x,y):=\sup_{a\in (A_{x})_{1}, b\in (A_{y})_{1}} ||[\alpha(a),b]||$. Then the \textit{dynamical coarse structure} of $\alpha$ is defined as $\mathcal{E}_{\alpha}:=\mathcal{E}_{Q_\alpha}$.
\end{defn}

We note that we are often interested in unitary dynamics, where our channel $\alpha\in \text{Aut}(A)$. In this case, we have $Q_{\alpha}(x,y)=Q_{\alpha^{-1}}(x,y)$, hence $\mathcal{E}_{\alpha}=\mathcal{E}_{\alpha^{-1}}$.

\begin{ex} \textbf{Local time evolution on a spin system}.
Let $(X,d)$ be a discrete metric space, and suppose for each $x\in X$, $A_{x}\cong M_{n_{x}}(\mathbbm{C})$ for some positive integer $n_{x}$. Define $A=\otimes_{x\in X} A_{x}$. Now suppose we have a local time evolution $t\mapsto \alpha_{t}$, which is derived by integerating a local Hamiltonian (for example, satisfying the hypothesis of \cite{MR2256615}). Then the Lieb-Robinson bounds (e.g. \cite[Theorem 2.1]{MR2256615}) show that for any time $t\ge 0$, we have the inclusion of coarse structures $\mathcal{E}_{\alpha_{t}}\subseteq \mathcal{E}_{(X,d)}$. In other words, $Q_{\alpha_{t}}$ decays asymptotically with respect to the metric $d$.
\end{ex}

\begin{ex} \textbf{Local time evolution on a spin sub-system}. Suppose $L$ is a discrete metric space and let $N\subseteq L$ is a metric subspace. Pick a dimension $d$ and consider the spin sysetms with quasi-local algebras $\displaystyle B=\otimes_{L} M_{d}(\mathbbm{C})$ and $A=\otimes_{N} M_{d}(\mathbbm{C})$. Then we have a natural inclusion of quasi-local algebras $A\le B$. Furthermore, there is a canonical condition expectation $E: B\rightarrow A $ tracing out operators localized away from $L$. 

Suppose now we have a dynamics given by a quantum channel $\alpha: B\rightarrow B$, and consider the induced quantum channel $\phi:A\rightarrow A$ defined by $\phi(a):=E(\alpha(a\otimes 1))$. Then if $a\in \mathcal{A}_{x}$ and $b\in \mathcal{A}_{y}$ we compute

\begin{align*}
||[\phi(a),b]||& =||E(\alpha(a\otimes 1))b-bE(\alpha(a\otimes 1)||\\
&=||E(\alpha(a\otimes 1)(b\otimes 1)-(b\otimes 1)(a\otimes 1)||\\
&\le ||\alpha(a\otimes 1)(b\otimes 1)-(b\otimes 1)(a\otimes 1)||\\
&\le Q_{\alpha}(x,y)
\end{align*}.

Thus $Q_{\phi}(x,y)\le Q_{\alpha}(x,y)$ for all $x,y\in N$. Therefore $$\mathcal{E}_{\phi}\subseteq \mathcal{E}_{\alpha}|_{N}.$$
\end{ex}
In other words, the coarse structure is contained in the restriction to $N$ of the coarse structure on $L$ induced by  $\alpha$.

\begin{ex}\textbf{Quantum Cellular Automata}\label{QCA} . Let $A:=\otimes_{x\in X} M_{n}(\mathbbm{C})$ be a spin system and suppose $(X,d)$ is (for simplicity) a metric arising from a locally finite graph. A \textit{quantum cellular automata} (QCA) is an automorphism $\alpha\in \text{Aut}(A)$ such that $Q_{\alpha}$ has $0$ length (in the sense of Definition \ref{defn:correlationlength}) with respect to the metric coarse structure $(X,d)$. This is equivalent to the automorphism having \textit{bounded spread} \cite{https://doi.org/10.48550/arxiv.quant-ph/0405174, QCAReview1, Farrelly2020reviewofquantum}. 

For sufficiently non-trivial QCA $\alpha\in \text{Aut}(A)$, $\mathcal{E}_{\alpha}$ will recover the metric coarse structure. Suppose there exists an $a\in M_{n}(\mathbbm{C})$ such that the support of the operator $\alpha(a_{x})$ contains the ball of radius 1 about $x$, $N_{1}(x)$. Here $a_{x}$ is the identification of $a$ in the factor $M_{n}(\mathbbm{C})$ identified with the site $x$. Then for any $y$ adjacent to $x$ $[\alpha(a_{x}), A_{y}]\ne 0$, so that $Q_{\alpha}(x,y)>0$, if we set $C_{x}=\text{min}\{Q_{\alpha}(x,y)\ : d(x,y)=1\}$. Then $C_{x}>0$. If we assume sufficient regularity (e.g. some kind of translation symmetry) then $\text{inf}_{x\in X} \{C_{x}\}=C>0$. This implies that the controlled set $F=\{(x,y)\ : Q_{\alpha}(x,y)\ge C\}\in \mathcal{E}_{\alpha}$ has the adjacency relation $E:=\{(x,y)\ :\ d(x,y)=1\}$ as a subset, hence $E$ is controlled in $\mathcal{E}_{\alpha}$. But the coarse structure generated by $E$ is precisely the coarse structure of the graph, hence $\mathcal{E}_{d}\subseteq \mathcal{E}_{\alpha}$. The other inclusion follows from the definition of QCA.

One class of QCA which will provide a large class of easily constructible dynamics with accessible coarse structures are \textit{finite-depth circuits}. Suppose we have a spin system $A=\otimes_{x\in X} M_{d}(\mathbbm{C})$, and suppose we have a metric structure on $X$ arising from a discrete metric space. To define a depth one circuit, choose a partition of $X$ into finite subsets $F_{i}$ with $\sup_{i} \text{diam}(F_{i})<R$ for some constant $R>0$. Then choose a unitary $u_{i}\in A_{F_{i}}$. Note that since the $F_{i}$ are mutually disjoint, the $u_{i}$ mutually commute in $A$, so the map $\prod_{i}\text{Ad}(u_{i})$ is well defined on the algebra $A_{loc}$, and extends to an automorphism $\alpha$ of $A$ called a \textit{depth one circuit}.

A finite-depth circuit is an automorphism $\alpha\in\text{Aut}(A)$ which can be written as a composition $\alpha_{1}\circ \alpha_{2}\circ \dots \alpha_{n}$ where each $\alpha_{i}$ is a depth one circuit.

\end{ex}

\bigskip

\noindent \textbf{Close automorphisms}. There are several natural topologies on the space $\mathcal{B}_{1}(A\rightarrow A)$, but the one we consider here is the norm topology, where 

$$\| \alpha\|=\sup_{a\in A} \frac{\|\alpha(a)\|}{\|a\|}.$$

\noindent We give here an analogue of the Proposition \ref{stability1} for dynamical coarse structures.

\begin{prop}\label{dynamicalclos}
Let $\alpha,\beta\in \mathcal{B}_{1}(A\rightarrow A)$ such that $\|\alpha-\beta\| \leq \delta$. Then $\|Q_{\alpha}-Q_{\beta}\|_{\infty} \leq 2 \delta $. As a consequence, 

\begin{enumerate}
\item 
$\alpha\mapsto Q_{\alpha}$ is continuous and $\alpha\mapsto \mathcal{E}_{\alpha}$ is lower semicontinuous in the sense of Theorem \ref{lowersemicont}.
\item 
If $\epsilon-2\delta>0$, then $\mathcal{E}_{\epsilon+2\delta,\alpha}\subseteq\mathcal{E}_{\epsilon,\beta}\subseteq \mathcal{E}_{\epsilon-2\delta,\alpha}$.
\end{enumerate}
\end{prop}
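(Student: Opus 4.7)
The plan is to mirror the proof of Proposition \ref{stability1}, exploiting the fact that the only nonlinear ingredient in the correlation case (the product $\phi(a)\phi(b)$) is absent here, which should actually improve the constant from $3$ to $2$. The main inequality $\|Q_\alpha - Q_\beta\|_\infty \le 2\delta$ is the only thing that requires real work; parts (1) and (2) then follow formally from this and the already-established machinery, exactly as in the correlation case.

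For the main inequality, I would fix $x, y \in X$ and $a \in (A_x)_1$, $b \in (A_y)_1$, and estimate
\begin{align*}
\|[\alpha(a), b]\| &= \|[\alpha(a) - \beta(a), b] + [\beta(a), b]\| \\
&\le \|[\alpha(a) - \beta(a), b]\| + \|[\beta(a), b]\|\\
&\le 2\|\alpha(a) - \beta(a)\|\, \|b\| + \|[\beta(a), b]\|\\
&\le 2\delta + \|[\beta(a), b]\|,
\end{align*}
using $\|a\| \le 1$, $\|b\| \le 1$, and the definition of $\|\alpha - \beta\|$. Taking the supremum over $a \in (A_x)_1$ and $b \in (A_y)_1$ gives $Q_\alpha(x,y) \le Q_\beta(x,y) + 2\delta$, and swapping the roles of $\alpha$ and $\beta$ yields $|Q_\alpha(x,y) - Q_\beta(x,y)| \le 2\delta$. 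Taking the supremum over $(x,y) \in X \times X$ concludes.

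For consequence (1), continuity of $\alpha \mapsto Q_\alpha$ in the sup-norm on $\mathrm{BdFun}(X\times X)$ is immediate from the inequality, and lower semicontinuity of $\alpha \mapsto \mathcal{E}_\alpha$ then follows by composing with Theorem \ref{lowersemicont}. For consequence (2), I would mimic the argument at the end of Proposition \ref{stability1}: if $(x,y) \in E^\alpha_{\epsilon+2\delta}$, meaning $Q_\alpha(x,y) \ge \epsilon + 2\delta$, then the inequality gives $Q_\beta(x,y) \ge Q_\alpha(x,y) - 2\delta \ge \epsilon$, so $(x,y) \in E^\beta_\epsilon$; passing to the generated monogenic coarse structures yields $\mathcal{E}_{\epsilon+2\delta,\alpha} \subseteq \mathcal{E}_{\epsilon,\beta}$, and the other inclusion is obtained by interchanging the roles of $\alpha$ and $\beta$.

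I do not anticipate any real obstacle; the one subtlety worth flagging is that the bound $\|[c, b]\| \le 2\|c\|\|b\|$ is what cuts the constant from $3\delta$ (as in the state case, where the product $\phi(a)\phi(b)$ contributed an extra term) down to $2\delta$, and this is sharp since the commutator is a commutator rather than a difference of products of values of a linear functional.
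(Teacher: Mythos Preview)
Your proposal is correct and follows essentially the same approach as the paper's proof: the paper writes out the commutator difference as $\|\alpha(a)b-\beta(a)b\|+\|b\alpha(a)-b\beta(a)\|$ rather than bundling it as $\|[\alpha(a)-\beta(a),b]\|$, but this is the same estimate, and the derivation of consequence (2) is likewise identical up to which inclusion is proved first.
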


\begin{proof}

For any $a\in (A_{x})_{1}, b\in (A_{y})_{1}$, we have

\begin{align*}
\| \alpha(a)b-b\alpha(a)\|&\le \|\alpha(a)b-\beta(a)b\|+\|b\alpha(a)-b\beta(a)\|+\|\beta(a)b-b\beta(a)\|\\
&\le 2\delta + \|\beta(a)-b\beta(a)\|
\end{align*}

Taking the supremum over $a\in (A_{x})_{1}, b\in (A_{y})_{1} $, we obtain

$$Q_{\alpha}(x,y)\le Q_{\beta}(x,y)+2\delta$$

By switching $\alpha$ and $\beta$ and putting these together, we obtain

$$||Q_{\alpha}-Q_{\beta}||\le 2\delta.$$

If $(x,y)\in E^{\alpha}_{\epsilon}$, then $Q_{\alpha}(x,y)\ge Q_{\alpha}(x,y)-2\delta$, and thus $(x,y)\in E^{\beta}_{\epsilon-2\delta}$. Thus 

$$\mathcal{E}_{\epsilon, \alpha}=\langle E^{\alpha}_{\epsilon}\rangle\subseteq \langle E^{\beta}_{\epsilon-2\delta} \rangle=\mathcal{E}_{\epsilon-2\delta, \beta}.$$

The other inclusion is obtained by switching $\alpha$ and $\beta$.

\end{proof}

Another natural question to ask in the dynamical setting which we did not see for correlations is, given two dynamics $\alpha,\beta:A\rightarrow A$, what happens when we compose them? In particular what is $\mathcal{E}_{\alpha\circ \beta}$, in terms of $\mathcal{E}_{\alpha}$ and $\mathcal{E}_{\beta}$? It's clear that the coarse structure can certainly get smaller. For example, for a spin system $A:=\otimes_{x\in X} M_{n}(\mathbbm{C})$, $\mathcal{E}_{\alpha\circ \alpha^{-1}}=\mathcal{E}_{\text{Id}_{A}}$, which reduces to the trivial diagonal coarse structure. Thus we expect it would be difficult to give a precise characterization of $\mathcal{E}_{\alpha\circ \beta}$ without specific, detailed knowledge of the channels.

Nevertheless, if two channels are local with respect to a coarse structure $\mathcal{E}$, we would expect their composition to also be local with respect to $\mathcal{E}$. This leads us to guess that $\mathcal{E}_{\alpha\circ \beta}\subseteq \langle \mathcal{E}_{\alpha}\cup \mathcal{E}_{\beta}\rangle$. Indeed, we have the following:

\begin{thm}
Let $A=\otimes_{x\in X} M_{n}(\mathbbm{C})$ be a spin system, and let $\alpha,\beta\in \mathcal{B}_{1}(A\rightarrow A)$. Then $\mathcal{E}_{\alpha\circ \beta}\subseteq \langle \mathcal{E}_{\tilde{\alpha}}\cup \mathcal{E}_{\beta}\rangle$. In particular if $\alpha$ has higher order commutator decay, then $\mathcal{E}_{\alpha\circ \beta}\subseteq \langle \mathcal{E}_{\alpha}\cup \mathcal{E}_{\beta}\rangle$.
\end{thm}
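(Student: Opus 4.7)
Fix $\epsilon>0$ and set $\eta=\epsilon/4$. It suffices, by Proposition~\ref{genra}, to exhibit a controlled set $E\in\langle\mathcal{E}_{\tilde\alpha}\cup\mathcal{E}_\beta\rangle$ outside of which $Q_{\alpha\circ\beta}(x,y)<\epsilon$. The overall strategy is a triangle inequality: approximate $\beta(a)$, for $a\in(A_x)_1$, by a local operator $c_a$ supported in an $\mathcal{E}_\beta$-controlled neighborhood of $x$, and then apply higher-order commutator decay of $\tilde Q_\alpha$ to bound $\|[\alpha(c_a),b]\|$ by $\tilde Q_\alpha(F(x),\{y\})$, which is small on an $\mathcal{E}_{\tilde\alpha}$-controlled set of pairs. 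Composing the two controlled sets lands us in the join.

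\textbf{Step 1 (localization of $\beta(a)$).} By controlled decay of $Q_\beta$, choose $E_\beta\in\mathcal{E}_\beta$ so that $Q_\beta(x,z)$ is small whenever $(x,z)\notin E_\beta$, and set $F(x):=E_\beta[x]$. Let $c_a:=E_{F(x)}(\beta(a))$ be the tracial conditional expectation of $\beta(a)$ onto $A_{F(x)}$. Because $A$ is a spin system, $E_{F(x)}$ factors as an iterated Haar average of the single-site expectations $E_z$ for $z\in F(x)^c$, and the inductive application of $\|c-E_z(c)\|\le \sup_{b\in (A_z)_1}\|[c,b]\|$ yields
\[
\|\beta(a)-c_a\|\le \sum_{z\in F(x)^c\cap S}Q_\beta(x,z)+2\|\beta(a)-\beta(a)'\|
\]
for any local approximant $\beta(a)'\in A_S$. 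Using density of local operators in $A$ together with an $\eta$-net in the finite-dimensional ball $(A_x)_1\cong (M_n(\mathbbm{C}))_1$, this gives $\|\beta(a)-c_a\|<\eta$ uniformly in $a\in(A_x)_1$, after possibly enlarging $E_\beta$.

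\textbf{Step 2 (higher-order decay and composition).} Since $c_a\in A_{F(x)}$ with $\|c_a\|\le 1+\eta$, for any $b\in(A_y)_1$,
\[
\|[\alpha(\beta(a)),b]\|\le \|[\alpha(c_a),b]\|+2\|\beta(a)-c_a\|\le (1+\eta)\,\tilde Q_\alpha(F(x),\{y\})+2\eta.
\]
Higher-order controlled decay of $\tilde Q_\alpha$ with respect to $\mathcal{E}_{\tilde\alpha}$, applied after enlarging $E_\beta$ so as to sit inside an element of $\mathcal{E}_{\tilde\alpha}$, produces $F\in\mathcal{E}_{\tilde\alpha}$ such that $\tilde Q_\alpha(F(x),\{y\})<\eta$ whenever $(x,y)\notin E_\beta^{-1}\circ F$. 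Since $E_\beta^{-1}\circ F\in\langle\mathcal{E}_{\tilde\alpha}\cup\mathcal{E}_\beta\rangle$, taking suprema in $a$ and $b$ gives $Q_{\alpha\circ\beta}(x,y)<\epsilon$ outside this controlled set, establishing $\mathcal{E}_{\alpha\circ\beta}\subseteq\langle\mathcal{E}_{\tilde\alpha}\cup\mathcal{E}_\beta\rangle$. The ``in particular'' is then immediate, since higher-order commutator decay for $\alpha$ is precisely the condition $\mathcal{E}_{\tilde\alpha}=\mathcal{E}_\alpha$.

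\textbf{Main obstacle.} The hardest step is Step 1: converting point-level commutator control of $\beta$ (the decay of $Q_\beta$ at individual sites) into a genuine norm estimate for the localization of $\beta(a)$ on the region $F(x)$. The tensor factorization $A=A_F\otimes A_{F^c}$ characteristic of spin systems is what enables the Haar-average telescoping, but a naive sum $\sum_{z\in F(x)^c}Q_\beta(x,z)$ need not converge, so one must restrict the sum to the finite support of a local approximant of $\beta(a)$ and make the approximation uniform over $a$ via compactness of $(A_x)_1$. The second subtlety is compatibility with Step 2: the higher-order decay axiom for $\tilde Q_\alpha$ is stated for sets in $\mathcal{E}_{\tilde\alpha}$, so one must enlarge the $\mathcal{E}_\beta$-region $F(x)$ to fit inside an $\mathcal{E}_{\tilde\alpha}$-controlled region while keeping the resulting composition inside $\langle\mathcal{E}_{\tilde\alpha}\cup\mathcal{E}_\beta\rangle$; this is the precise place where the two coarse structures must interact.
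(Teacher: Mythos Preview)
Your strategy is the same as the paper's: localize $\beta(a)$ by its conditional expectation onto an $\mathcal{E}_\beta$-controlled neighborhood $G=F[x]$, then bound $\|[\alpha(E_{G}(\beta(a))),b]\|$ via the higher-order decay encoded in $\mathcal{E}_{\tilde\alpha}$, and combine with a triangle inequality. The paper's proof is essentially a compressed version of your outline.

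The substantive difference, and the place where your argument has a genuine gap, is Step~1. The paper does \emph{not} telescope over single sites; it invokes a localization lemma (the citation to \cite{MR3050163}) that passes in one stroke from the commutator hypothesis on $\beta$ to the bound $\|\beta(a)-E_{G}(\beta(a))\|<2\delta$, with no summation. Your telescoping, by contrast, produces a bound $\sum_{z\in F(x)^c\cap S}Q_\beta(x,z)$ whose number of summands is $|F(x)^c\cap S|$; even if each summand is below $\eta$, nothing controls the cardinality. The fix you propose --- choose $S$ by compactness on $(A_x)_1$, then enlarge $E_\beta$ --- would need $E_\beta[x]\supseteq S_x$ for all $x$ simultaneously with $E_\beta\in\mathcal{E}_\beta$. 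But the set $\{(x,z):z\in S_x\}$ being $\mathcal{E}_\beta$-controlled is exactly a \emph{controlled spread} statement for $\beta$, which is strictly stronger than pointwise decay of $Q_\beta$ and is not among the hypotheses. To see that site-wise commutator smallness alone cannot force localization, take $c=\tfrac{1}{N}\sum_{i=1}^N\sigma^z_i$: then $\sup_{b\in(A_j)_1}\|[c,b]\|\le 2/N$ for every site $j$, yet $\|c-E_\emptyset(c)\|=\|c\|=1$. So the iterated single-site averaging does not yield the needed estimate; the paper's one-shot lemma is what replaces this step.

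Your Step~2 and the paper's are essentially identical; the paper simply asserts the existence of $\widetilde F\in\mathcal{E}_{\tilde\alpha}$ making $\|[\alpha(E_G(\beta(a))),b]\|<\delta$ off $\widetilde F$, and then takes $F\cup\widetilde F\in\langle\mathcal{E}_{\tilde\alpha}\cup\mathcal{E}_\beta\rangle$ rather than the composition $E_\beta^{-1}\circ F$ you write (both lie in the generated coarse structure).
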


\begin{proof}

Let $\epsilon>0$. Choose let $0<\delta<\frac{\epsilon}{3}$. There exists a controlled set $F\in \mathcal{E}_{\beta}$ such that for $x,y\notin F$, $||[\beta(a),b]||<\delta$ for $a\in (A_{x})_{1}$ and $b\in (A_{y})_{1}$. This implies by \cite{MR3050163} that if we set $G=F[x]$, $$||\beta(a)-E_{G}(\beta(a))||<2\delta,$$
\noindent where $E_{G}: A\rightarrow A_{G}$ is the canonical conditional expections, hence

$$\|[\alpha(\beta(a)),b]\|\le\|[\alpha(E_{G}(\beta(a))),b]\|+\|[\alpha(\beta(a)-E_{G}(\beta(a))),b]\|\le \|[\alpha(E_{G}(\beta(a))),b]\|+2\delta $$

\noindent But there exists some controlled set $\widetilde{F}\in \mathcal{E}_{\tilde{\alpha}}$ with $\|[\alpha(E_{G}(\beta(a))),b]\|<\delta$ for $x,y\notin \widetilde{F}$. Therefore, for $x,y\notin F\cup \widetilde{F}$ we have

$$\|[\alpha(\beta(a)),b]\|\le 3\delta<\epsilon$$

\noindent But $F\cup \widetilde{F}\in \langle \mathcal{E}_{\alpha}\cup \mathcal{E}_{\beta}\rangle$, hence $\alpha\circ \beta$ is local with respect to the latter.
\end{proof}

%%%%%%%%%%%%%%%%%%%%%%%%%%%%%%%%%%%%%%%%%%%%%%

 \begin{rem}\textbf{Choice of tensor product decomposition for spin systems}. We begin this subsection by examining the choice of site structure for dynamics on a spin system. We begin by noting that if $\phi, \psi\in \mathcal{B}_{1}(A\rightarrow A)$ and there eixsts an automorphism $\alpha \in \text{Aut}(A)$ such that $\phi=\alpha^{-1}\circ \psi \circ \alpha$, then if we define the new coarse structure $\{A^{\prime}_{x}:=\alpha(A_{x})$, then $Q_{\phi}=Q^{\prime}_{\psi}$. Indeed, since automorphisms are norm isometries,

\begin{align*}
    Q_{\phi}(x,y)&=\text{sup}_{a\in (A_{x})_{1},b\in (A_{y})_{1}}||\phi(a)b-b\phi(a)||\\
    &=\text{sup}_{a\in (A_{x})_{1},b\in (A_{y})_{1}}||\alpha^{-1}(\psi(\alpha(a)))b-b\alpha^{-1}(\psi(\alpha(a)))||\\
    &=\text{sup}_{a\in (A_{x})_{1},b\in (A_{y})_{1}}||\psi(\alpha(a))\alpha(b)-\alpha(b)\psi(\alpha(a))||\\
    &=\text{sup}_{a^{\prime}\in (A^{\prime}_{x})_{1},b^{\prime}\in (A_{y})_{1}}||\psi(a^{\prime})b^{\prime}-b^{\prime}\psi(a^{\prime})||\\
    &=Q^{\prime}_{\psi}(x,y).
\end{align*} 

This is a very similar situation to the case of correlation structures discussed in \ref{Tensor prod 1}. However, one major difference is that for spin systems with $A=\otimes_{x\in X} M_{d}(\mathbbm{C})$, as far as we know there is not a complete classification of automorphisms up to conjugacy.  However, similarly to the case of states, we expect that essentially any coarse structure can be realized via a given automorphism simply by picking an alternate site structure.
\end{rem}

\subsection{Algebraic entanglement}\label{OtherCoarse}

In this section, we discuss another kind of coarse structure that is specifically interesting for nets of algebras that are \textit{not} concrete spin systems.

 For any concrete spin system, the local observables algebras factorize as a product of algebras assigned to points, i.e. $A_{F}\cong \otimes_{x\in F} A_{x}$. However, for most examples of nets of algebras, this is not the case. Operators that are not products of operators localized in proper subregions can arise as the regions get larger. We describe this situation as \textit{algebraic entanglement}. However, we would expect that for regions $F$ and $G$ that are sufficiently far apart, new operators \textit{should not} emerge which are localized in $F\cup G$ that are not generated by operators localized in $F$ and $G$ respectively.

 Given an abstract net on $X$ whose local algebras are finite dimensional, we can measure algebraic entanglement with the function $X\times X\rightarrow \mathbbm{N}\cup \{0\}$ by

$$D(x,y):=\text{dim}(A_{x\cup y})-\text{dim}(\langle A_{x}, A_{y}\rangle).$$

We can then define the \textit{algebraic entanglement coarse structure} as the universal coarse structure

$$\mathcal{E}_{ae}(A)=\mathcal{E}_{D}.$$ in the sense of Theorem \ref{univ. cs}

\begin{ex}\textbf{Fusion spin chains}: Fusion spin chains are nets of algebras typically defined on $\mathbbm{Z}$ that arise in the context of 2+1D topologically ordered spin systems and categorical dualities of spin chains. Given a fusion category $\mathcal{C}$ and a strongly tensor generating object $X$, there is an associated net of algebras $\mathcal{A}(\mathcal{C},X)$ over $\mathbbm{Z}$. We refer the reader to \cite{Jones2023DHRBO}, \cite{JL24}. It is easy to see that $\mathcal{E}_{ae}(\mathcal{A}(\mathcal{C},X))$ is the standard metric coarse structure on $\mathbbm{Z}$.
\end{ex}

\begin{ex}
    Suppose we have a net $A$ of C*-algebras over $X$. Suppose further that we are given a metric $d$ on $X$. Choose a $t\ge 0$. Then we can define a \textit{refinement} $A^{d,t}$ of $A$ by setting

    $$A^{d,t}_{F}=\langle x\in A_{F}\ :\ x\in A_{G}, G\subseteq F\ , \text{diam}(G)\le t\}$$

    This assmebles into a subnet of $A$, with the property that the algebraic entanglement coarse structure  $\mathcal{E}_{ae}(A^{d,t})$ is contained in the metric coarse structure $\mathcal{E}_{(X,d)}$. Indeed, if $d(x,y)>t$, then $A^{d,t}_{x\cup y}=\langle A^{d,t}_{x}, A^{d,t}_{y}\rangle $ by definition hence $D(x,y)=0.$ Thus the set $E_{k}=\{(x,y)\in X\times X\ : D(x,y)\ge k\}\subseteq \{(x,y)\ :\ d(x,y)\le t\}$, and hence is controlled in the metrix coarse structure of $(X,d)$.
\end{ex}

\section{Stability of coarse structures under quasi-local perturbation} \label{stability of cs}

Local perturbations are operations that only change your system in a small region. Quasi-local perturbations are norm limits of these operations. Stability of a property (of a state, Hamiltonian, or dynamics) on a spin system under quasi-local perturbation is an important criterion for the property to be considered macroscopic. Our goal will be to show that the universal correlation coarse structures defined for states and quantum dynamics are invariant under quasi-local perturbation under favorable circumstances. 

Typically quasi-local perturbations are considered for Hamiltonians, hence we need to give a formal definition for states and dynamics directly.

\begin{defn}\label{quasilocalperturbation} Let $A$ be a discrete net of C*-algebras over $X$. A unital, completely positive map $\Psi:A\rightarrow A$ is called \textit{localized} if there exists a finite subset $F\subseteq X$ such that $$\Psi|_{A_{F^{c}}}=\text{Id}|_{A_{F^{c}}}.$$

A channel $\Psi$ is a \textit{quasi-local perturbation} if there exists a sequence $\Psi_{n}$ of localized channels such that $\|\Psi-\Psi_{n}\|\rightarrow 0$.
\end{defn}

Note that the norm we use for channels is the \textit{uniform norm}, i.e. $\|\Phi\|=\sup_{a\in (A)_{1}}\|\Phi(a)\|$.

%\end{proof}

\begin{prop}\label{prop:inner auto}
Every inner automorphism of a discrete net $A$ of C*-algebras is a quasi-local perturbation.
\end{prop}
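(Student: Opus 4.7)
The plan is to write $\Psi = \mathrm{Ad}(u)$ for a unitary $u \in A$, then approximate $u$ by unitaries $u_n$ living in finite local subalgebras $A_{F_n}$, and take $\Psi_n := \mathrm{Ad}(u_n)$. Each $\Psi_n$ will be automatically localized, because for $a \in A_{F_n^c}$ we have $[u_n, a] = 0$ by the commutation axiom $[A_{F_n}, A_{F_n^c}] = 0$, so $\mathrm{Ad}(u_n)$ fixes $A_{F_n^c}$ pointwise. Convergence in uniform norm then follows from the standard telescoping estimate
\[
\|\mathrm{Ad}(u)(a) - \mathrm{Ad}(u_n)(a)\| \le \|u - u_n\|\,\|a\| + \|a\|\,\|u^* - u_n^*\| = 2\|u-u_n\|\,\|a\|,
\]
so $\|\mathrm{Ad}(u) - \mathrm{Ad}(u_n)\| \le 2\|u - u_n\|$.

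The substantive step is therefore producing the unitary approximants $u_n \in A_{F_n}$, which I would do in two stages. First, density of $\bigcup_F A_F$ in $A$ (axiom (3) of a discrete net) gives elements $v_n \in A_{F_n}$ with $\|u - v_n\| \to 0$. Second, for large $n$ one has $\|v_n^* v_n - 1\| \le 2\|v_n - u\|$ (and similarly for $v_n v_n^*$), so eventually $v_n^* v_n$ and $v_n v_n^*$ are invertible and $v_n$ is invertible in $A_{F_n}$ (here I use that $A_{F_n}$ is a unital C*-subalgebra of $A$ sharing the unit, so continuous functional calculus of self-adjoint elements of $A_{F_n}$ stays in $A_{F_n}$). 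The polar decomposition $v_n = u_n |v_n|$ with $u_n := v_n |v_n|^{-1} \in A_{F_n}$ then produces a genuine unitary of $A_{F_n}$ (hence of $A$).

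It remains to check that $u_n \to u$ in norm. Writing $u_n - v_n = v_n(|v_n|^{-1} - 1)$ and using $\||v_n|^{-1} - 1\| \le \||v_n|^{-1}\|\cdot \|1 - |v_n|\| \le \||v_n|^{-1}\| \cdot \|v_n^*v_n - 1\|$ (the second inequality from $|(1-t)| \le |1 - t^2|$ for $t \ge 0$ close to $1$, applied via functional calculus), one obtains a bound of the form $\|u_n - v_n\| \le C \|v_n - u\|$ with a constant $C$ that is uniformly controlled once $\|v_n - u\|$ is small enough. Combining this with $\|v_n - u\| \to 0$ gives $\|u_n - u\| \to 0$, which by the telescoping estimate above completes the proof.

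The main obstacle is purely technical: one has to be careful that the functional calculus used to unitarize $v_n$ is performed inside the subalgebra $A_{F_n}$ (so that $u_n$ really is localized) and that one obtains a full unitary rather than just a partial isometry. Both points are handled by the observation that for $v_n$ sufficiently close to the unitary $u$, the element $v_n$ is invertible in $A_{F_n}$, so the polar decomposition yields a unitary and not merely an isometry; no additional assumption on $A$ beyond being a unital discrete net is needed.
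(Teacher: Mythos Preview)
Your proof is correct and follows essentially the same approach as the paper: approximate $u$ by unitaries $u_n$ in the local subalgebras and use the telescoping estimate $\|\mathrm{Ad}(u)-\mathrm{Ad}(u_n)\|\le 2\|u-u_n\|$. The paper simply cites a standard reference for the existence of such unitary approximants, whereas you spell out the polar-decomposition argument explicitly, but otherwise the arguments coincide.
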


\begin{proof}
Let $u\in U(A)$. Choose an increasing sequence of finite subsets $F_{1}\subseteq F_{2}\subseteq \dots$ such that $\bigcup_{n} F_{n}=X$. 

It is well known (using standard C*-algebra arguments, for example see the proof of Lemma 3.1 in \cite{MR0112057}) that there exists unitaries $u_{i}\in A_{F_{i}}$ such that $\|u_i-u\|\rightarrow 0$. But each $\text{Ad}(u_i)$ is localized in the bounded set $F_i$. If we choose $i$ sufficiently large such that $\|u-u_i\|<\epsilon$, then for $\|a\|\le 1$, we have \begin{align*}
\|uau^{*}-u_iau^{*}_i\|&=\|u_iau^{*}+(u-u_i)au^{*}-u_iau^{*}_i\|\\
&\le \epsilon+ \|u_ia(u-u_{i})\|\le 2\epsilon.
\end{align*}

hence $$\|\text{Ad}(u_i)-\text{Ad}(u)\|\rightarrow 0$$

\end{proof}

The next result shows that for nice states, the universal 2-point correlation coarse structure is stable under quasi-local perturbation.

\begin{thm}\label{full stability} Let $A$ be a discrete net of C*-algebras over the set $X$, and $\phi:A\rightarrow B$ be any ucp map such that $\mathcal{E}_{\phi}$ is proper and monogenic. Then for any quasi-local perturbation $\Psi:A\rightarrow A$, $(\mathcal{E}_{\phi\circ \Psi})_{con}=(\mathcal{E}_{\phi})_{con}$. 
\end{thm}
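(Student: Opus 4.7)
The plan is to prove each inclusion in $(\mathcal{E}_{\phi\circ \Psi})_{con}=(\mathcal{E}_{\phi})_{con}$ separately after reducing to the case of a strictly localized perturbation. The reduction uses that $\|\Psi-\Psi_n\|\to 0$ for localized $\Psi_n$ (supported in a finite $F_n$), so by Proposition \ref{stability1} and Theorem \ref{lowersemicont}, $\mathcal{E}_{\phi\circ\Psi}\subseteq \langle \bigcup_n\mathcal{E}_{\phi\circ\Psi_n}\rangle$. The central algebraic observation for the localized case is that $\Psi_n|_{A_{F_n^c}}=\Id$ places $A_{F_n^c}$ in the multiplicative domain of $\Psi_n$, so $\Psi_n(ab)=\Psi_n(a)b$ for any $a\in A$ and $b\in A_{F_n^c}$. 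Applied to $a\in A_x$, $b\in A_y$, this yields the key equality $C_{\phi\circ\Psi_n}(x,y)=C_{\phi}(x,y)$ whenever $x,y\notin F_n$, while for $x\in F_n$, $y\notin F_n$ it gives only the higher-order bound $C_{\phi\circ\Psi_n}(x,y)\leq \widetilde C_\phi(F_n,\{y\})$.

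For the reverse inclusion $(\mathcal{E}_\phi)_{con}\subseteq (\mathcal{E}_{\phi\circ\Psi})_{con}$, monogenicity of $\mathcal{E}_\phi$ via Proposition \ref{coarsegeodesic} gives some $\epsilon_0>0$ with $\mathcal{E}_\phi=\langle E^\phi_{\epsilon_0}\rangle$, so it suffices to show $E^\phi_{\epsilon_0}\in(\mathcal{E}_{\phi\circ\Psi})_{con}$. Picking $n$ with $\|\Psi-\Psi_n\|<\epsilon_0/6$, the key equality and Proposition \ref{stability1} together give $E^\phi_{\epsilon_0}\cap ((X\setminus F_n)\times(X\setminus F_n))\subseteq E^{\phi\circ\Psi}_{\epsilon_0/2}\in\mathcal{E}_{\phi\circ\Psi}$. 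The complementary set $E^\phi_{\epsilon_0}\cap ((F_n\times X)\cup(X\times F_n))$ is a finite union (over $x\in F_n$) of slices $\{x\}\times E^\phi_{\epsilon_0}[x]$; each slice is finite since singletons are $\mathcal{E}_\phi$-bounded and $\mathcal{E}_\phi$ is proper. Hence the complement is finite and lies in the finite-subsets coarse structure, so $E^\phi_{\epsilon_0}\in(\mathcal{E}_{\phi\circ\Psi})_{con}$.

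For the forward inclusion $\mathcal{E}_{\phi\circ\Psi}\subseteq (\mathcal{E}_\phi)_{con}$, lower semicontinuity reduces matters to showing each $\mathcal{E}_{\phi\circ\Psi_n}\subseteq (\mathcal{E}_\phi)_{con}$. I would split $E^{\phi\circ\Psi_n}_\epsilon$ along the partition $X\times X=(X\setminus F_n)^2\sqcup F_n^2\sqcup\bigl((F_n\times (X\setminus F_n))\cup((X\setminus F_n)\times F_n)\bigr)$: the first piece is contained in $E^\phi_\epsilon$ by the key equality and so is controlled in $\mathcal{E}_\phi$; the second is finite and lies in the trivial connected coarse structure; the mixed third piece is the principal obstacle. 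The hard part is that the bound $C_{\phi\circ\Psi_n}(x,y)\leq\widetilde C_\phi(F_n,\{y\})$ involves a higher-order correlation of $\phi$ that is not \emph{a priori} controlled by the 2-point hypothesis on $\mathcal{E}_\phi$ alone. To handle it, I would exploit finite-dimensionality of $A_{F_n}$ by expanding any $a'\in (A_{F_n})_1$ in a fixed basis of products of single-site operators with a uniform $\ell^1$ bound on the coefficients, then apply a telescoping/clustering argument to bound $|\phi(a'b)-\phi(a')\phi(b)|$ by a finite combination of 2-point correlations $C_\phi(z,y)$ with $z\in F_n$. Properness of $\mathcal{E}_\phi$ would then make $\{y:C_{\phi\circ\Psi_n}(x,y)\geq\epsilon\}$ finite for each $x\in F_n$, giving finiteness of the mixed piece and completing the proof.
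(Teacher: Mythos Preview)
Your reverse inclusion $(\mathcal{E}_\phi)_{con}\subseteq(\mathcal{E}_{\phi\circ\Psi})_{con}$ is fine and matches the paper in spirit: monogenicity reduces to a single generator $E^\phi_{\epsilon_0}$, the key equality handles $(X\setminus F_n)^2$, and properness of $\mathcal{E}_\phi$ makes the remainder finite.

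The forward inclusion has a genuine gap at precisely the place you flag as ``the hard part.'' Two problems. First, the bound $C_{\phi\circ\Psi_n}(x,y)\leq\widetilde C_\phi(F_n,\{y\})$ for $x\in F_n$ presupposes $\Psi_n(A_x)\subseteq A_{F_n}$. The definition of ``localized in $F_n$'' only says $\Psi_n|_{A_{F_n^c}}=\Id$; it says nothing about where $\Psi_n$ sends $A_{F_n}$. Your multiplicative-domain computation does give $[\Psi_n(a),b]=0$ for $b\in A_{F_n^c}$, so in a concrete spin system (where $A_{F_n^c}'\cap A=A_{F_n}$) you recover $\Psi_n(a)\in A_{F_n}$, but the theorem is stated for general discrete nets where this commutant identity need not hold. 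Second, and more seriously, even granting that bound, the telescoping reduction of $\widetilde C_\phi(F_n,\{y\})$ to 2-point correlations does not go through: writing $a'=a_{z_1}\cdots a_{z_m}$ and telescoping $\phi(a'b)-\phi(a')\phi(b)$ produces terms like $\phi(a_{z_1}c)-\phi(a_{z_1})\phi(c)$ with $c=a_{z_2}\cdots a_{z_m}b$, which are again higher-order. The hypotheses only control $C_\phi$ via $\mathcal{E}_\phi$; there is no mechanism to deduce finiteness of $\{y:\widetilde C_\phi(F_n,\{y\})\ge\epsilon\}$ from properness of $\mathcal{E}_\phi$ alone.

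The paper organizes the argument differently and does not attempt any analytic bound on the mixed slices. Rather than lower semicontinuity followed by a piecewise attack on $E_{\phi\circ\Psi_n,\epsilon}$, it first establishes $(\mathcal{E}_\phi)_{con}=(\mathcal{E}_{\phi\circ\Psi_i})_{con}$ for each localized $\Psi_i$ by invoking the coarse-geometric fact that for a \emph{proper} coarse structure $\mathcal{E}$ and finite $C\subseteq X$ one has $\mathcal{E}_{con}=\langle\mathcal{E}|_{(X-C)^2}\cup\mathcal{F}(X\times X)\rangle$, combined with the restriction equality $C_\phi|_{(X-F_i)^2}=C_{\phi\circ\Psi_i}|_{(X-F_i)^2}$ and a transfer of properness through the restriction. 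Having this (and the transfer of the stable range to $\phi\circ\Psi_i$), it passes to $\phi\circ\Psi$ by a single sandwich $E_{\epsilon+\delta,\phi\circ\Psi_i}\subseteq E_{\epsilon,\phi\circ\Psi}\subseteq E_{\epsilon-\delta,\phi\circ\Psi_i}$ from Proposition~\ref{stability1} with $\epsilon\pm\delta$ in the stable range and a fixed $i$ chosen for that $\epsilon$. The content of the mixed piece is thus absorbed into the identification $(\mathcal{E}_{\phi\circ\Psi_i})_{con}=(\mathcal{E}_\phi)_{con}$ via the restriction lemma rather than estimated directly.
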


\begin{proof}

Choose an increasing sequence $F_{1}\subseteq F_{2} \dots $ of finite subsets with $\Psi_{i}$ localized in $F_{i}$ and $\|\Psi_{i}-\Psi\|\rightarrow 0$. Note that $\|\phi\circ \Psi-\phi\circ\Psi_{i}\|\rightarrow 0$.

First we claim that $(\mathcal{E}_{\phi})_{con}=(\mathcal{E}_{\phi\circ \Psi_{i}})_{con}$. To see this, recall that if $\mathcal{E}$ is a proper coarse structure, then for any finite subset $C\subseteq X$, $\mathcal{E}_{con}=\langle \mathcal{E}|_{(X-C)\times (X-C)}\bigcup \mathcal{F}(X\times X)\rangle$, where as usual $\mathcal{F}(X\times X)$ denotes the collection of finite subsets of $X\times X$, and $\mathcal{E}|_{(X-C)\times (X-C)}=\mathcal{E}\cap \left((X-C)\times (X-C)\right)$.

We note that since $\Phi_{i}$ is localized in $F_{i}$,

$$C_{\phi}|_{(X-F_{i})\times (X-F_{i})}=C_{\phi\circ \Psi_{i}}|_{(X-F_{i})\times (X-F_{i})}$$

Now note that for any finite set $C\subseteq X$, a coarse structure $\mathcal{E}$ on $X$ is proper if and only if the coarse structure $\mathcal{E}|_{(X-C)\times (X-C)}$ is proper. Indeed if $\mathcal{E}$ is proper, clearly it's restriction is proper. If $\mathcal{E}$ is not proper, there is an infinite, bounded subset $D\subseteq X$. But then $D\cap (X-C)\subseteq X-C$ is infinite since $C$ is finite, but is bounded in $\mathcal{E}|_{(X-C)\times (X-C)}$. Thus $\mathcal{E}|_{(X-C)\times (X-C)}$ is not proper.

This implies 

$$\mathcal{E}_{\phi}|_{(X-F_{i})\times (X-F_{i})}=\mathcal{E}_{\phi\circ \Psi_i}|_{(X-F_{i})\times (X-F_{i})},$$ hence $\mathcal{E}_{\phi\circ \Psi}$ is proper, and thus we have 

$$(\mathcal{E}_{\phi})_{con}= \langle (\mathcal{E}_{\phi})|_{(X-F_i)\times (X-F_i)} \cup \mathcal{F}(X\times X)\rangle=\langle (\mathcal{E}_{\phi\circ \Psi_{i}})|_{(X-F_i)\times (X-F_i)} \cup \mathcal{F}(X\times X)\rangle=(\mathcal{E}_{\phi\circ \Psi_i})_{con}$$.

Furthermore, if we define for any state $\psi$ the set $E^{F_i}_{\epsilon, \psi}=\{(x,y)\in (X-F_{i})\times (X-F_{i})\ :\ C_{\psi}(x,y)\ge \epsilon\}$, we have 

$$E^{F_i}_{\epsilon, \phi}=E^{F_i}_{\epsilon, \phi\circ \Psi_{i}}$$

\noindent and thus if $\phi$ is stable, for $\epsilon$ in the stable range we have 

$$(\mathcal{E}_{\phi\circ \Psi_i})_{con}=(\mathcal{E}_{\phi})_{con}=\langle E^{F_i}_{\epsilon, \phi}\cup \mathcal{F}(X\times X)\rangle= \langle E^{F_i}_{\epsilon, \phi\circ \Psi_{i}}\cup \mathcal{F}(X\times X)\rangle.$$

\noindent Thus $\phi\circ \Psi_{i}$ is connectively stable, with the same stable range as $\phi$.

Now, let $\epsilon>0$ in the interior of the stable range for $\phi$. Choose $i$ such that $\delta=3||\phi\circ \Psi-\phi\circ \Psi_i||$ satisfies $\epsilon-\delta>0$ and $\epsilon+\delta$ is in the stable range for $\phi$. Then by Proposition \ref{stability1}, $E_{\epsilon+\delta,\phi\circ \Psi_{i}} \subseteq E_{\epsilon,\phi\circ \Psi}\subseteq E_{\epsilon-\delta,\phi\circ \Psi_{i}}$ 
    
$$(\mathcal{E}_{\phi})_{con}=\langle E_{\epsilon-\delta,\phi\circ \Psi_{i}}\bigcup \mathcal{F}(X\times X)\rangle\subseteq \langle E_{\epsilon, \phi\circ \Psi}\bigcup \mathcal{F}(X\times X)\rangle \subseteq \langle E_{\epsilon+\delta,\phi\circ \Psi_{i}}\bigcup \mathcal{F}(X\times X)\rangle \subset (\mathcal{E}_{\phi})_{con}. $$ 

\noindent Since this is true for all $\epsilon>0$, we have $(\mathcal{E}_{\phi})_{con}=(\mathcal{E}_{\phi\circ \Psi})_{con},$

\noindent giving the desired result.

\end{proof}

\begin{rem} The connectivity in the statement of the theorem cannot be removed. Consider any product state $\phi$, so that  on a quantum spin system $A=\otimes_{x\in X} M_{d}(\mathbbm{C})$ and suppose we take an entangling unitary $u$ localized in some finite region $F$. Then $\mathcal{E}_{\phi}$ is diagonal, while $\mathcal{E}_{\phi\circ \text{Ad}(u)}$ contains non-diagonal finite subsets.
\end{rem}

\begin{cor}\label{normclose}
If $\phi$ is a pure state on $A$ such that $\mathcal{E}_{\phi}$ is monogenic and proper, then for any pure state $\psi$ with $d(\phi,\psi)<2$, we have $(\mathcal{E}_{\phi})_{con}=(\mathcal{E}_{\psi})_{con}$.
\end{cor}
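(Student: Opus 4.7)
The plan is to reduce the corollary to Theorem \ref{full stability} by invoking the classical structure theory of pure states on a simple C*-algebra. The quasi-local algebra $A$ of a quantum spin system is a UHF algebra and hence simple, and for pure states $\phi,\psi$ on such an algebra the following dichotomy is a standard consequence of Kadison's transitivity theorem (see \cite{MR887100}, around Corollary 2.6.11, together with Powers' result for UHF algebras): either $\|\phi-\psi\|=2$, or $\|\phi-\psi\|<2$ and there exists a unitary $u\in A$ with $\psi=\phi\circ \mathrm{Ad}(u)$. The assumption $d(\phi,\psi)<2$ places us in the second case, so I can pick such a unitary $u$.

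From there the proof assembles in two steps. First, Proposition \ref{prop:inner auto} identifies the inner automorphism $\mathrm{Ad}(u)$ as a quasi-local perturbation, since each approximating $u_i\in A_{F_i}$ yields an $\mathrm{Ad}(u_i)$ localized in the finite set $F_i$ with $\|\mathrm{Ad}(u_i)-\mathrm{Ad}(u)\|\to 0$. Second, the hypothesis that $\mathcal{E}_{\phi}$ is proper and monogenic puts us squarely in the scope of Theorem \ref{full stability} applied to the ucp map $\Psi:=\mathrm{Ad}(u)$, giving
\[
(\mathcal{E}_{\psi})_{con}=(\mathcal{E}_{\phi\circ \mathrm{Ad}(u)})_{con}=(\mathcal{E}_{\phi})_{con}.
\]

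The main obstacle is conceptual rather than technical: one must invoke a non-trivial C*-algebraic fact (pure states close in norm are unitarily equivalent via a unitary in $A$) that is specific to simple C*-algebras. For a genuine quantum spin system this is automatic and matches the intended reading of the corollary, but for a general discrete net $A$ with non-trivial algebraic entanglement, simplicity of $A$ is an extra hypothesis that should be flagged (or the intertwining unitary must be produced by hand, which is unlikely to hold in full generality). Modulo this standard input, the argument is entirely a packaging of Proposition \ref{prop:inner auto} and Theorem \ref{full stability}, with no new estimates required.
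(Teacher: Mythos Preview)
Your proposal is correct and follows essentially the same approach as the paper's own proof: invoke the classical result that two pure states at norm distance less than $2$ are unitarily equivalent via a unitary in $A$ (the paper cites Glimm--Kadison \cite{MR115104} for this), then apply Proposition \ref{prop:inner auto} and Theorem \ref{full stability}. Your remark that this step implicitly uses simplicity of $A$ (automatic for genuine spin systems but not for arbitrary discrete nets) is a fair observation that the paper leaves unstated.
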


\begin{proof}
If two pure states are close, there is a unitary $u\in A$ such that $\psi=\phi\circ \text{Ad}(u)$ \cite{MR115104}. The result then follows from \ref{full stability} and \ref{prop:inner auto}.
\end{proof}

\begin{rem} \textbf{Closure properties for states}. Let $\mathcal{E}$ be monogenic and connected, and let $S(\mathcal{E})$ be the set of pure states whose connected correlation coarse structure is $\mathcal{E}$. Then the above theorem shows that $S(\mathcal{E})$ is a closed subspace of all pure states in the \textit{norm topology} on the state space. However, in general $S(\mathcal{E})$ is \textit{not} closed in the weak $*$-topology. Consider, for example, a discrete net of C*-algebras whose quasi-local algebra $A$ is simple (and norm separable). Then by \cite{MR1994863}, for any two pure states $\phi$ and $\psi$, there exists an approximately inner automorphism $\alpha\in \text{Aut}(A)$ such that $\phi\circ \alpha=\psi$. By approximately inner, we mean there exists a sequence of unitaries $\{u_{i}\in A\}$ such that for all $a\in A$, $||u^{*}_{i}au_{i}-\alpha(a)||\rightarrow 0$. 

Now, if $\phi\in S(\mathcal{E})$ and $\psi\notin S(\mathcal{E})$, let $\alpha$ and $\{u_i\}$ be as above. If $\mathcal{E}$ is monogenic, then $\phi^{u_i}\in S(\mathcal{E})$, but the approximation condition implies $$|\phi(u^{*}_{i}au_{i})-\phi(\alpha(a))|=|\phi(u^{*}_{i}au_{i})-\psi(a)|\rightarrow 0,$$ hence $\phi^{u_i}\rightarrow \psi$ in the weak*-topology. But $\psi$ was arbitrary, so the set $S(\mathcal{E})$ is dense in the pure state space. As a consequence, we do not expect the assignment $\phi\mapsto \mathcal{E}_{\phi}$ (or any variation, e.g. $(\mathcal{E}_{\phi})_{con}$) to be continuous in any sense with respect to the weak*-topology, in contrast to the case of the norm metric of Corollary \ref{normclose}.

\end{rem}

%%%%%%%%%%%%%%%%%%%%%%%

Now we state a similar stability result for the quantum dynamical coarse structure.

\begin{thm}\label{thm: dynamicalcoarsestability} Let $A$ be a discrete net of C*-algebras over the set $X$, and $\phi:A\rightarrow B$ be any ucp map such that $\mathcal{E}_{\phi}$ is proper and monogenic. Then for any quasi-local perturbation  $\Psi:A\rightarrow A$, $(\mathcal{E}_{\phi\circ \Psi})_{con}=(\mathcal{E}_{\phi})_{con}$. 
\end{thm}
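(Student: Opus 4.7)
The plan is to mirror the proof of Theorem \ref{full stability} line by line, substituting the commutator function $Q_\phi$ for the correlation function $C_\phi$ and invoking Proposition \ref{dynamicalclos} in place of Proposition \ref{stability1}. First, by the definition of quasi-local perturbation, I choose an approximating sequence of localized channels $\Psi_i$, each supported in a finite set $F_i$, with $\|\Psi - \Psi_i\| \to 0$; unitality and complete positivity of $\phi$ then give $\|\phi \circ \Psi - \phi \circ \Psi_i\| \to 0$ as well.

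The key localization step is if anything cleaner for dynamics than for correlations. Since $\Psi_i|_{A_{F_i^c}} = \Id$, for any $x \notin F_i$ and $a \in A_x$ we have $\Psi_i(a) = a$, so $\phi(\Psi_i(a)) = \phi(a)$, and consequently
$$Q_{\phi \circ \Psi_i}(x,y) = \sup_{a \in (A_x)_1,\, b \in (A_y)_1} \|[\phi(\Psi_i(a)), b]\| = Q_\phi(x,y)$$
for \emph{every} $y \in X$, not merely for $y \notin F_i$. Restricting to $(X - F_i) \times (X - F_i)$ in particular yields the equality of restricted coarse structures $\mathcal{E}_\phi|_{(X - F_i)\times(X - F_i)} = \mathcal{E}_{\phi \circ \Psi_i}|_{(X - F_i)\times(X - F_i)}$.

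Next I would invoke the properness observation used in the proof of Theorem \ref{full stability}: for a proper coarse structure $\mathcal{E}$ on $X$ and any finite $C \subseteq X$, one has $\mathcal{E}_{con} = \langle \mathcal{E}|_{(X - C)\times(X - C)} \cup \mathcal{F}(X \times X)\rangle$, since properness forces any controlled set to meet $C \times X \cup X \times C$ only in a finite set. Applied with $C = F_i$, this gives $(\mathcal{E}_\phi)_{con} = (\mathcal{E}_{\phi \circ \Psi_i})_{con}$, and the monogenicity hypothesis transfers so that $\phi \circ \Psi_i$ inherits the same connective stable range as $\phi$.

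Finally, choose $\epsilon$ in the interior of this common stable range and pick $i$ large enough that $\delta := 2\|\phi \circ \Psi - \phi \circ \Psi_i\|$ satisfies both $\epsilon - \delta > 0$ and $\epsilon + \delta$ still in the stable range. Proposition \ref{dynamicalclos} then provides the sandwich
$$E_{\epsilon + \delta,\, \phi \circ \Psi_i} \subseteq E_{\epsilon,\, \phi \circ \Psi} \subseteq E_{\epsilon - \delta,\, \phi \circ \Psi_i},$$
and passing to the generated connected coarse structures forces $(\mathcal{E}_\phi)_{con} = (\mathcal{E}_{\phi \circ \Psi})_{con}$. I do not anticipate any new obstacle beyond those already handled in Theorem \ref{full stability}: the proper/monogenic bookkeeping is identical, and the only substantive change is trading the constant $3$ appearing in Proposition \ref{stability1} for the constant $2$ of Proposition \ref{dynamicalclos}. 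The only point that warrants a second look is that localization in the dynamical sense only guarantees control of $Q_{\phi \circ \Psi_i}(x,y)$ when $x \notin F_i$ (not symmetrically in both arguments), but this asymmetric control already implies equality of the symmetric restriction to $(X - F_i)\times(X - F_i)$, which is all the properness lemma requires.
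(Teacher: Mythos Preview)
Your proposal is correct and takes essentially the same approach as the paper: the paper's proof is a one-line reference saying the argument ``exactly follows the proof of \ref{full stability} with $Q_{\alpha}$ in place of $C_{\phi}$, and Proposition \ref{dynamicalclos} in place of Proposition \ref{stability1},'' which is precisely what you have written out in detail, including the correct adjustment of the constant from $3$ to $2$.
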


\begin{proof}
The proof exactly follows the proof of \ref{full stability} with $Q_{\alpha}$ in place of $C_{\phi}$, and Proposition \ref{dynamicalclos} in place of Proposition \ref{stability1}.
\end{proof}

\begin{cor}\label{cor:dynamicalcorrollary}If $A$ is simple and $\mathcal{E}_{\alpha}$ is monogenic and proper for $\alpha\in \text{Aut}(A)$, then if $\beta\in \text{Aut}(A)$ such that $\|\alpha-\beta\|<1$, then $(\mathcal{E}_{\alpha})_{\text{con}}=(\mathcal{E}_{\beta})_{\text{con}}$. 
\end{cor}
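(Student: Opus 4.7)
The plan is to reduce to Theorem \ref{thm: dynamicalcoarsestability} by writing $\beta$ as a composition $\alpha \circ \gamma$ in which $\gamma$ is a quasi-local perturbation of the identity. Setting $\gamma := \alpha^{-1}\circ \beta$ makes the factorization tautological, so the real content is to show $\gamma$ is a quasi-local perturbation. First I would estimate the norm: since automorphisms of C*-algebras are isometric, $\|\alpha^{-1}\|=1$, and hence
\[\|\gamma - \text{id}\| = \|\alpha^{-1}\circ \beta - \alpha^{-1}\circ \alpha\| \le \|\alpha^{-1}\|\cdot \|\beta-\alpha\| < 1.\]
So $\gamma$ is an automorphism of the simple unital C*-algebra $A$ strictly within unit distance of the identity.

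The main step is to upgrade this proximity to innerness via the classical logarithm plus Sakai construction. Because $\|\text{id}-\gamma\|<1$, the series
\[\delta := -\sum_{n\geq 1} \frac{(\text{id}-\gamma)^n}{n}\]
converges in operator norm to a bounded linear map on $A$, and passing the homomorphism identity $\gamma(ab)=\gamma(a)\gamma(b)$ through the series verifies that $\delta$ is a bounded $*$-derivation. By Sakai's theorem applied to the simple unital C*-algebra $A$, every bounded derivation is inner, so $\delta = i\,\text{ad}(h)$ for some self-adjoint $h \in A$, and exponentiating yields $\gamma = \text{Ad}(e^{ih})$. In particular $\gamma$ is inner, and by Proposition \ref{prop:inner auto} it is therefore a quasi-local perturbation.

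With $\gamma$ identified as a quasi-local perturbation, Theorem \ref{thm: dynamicalcoarsestability} applied to the decomposition $\beta = \alpha\circ \gamma$ yields $(\mathcal{E}_{\beta})_{\text{con}} = (\mathcal{E}_{\alpha \circ \gamma})_{\text{con}} = (\mathcal{E}_{\alpha})_{\text{con}}$, the desired conclusion. The main obstacle is the Sakai step, which is precisely why the simplicity assumption is essential; a useful fallback is that the set of quasi-local perturbations is norm-closed by Definition \ref{quasilocalperturbation} and Proposition \ref{prop:inner auto}, so in principle it would suffice to show $\gamma$ is merely \emph{approximately} inner, a weaker property that may be accessible under milder hypotheses. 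The strict bound $\|\alpha-\beta\|<1$ is what makes the logarithm series converge absolutely; extending to $\|\alpha-\beta\|<2$ would require instead a holomorphic functional calculus argument choosing a logarithm branch avoiding the spectrum of $\gamma$.
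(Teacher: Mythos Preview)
Your proof is correct and follows essentially the same approach as the paper: both take the logarithm of the automorphism $\alpha^{-1}\circ\beta$ (the paper uses $\beta^{-1}\circ\alpha$, which is immaterial), invoke Sakai's theorem on simple C*-algebras to conclude the resulting derivation is inner, and then apply Proposition~\ref{prop:inner auto} together with Theorem~\ref{thm: dynamicalcoarsestability}. Your choice of $\gamma=\alpha^{-1}\circ\beta$ is in fact slightly cleaner, since it yields $\beta=\alpha\circ\gamma$ directly in the form required by Theorem~\ref{thm: dynamicalcoarsestability}.
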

\begin{proof}
 If $||\alpha-\beta||<1$, then $||\beta^{-1}\circ\alpha-\text{Id}_{A}||<1$, and using the power series expansion, we have a derivation $\delta=\text{log}(\beta^{-1}\circ \alpha)$. But since every derivation of a simple C*-algebra is inner \cite{MR223906}, there eixsts some skew-adjoint $h\in A$ with $\delta(a)=[h,a]$. In particular, $\beta^{-1}\circ\alpha(a)=\text{exp}(h)a\text{exp}(-h)$, hence $\alpha=\text{Ad}(u)\circ \beta$, where $u=\beta(\text{exp}(h))$.
\end{proof}

\section{Locality preserving channels}\label{loc pres chan}

In this section, let $X$ be a discrete set and let $\mathcal{E}$ be a fixed coarse structure on $X$. In this section, we will study the collection of quantum channels which in some sense ``preserve coarse locality".

\begin{defn}
A unital completely positive map $\Psi:A\rightarrow B$ has \textit{controlled spread} if for all $\epsilon>0$, there exists a controlled set $E$ such that for any bounded subset $F\subseteq X$ and $a\in (A_{F})_{1}$, there exists $b\in (B_{E[F]})_{1}$ such that $||\Psi(a)-b||<\epsilon$.
\end{defn}

Note that in practice, we will typically be interested in connected, proper coarse spaces, so that bounded sets are precisely finite sets. 

\begin{ex}\textbf{Bounded spread channels}.
Controlled spread generalizes the notion of \textit{bounded spread}, which is simply the above condition for $\epsilon=0$. Thus $\Psi:A\rightarrow A$ with controlled spread are generalizations of (quantum) cellular automata that are possibly stocashtic and possibly ``have tails". 
\end{ex}

\begin{ex}\textbf{Higher order commutator decay}.
If $A:=\otimes_{x\in X} M_{n}(\mathbbm{C})$ is a quantum spin system, if $\Psi: A\rightarrow A$, then if $\widetilde{Q}_{\Psi}$ has higher order controlled decay with respect to $\mathcal{E}$, then $\Psi$ has controlled spread. This follows immediately from \cite{MR3050163}. In particular, time evolutions of local Hamiltonian have controlled spread.
\end{ex}

\begin{defn}\label{locality pres}
A unital completely positive map $\Phi:A\rightarrow B$ is called \textit{locality preserving} if

\begin{enumerate}
\item 
$\Phi$ has controlled spread. 
\item
$\widetilde{C}_{\Phi}$ has higher order controlled decay. 
\end{enumerate}
\end{defn}

Note that if $\Phi$ is a homomorphism (e.g. unitary dynamics), then $\widetilde{C}_{\Phi}$ is identically $0$, hence the second condition is redundant. This condition is for quantum channels which are not multiplicative, but are asymtptoically multiplicative.

\begin{thm}\label{comp of loc pres}
If $A, B, C$ are all discrete nets over $X$, and $\Psi:A\rightarrow B$ and $\Phi:B\rightarrow C$ are locality preserving, then so is $\Phi\circ \Psi$
\end{thm}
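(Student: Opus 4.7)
The plan is to verify each of the two defining conditions of locality preservation for $\Phi \circ \Psi$ independently, by composing the corresponding properties of $\Phi$ and $\Psi$. For controlled spread, I would proceed as follows: given $\epsilon > 0$, apply controlled spread of $\Psi$ at scale $\epsilon/2$ to obtain a controlled set $E_1 \in \mathcal{E}$ such that for any bounded $F$ and $a \in (A_F)_1$, there is some $b \in (B_{E_1[F]})_1$ with $\|\Psi(a) - b\| < \epsilon/2$. The set $E_1[F]$ is again bounded (since $E_1[F] \times E_1[F] \subseteq E_1 \circ (F \times F) \circ E_1^{-1}$), so applying controlled spread of $\Phi$ at scale $\epsilon/2$ yields $E_2 \in \mathcal{E}$ and $c \in (C_{E_2[E_1[F]]})_1 = (C_{(E_2 \circ E_1)[F]})_1$ with $\|\Phi(b) - c\| < \epsilon/2$. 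Contractivity of $\Phi$ then gives $\|(\Phi \circ \Psi)(a) - c\| < \epsilon$, and $E_2 \circ E_1$ is controlled by the composition closure axiom for $\mathcal{E}$.

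The harder condition is higher order controlled decay of $\widetilde{C}_{\Phi \circ \Psi}$, which I would handle via a two-step triangle inequality
$$\|\Phi\Psi(ab) - \Phi\Psi(a)\Phi\Psi(b)\| \leq \|\Phi(\Psi(ab) - \Psi(a)\Psi(b))\| + \|\Phi(\Psi(a)\Psi(b)) - \Phi\Psi(a)\Phi\Psi(b)\|.$$
The first summand is bounded by $\widetilde{C}_\Psi(D[x], D[y])$ using $\|\Phi\| \leq 1$, so higher order decay of $\widetilde{C}_\Psi$ controls it directly. For the second summand, I would use controlled spread of $\Psi$ to approximate $\Psi(a), \Psi(b)$ by elements $a' \in (B_{E[D[x]]})_1, b' \in (B_{E[D[y]]})_1$ to within some small $\epsilon$, producing estimates like $\|\Psi(a)\Psi(b) - a'b'\| \leq 2\epsilon$ and $\|\Phi\Psi(a)\Phi\Psi(b) - \Phi(a')\Phi(b')\| \leq 2\epsilon$, leaving a residual $\|\Phi(a'b') - \Phi(a')\Phi(b')\| \leq \widetilde{C}_\Phi((E \circ D)[x], (E \circ D)[y])$. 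Given any target $\eta > 0$, I would choose $\epsilon$ proportional to $\eta$, then invoke higher order decay of $\widetilde{C}_\Psi$ at the controlled set $D$ and of $\widetilde{C}_\Phi$ at the enlarged controlled set $E \circ D$ to produce a single controlled set outside of which all three contributions are individually below $\eta/3$, yielding $\widetilde{C}_{\Phi \circ \Psi}(D[x], D[y]) < \eta$.

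The only delicate point I anticipate is the bookkeeping in the bilinear-style estimate $\|\Psi(a)\Psi(b) - a'b'\| \leq \|\Psi(a) - a'\|\|\Psi(b)\| + \|a'\|\|\Psi(b) - b'\|$, which requires that the approximants $a', b'$ supplied by controlled spread of $\Psi$ be norm-bounded; fortunately this is built directly into Definition \ref{locality pres}, which places the approximants in the closed unit ball. Beyond that, the argument is an essentially routine chase through the closure properties of $\mathcal{E}$, the key combinatorial input being that $E[D[x]] = (E \circ D)[x]$ so that the enlarged regions are controlled by a single set $E \circ D \in \mathcal{E}$ depending only on $E$ and $D$, not on $x, y$.
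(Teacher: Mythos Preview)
Your proposal is correct and follows the same overall strategy as the paper: verify controlled spread by composing $\epsilon/2$-approximants, and verify higher-order decay by localizing $\Psi(a),\Psi(b)$ via controlled spread of $\Psi$ and then invoking the multiplicativity defects $\widetilde{C}_\Psi$ and $\widetilde{C}_\Phi$ separately. Your initial splitting $\|\Phi\Psi(ab)-\Phi\Psi(a)\Phi\Psi(b)\|\le \|\Phi(\Psi(ab)-\Psi(a)\Psi(b))\|+\|\Phi(\Psi(a)\Psi(b))-\Phi\Psi(a)\Phi\Psi(b)\|$ is actually tidier than the paper's chain, which introduces a second layer of approximants $c,c'$ for $\Phi(b),\Phi(b')$ and tracks a $\delta^2+10\delta$ error; your route avoids those extra approximants entirely and arrives at a linear-in-$\epsilon$ bound.
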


\begin{proof}
Let $\epsilon>0$. Choose $E, E^{\prime}\in \mathcal{E}$ implementing the controlled spread condition for $\frac{\epsilon}{2}$ for $\Psi$ and $\Phi$ respectively. Set $E^{\prime \prime}=E^{\prime}\circ E$. Let $F\subseteq X$ be bounded. For $a\in A_{F}$, there exists $b\in B_{E[F]}$ with $\|\Psi(a)-b\|<\frac{\epsilon}{2}$ and $c\in C_{E^{\prime}[E[F]]}=C_{(E^{\prime}\circ E)[F]}$ such that $\|\Phi(b)-c\|<\frac{\epsilon}{2}$, hence 

$$\|\Phi\circ \Psi(a)-c\|\le \|\Phi(b)-c\|+\frac{\epsilon}{2}<\epsilon.$$

\noindent Thus $\widetilde{C}_{\Phi\circ \Psi}$ has controlled spread.

Now, let $E\in \mathcal{E}$. Let $a\in (A_{E[x]})_{1}, a^{\prime}\in (A_{E[y]})_{1}$. Let $\epsilon>0$ and choose $\delta$ such that $\delta^{2}+10\delta<\epsilon$. Then there is some controlled set $E^{\prime}$ implementing the $\delta$ controlled spread condition for $\Psi$, hence there are $b\in B_{(E^{\prime}\circ E)[x]}$ and $b^{\prime}\in B_{(E^{\prime}\circ E)[x]}$ with $\|\Psi(a)-b\|<\delta$ and $\|\Psi(a^{\prime})-b^{\prime}\|<\delta$. Then also, using the $\delta$ controlled spread condition for $\Phi$ as well, we have for some controlled set $E''$ $c\in C_{(E''\circ E')[x]}$ and $c^{\prime}\in C_{(E''\circ E')[x]}$ with $\|\Phi(b)-c\|<\delta$ and $\|\Phi(b^{\prime})-c^{\prime}\|<\delta$. We also pick the higher order controlled decay bounds to be $\delta$ for both $\Psi, \Phi$. We then get,

\begin{align*} &\|\Phi\circ \Psi(aa')-\Phi\circ \Psi(a)\Phi\circ \Psi(a')\|\\
&\le \|\Phi\circ \Psi(aa')-\Phi(b)\Phi(b^{\prime})\| +2\delta+\delta^{2}\\
&= \|\Phi\circ \Psi(aa') -\big\{\Phi(b)[\Phi(b')-c']+\Phi(b)c'+cc'-cc'\big\}\|+2\delta +\delta^2\\
&\leq \delta^2 +4\delta +\|\Phi\big(\Psi(aa')-bb'\big)+\Phi(bb')-cc'\|\\
&\leq \delta^2 +4\delta+ \|\Phi\big(\Psi(aa')-bb'\big)\|+\|\Phi(bb')-cc'\|\\
&\leq \delta^2 +4\delta+ \|\Psi(aa')-bb'\|+\|\Phi(bb')-cc'\|\\
&\leq \delta^2 +4\delta+ \|\Psi(aa')-\Psi(a)\Psi(a^{\prime})+\Psi(a)\Psi(a^{\prime})-bb'\|+\|\Phi(bb')-cc'\|\\
&\leq \delta^2 +4\delta+ \|\Psi(aa')-\Psi(a)\Psi(a^{\prime})\|+\|\Psi(a)(\Psi(a^{\prime})-b^{\prime})\|+\\
& \ \ \ \ \ \|(\Psi(a)-b)b'\|+\|\Phi(bb')-cc'\|\\
&\leq \delta^2+7\delta + \|\Phi(bb')-\Phi(b)\Phi(b')\|+\|\Phi(b)(\Phi(b')-c')\|+\|(\Phi(b)-c)c'\|\\
&\leq \delta^2 +10\delta<\epsilon
\end{align*}

 Hence, $\widetilde{C}_{\Phi\circ \Psi}$ has higher order controlled decay.
\end{proof}

\noindent We now introduce a category $\mathcal{LP}_{\mathcal{E}}$ defined as follows: 

\begin{enumerate}
    \item 
    Objects are discrete nets of C*-algebras over $X$.
    \item 
    Morphisms from $A$ to $B$, denoted $\mathcal{LP}_{\mathcal{E}}(A,B)$ ar locality preserving channels $\Psi:A\rightarrow B$
    \item
    Composition is composition of channels.
\end{enumerate}

We can restrict to the subcategory of invertible morphisms, which consist of \textit{automorphisms} with controlled spread, which is a groupoid we denote by $\mathcal{LP}^{0}_{\mathcal{E}}$. This contains the groupoid $\text{Net}_{X}$ studied in \cite{Jones2023DHRBO}.

Now, the assignment $A\mapsto S(A)$ of the state space of $A$ to it's quasi-local C*-algebra extends to a functor $F:\mathcal{LP}_{\mathcal{E}}^{op}\rightarrow \text{Conv}$, where $\text{Conv}$ denotes the category of convex spaces, defined by 

$$F(\Psi)(\phi)=\phi\circ \Psi,$$

\noindent where $\Psi: A\rightarrow B$ is locality preserving, and $\phi\in S(B)$.

Though we do not pursue this in this paper, a very interesting question is: what are the orbits under the action of the groupoid $\mathcal{LP}^{0}_{\mathcal{E}}$? Can we characterize them with some kind of invariant? This can be interpreted as a kind of state-based classification of ``phases".  

The following theorem shows that locality preserving channels in fact preserve locality of states, as characetrized by higher-order decay of correlations, justifying our terminology.

\begin{cor}
    Let $\mathcal{S}_{\mathcal{E}}(A):=\{\phi\in S(A)\ : \mathcal{E}_{\widetilde{\phi}}\subseteq \mathcal{E}\}$. For any discrete nets $A,B$ over $X$ and $\Psi\in \mathcal{LP}_{\mathcal{E}}(A,B)$, we have $$F(\Psi)S_{\mathcal{E}}(B)\subseteq S_{\mathcal{E}}(A)$$
\end{cor}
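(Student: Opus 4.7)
The goal is to show $\widetilde{C}_{\phi\circ \Psi}$ has higher controlled decay with respect to $\mathcal{E}$. Equivalently, fix $E\in\mathcal{E}$ and $\epsilon>0$; the plan is to produce a controlled set $E^{*}\in\mathcal{E}$ such that for $(x,y)\notin E^{*}$ and arbitrary $a\in (A_{E[x]})_{1}$, $b\in (A_{E[y]})_{1}$, the quantity $|\phi(\Psi(ab))-\phi(\Psi(a))\phi(\Psi(b))|$ is less than $\epsilon$. The whole argument is a four-term triangle inequality that sandwiches $\phi\circ\Psi$ between $\phi$ and local approximants in $B$.

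Concretely, I would choose $\delta$ small (say $\delta<\epsilon/12$) and run the following three steps. (i) Use the higher-order decay hypothesis on $\Psi$ (clause (2) of local-preservation) applied to the controlled set $E$ to get a controlled set $E_{1}\in\mathcal{E}$ outside which $\|\Psi(ab)-\Psi(a)\Psi(b)\|<\epsilon/3$. (ii) Use the controlled-spread hypothesis on $\Psi$ (clause (1)) applied to $\delta$ to get a controlled $E'\in\mathcal{E}$ and elements $a'\in (B_{E'[E[x]]})_{1}$, $b'\in (B_{E'[E[y]]})_{1}$ with $\|\Psi(a)-a'\|,\|\Psi(b)-b'\|<\delta$; this controls both $\|\Psi(a)\Psi(b)-a'b'\|\le 2\delta$ and $|\phi(\Psi(a))\phi(\Psi(b))-\phi(a')\phi(b')|\le 2\delta$, contributing $4\delta$ in total. (iii) Note that $E'[E[\cdot]]=(E'\circ E)[\cdot]$ and that $E'\circ E\in\mathcal{E}$ since coarse structures are closed under composition; then the hypothesis $\mathcal{E}_{\widetilde{\phi}}\subseteq\mathcal{E}$ applied to the controlled set $E'\circ E$ produces an $E_{2}\in\mathcal{E}$ outside which $\widetilde{C}_{\phi}\bigl((E'\circ E)[x],(E'\circ E)[y]\bigr)<\epsilon/3$, bounding $|\phi(a'b')-\phi(a')\phi(b')|$. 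Setting $E^{*}:=E_{1}\cup E_{2}\in\mathcal{E}$ and summing gives $\epsilon/3+4\delta+\epsilon/3<\epsilon$, as desired.

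Since $E\in\mathcal{E}$ was arbitrary, this verifies higher controlled decay of $\widetilde{C}_{\phi\circ\Psi}$ with respect to $\mathcal{E}$, hence $\mathcal{E}_{\widetilde{\phi\circ\Psi}}\subseteq\mathcal{E}$ by the universal property (Theorem~\ref{scaledef}), which says exactly $F(\Psi)\phi\in S_{\mathcal{E}}(A)$. The argument is essentially a localized version of the proof of Theorem~\ref{comp of loc pres}, where the near-multiplicativity of $\Psi$ plays the role of the homomorphism property and the spreading estimate lets one pull supports of operators from $A$ into controlled neighborhoods in $B$ where the decay hypothesis on $\phi$ bites. The only point that requires care is the bookkeeping that $E'\circ E\in\mathcal{E}$ is the ``correct" controlled set governing the spread of $a',b'$; once this identification is made, no step is a genuine obstacle, so I expect this to be a short, largely formal verification rather than a substantive estimate.
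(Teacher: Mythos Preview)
Your proposal is correct and follows essentially the same approach as the paper's proof: the paper simply observes that $\phi$, viewed as a channel $B\to\mathbbm{C}$ to the trivial net, is itself locality preserving, and then invokes Theorem~\ref{comp of loc pres} directly (with the same $10\delta+\delta^2$ bound) to conclude that $\phi\circ\Psi$ has higher-order controlled decay; your version unpacks that black box into the explicit four-term triangle inequality. One tiny bookkeeping slip: by the paper's conventions $E'[E[x]]=(E\circ E')[x]$ rather than $(E'\circ E)[x]$, but since $\mathcal{E}$ is closed under $\circ$ either way, this is harmless.
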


\begin{proof}
Choose $\phi\in S_{\mathcal{E}}(B)$ and $\Psi\in \mathcal{LP}_{\mathcal{E}}(A,B)$ Let $\epsilon>0$. Choose $\delta>0$ such that $10\delta+\delta^{2}<\epsilon$, and let $E\in \mathcal{E}$. Then there exists a symmetric $H\in \mathcal{E}$ such that for $a\in (A_{E[x]})_{1}$ and $a^{\prime}\in (A_{E[y]})_{1}$, there are $b\in B_{H[E[x]]}, b^{\prime}\in B_{H[E[y]]}$ such that $$\|\Psi(a)-b\|<\delta$$

and the same bound for $a'$ and $b'$. Since $(\widetilde{C}_{\phi})_{H\circ E}$ has controlled decay with respect to $\mathcal{E}$, there exists some $F$ (which we can assume contains $H$ by taking union) such that for $x,y\notin F$, $(\widetilde{C}_{\phi})_{H}(x,y)< \epsilon-(10\delta+\delta^{2})$. Thus for $(x,y)\notin F$, and by thinking of $\phi$ as a locality presreving channel to the scalars (i.e thinking of the scalars as a trivial discrete net of algebras), it follows from Theorem \ref{comp of loc pres} that

$$(\widetilde{C}_{\phi\circ \Psi})_{F}(x,y)< \epsilon.$$

Thus $\mathcal{E}_{\widetilde{\phi\circ \Psi}}\subseteq \mathcal{E}$.
\end{proof}

We record the following corollary, and its implications for the complexity of state preparation on a quantum processor.

\begin{cor} \label{dynamics and states}
    Let $\phi$ be a product state on a spin system $A=\otimes_{x\in X} M_{d}(\mathbbm{C})$. Let $\alpha\in \text{Aut}(A)$. Then $\mathcal{E}_{\widetilde{\phi\circ \alpha}}\le \mathcal{E}_{\widetilde{\alpha}}$. In particular, if $\alpha$ is local with respect to $\mathbbm{Z}_{n}$ then $\mathcal{E}_{\widetilde{\phi\circ \alpha}}$ is not equivalent to $\mathbbm{Z}^{m}$ for any $m>n$.
\end{cor}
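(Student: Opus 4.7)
The plan is to derive this corollary as a direct application of the preceding corollary, which shows that locality preserving channels push forward the class of states with higher-order correlation decay in $\mathcal{E}$. I would apply that corollary with $B = A$, $\Psi = \alpha$, and $\mathcal{E} = \mathcal{E}_{\widetilde{\alpha}}$. This requires verifying two things: (i) that $\alpha \in \mathcal{LP}_{\mathcal{E}_{\widetilde{\alpha}}}(A,A)$, and (ii) that $\phi \in S_{\mathcal{E}_{\widetilde{\alpha}}}(A)$.

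For (i), since $\alpha$ is an automorphism (in particular a $*$-homomorphism), $\alpha(ab) = \alpha(a)\alpha(b)$, so $\widetilde{C}_\alpha \equiv 0$ and the second condition in the definition of locality preserving is trivially satisfied. For controlled spread, by construction of $\mathcal{E}_{\widetilde{\alpha}}$ the function $\widetilde{Q}_\alpha$ has higher-order controlled decay with respect to $\mathcal{E}_{\widetilde{\alpha}}$; the example on higher-order commutator decay (invoking \cite{MR3050163}) then gives controlled spread of $\alpha$ with respect to $\mathcal{E}_{\widetilde{\alpha}}$. For (ii), since $\phi$ is a product state on the spin system $A = \otimes_{x\in X} M_d(\mathbbm{C})$, it factorizes across any pair of disjoint regions, so $\phi(ab) = \phi(a)\phi(b)$ whenever $a \in A_F$, $b \in A_G$ with $F \cap G = \varnothing$. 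Hence $\widetilde{C}_\phi(F,G) = 0$ on disjoint pairs, which forces $\mathcal{E}_{\widetilde{\phi}}$ to be the trivial diagonal coarse structure and therefore contained in every coarse structure, in particular $\mathcal{E}_{\widetilde{\alpha}}$.

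Applying the preceding corollary now yields $\phi \circ \alpha \in S_{\mathcal{E}_{\widetilde{\alpha}}}(A)$, i.e.\ $\mathcal{E}_{\widetilde{\phi \circ \alpha}} \subseteq \mathcal{E}_{\widetilde{\alpha}}$, which is the first statement. For the ``in particular'' claim, suppose $\alpha$ is local with respect to $\mathbbm{Z}^n$, meaning $\mathcal{E}_{\widetilde{\alpha}} \subseteq \mathcal{E}_{\mathbbm{Z}^n}$. Then $\mathcal{E}_{\widetilde{\phi \circ \alpha}} \subseteq \mathcal{E}_{\mathbbm{Z}^n}$, and the growth-function argument recorded in Example \ref{subsequivalenceexample} rules out coarse equivalence of $\mathcal{E}_{\widetilde{\phi \circ \alpha}}$ with $\mathbbm{Z}^m$ for any $m > n$, since $\mathbbm{Z}^m$ (with its usual metric coarse structure) cannot be coarsely equivalent to a subcoarse structure of $\mathbbm{Z}^n$ when $m > n$.

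No serious obstacle appears; the content is really in unpacking definitions and assembling the previous corollary with the example on higher-order commutator decay and the growth-function proposition. The only mildly delicate step is confirming that controlled spread of $\alpha$ with respect to $\mathcal{E}_{\widetilde{\alpha}}$ follows from higher-order commutator decay; if one wanted to avoid invoking \cite{MR3050163} one could alternatively approximate $\alpha(a)$ directly by conditional expectations onto $A_{E[x]}$ and use $\widetilde{Q}_\alpha$-bounds to control the error, but this is routine once the machinery of Section \ref{sec 4} is in place.
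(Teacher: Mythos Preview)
Your proposal is correct and matches the paper's intended argument: the paper states this corollary without proof immediately after the result $F(\Psi)S_{\mathcal{E}}(B)\subseteq S_{\mathcal{E}}(A)$, and your unpacking of (i) and (ii) together with the appeal to Example~\ref{subsequivalenceexample} is exactly the route the authors have in mind. The only detail worth recording explicitly is that $\mathcal{E}_{\widetilde{\phi}}$ is diagonal because in the inductive construction of Theorem~\ref{scaledef} one never leaves the diagonal coarse structure (for $E$ a subset of the diagonal, $E[x]\cap E[y]=\varnothing$ for $x\neq y$, so each $\widetilde{f}_E$ vanishes off the diagonal); you have this right in spirit but it is the one place where a reader might pause.
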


If we interpret the universal coarse structures of states and dynamics as measures of ``geometric complexity", the above corollary shows that to prepare a state with a certain geometric complexity, we need a circuit which is at least as geometrically complex. In particular, if we want to prepare a state with $d$-dimensional correlations, we need a circuit which is local to some coarse structure containing $\mathbbm{Z}^{d}$.

\bigskip

\section{Coarse dependence of order parameters}\label{coarsedependence}

In this section, we investigate to what extend entanglement-based order parameters in fact depend only on the coarse equivalence class of a metric rather than the specific details of the metric itself. An order parameter that only depends on the coarse equivalence class of a metric will be called a \textit{coarse order parameter}. Here we focus on two examples: topological order and critical exponents. In both of these examples, it is not in fact obvious that these features depend only on the coarse structure. The goal of this section is to provide a rigorous proof of this fact.

\subsection{Topological order: superselection sectors}\label{topord} First we will investigate topological order, utilizing the mathematical formulation for spin systems following \cite{MR2804555, MR2956822, MR3426207, MR3764565, MR4362722}. We will focus on topological order as a W*-category of superselection sectors. In general this is only part of the structure of topological order: in dimensions larger than 1, the category of cone localized superselection sectors (under some assumptions on the state) can be equipped with the additional structure of a braided C*-tensor category \cite{MR4362722}. In higher dimensions this braided category is only a piece of the full structure of topological order, which should assemble into a \textit{higher} braided category \cite{MR4444089}. Nevertheless, we expect the analysis demonstrated here to apply in more complicated settings, and thus topological order should depend only on the quasi-isometry class of the metric, rather than its isometry or Lipschitz class.

Suppose we have a concrete spin system with quasi-local algebra $A=\otimes_{x\in \mathbbm{Z}^{n}} M_{q}(\mathbbm{C})$, and let $\phi$ be a state on $A$. Let $H_{\phi}$ be the GNS representation of $A$ associated to $\phi$. We make of use of the notion of \textit{quasi sub-equivalence} of representations of two C*algebras. If $H,K$ are two representations of a unital C*-algebra $B$ we say $H\preceq K$ if $H$ is unitarily equivalent to a summand of $K\otimes \ell^{2}(\mathbbm{N})$.

We recall the usual formulation of superselection sectors in terms of cones in $\mathbbm{R}^{n}$ (or rather, a slight modification as in \cite{jones2023localtopologicalorderboundary}). There are many possible variations of superselection sectors, but for the sake of staying close to \cite{MR2804555, MR4362722}, a \textit{cone} $\mathcal{C}\subseteq \mathbbm{R}^{n}$ will be convex closure of a set $\{r_{i}\}_{i\in I}$ of geodesic rays emanating from a common point, whose interior has dimenion $n$.
By a \textit{cone} in the lattice $\mathbbm{Z}^{n}$, we actually mean the intersection $\mathcal{D}=\mathcal{C}\cap \mathbbm{Z}^{n}$, where $\mathcal{C}$ is a cone in $\mathbbm{R}^{n}$.

\begin{defn}
Let $H$ be a Hilbert space representation of $A$. Then for any subset $F\subseteq X$, we say $H$ is \textit{localizable in F} if $$H|_{A_{F^{c}}}\preceq (H_{\phi})|_{A_{F^{c}}}.$$
\end{defn}

\begin{defn}
  A Hilbert space representation $H$ of $A$ satisfies the \textit{superselection criterion} if for any cone $\mathcal{D}$ in $\mathbbm{Z}^{n}$, $H$ is localizable in $\mathcal{D}$. We denote by $\text{Sec}_{\phi}$ the W*-category of Hilbert space representations of $A$ satisfying the superselection criteria.
\end{defn}

We note that for many states (in fact, all ground states of gapped Hamiltonians) we can replace $\prec$ in the above definition with unitary equivalence without loss of generality, and this recovers the usual definition. $\text{Sec}_{\phi}$ is a unitarily Cauchy complete W*-category \cite{MR808930}.

Our goal in this section is to define a version of $\text{Sec}_{\phi}$ which makes sense for a spin system over any metric space $(X,d)$ and depends only on its quasi-isometry class.

So let $(X,d)$ be a metric space with finite asymptotic dimension $n$, and suppose $(X,d)$ is quasi-geodesic. Let $\mathcal{D}$ be a cone in $\mathbbm{R}^{n}$. For $L>0, C\ge 0$, a coarse $L,C,\mathcal{D}$ cone is a quasi-isometric embedding $f:\mathcal{D}\rightarrow (X,d)$, with Large-scale Lipschitz constants $L,C$. The asymptotic dimension assumption is simply to guarantee we are using the ``right" dimension of cone for the region, though it is not strictly necessary below. Recall also that for a subset $E\subseteq X$, $E+R$ is simply the $R$-neighborhood of $E$ \textit{and does not refer to translation by $R$} which does make sense in general.

Now let $A=\otimes_{x\in X} M_{q}(\mathbbm{C})$ be the quasi-local algebra of a spin system as above, and $\phi$ a state on $A$.

\begin{defn} A Hilbert space representation of $A$ satisfies the \textit{coarse superselection criterion} on $(X,d)$ if for any triple $L,C, \mathcal{D}$, there exists some constant $R\ge 0$ such that for an coarse $L,C,\mathcal{D}$ cone $f:\mathcal{D}\rightarrow X$, $H$ is localizable in $f(\mathcal{D})+R$. We denote by $\text{Sec}^{d}_{\phi}$ the W*-category of Hilbert space representations of $A$ satisfying the coarse superselection criteria.
\end{defn}

The following proposition shows that this coarse superselection category as defined above agrees with the usual superselection category, and thus is intrinsically a property of a state on a quantum processor.

\begin{prop}\label{topordprop} Let $(X,d)$ be a quasi-geodic metric space with finite asymptotic dimension, and let $A=\otimes_{x\in X} M_{q}(\mathbbm{C})$ be a spin system over $X$ and $\phi$ a state. Then
\begin{enumerate}
    \item 
    If $d^{\prime}$ is a metric on $X$ quasi-isometric to $d$, then $\text{Sec}^{d}_{\phi}=\text{Sec}^{d^{\prime}}_{\phi}$.
    \item 
    If $(X,d)=\mathbbm{Z}^{n}$ with the induced Euclidean metric,  then $\text{Sec}^{d}_{\phi}=\text{Sec}_{\phi}$.
\end{enumerate}
\end{prop}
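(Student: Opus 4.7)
The plan is to tackle the two parts separately. Part (1) is a tolerance-style argument: since $d$ and $d^{\prime}$ are quasi-isometric with constants $L_{0}, C_{0}$ satisfying $L_{0}^{-1}d(x,y)-C_{0}\le d'(x,y)\le L_{0}d(x,y)+C_{0}$, any coarse $(L,C,\mathcal{D})$-cone $f$ with respect to $d$ is also a coarse $(L',C',\mathcal{D})$-cone with respect to $d^{\prime}$ for $L',C'$ depending only on $L,C,L_{0},C_{0}$, and vice versa. Furthermore, the $R$-neighborhood of any subset $E\subseteq X$ with respect to $d$ is contained in the $(L_{0}R+C_{0})$-neighborhood with respect to $d^{\prime}$, and symmetrically. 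Since localizability in a set $F$ is monotone (a larger $F$ gives a smaller $F^{c}$, hence a smaller $A_{F^{c}}$, and quasi-subequivalences restrict), these two observations together imply that the coarse superselection criteria with respect to $d$ and $d^{\prime}$ coincide.

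For part (2), I prove both inclusions by choosing good localization sets and appealing to this same monotonicity. For $\text{Sec}_{\phi}^{d}\subseteq \text{Sec}_{\phi}$, given $H$ satisfying the coarse criterion and an honest cone $\mathcal{D}\subseteq \mathbbm{R}^{n}$, I first choose a slightly shrunken cone $\mathcal{D}'\subseteq\mathcal{D}$ (moving the apex inward far enough that the $\sqrt{n}/2$-thickening plus some buffer still fits in $\mathcal{D}$), and the coordinatewise rounding map $f:\mathcal{D}'\to\mathbbm{Z}^{n}$, which is a quasi-isometric embedding with universal constants $(L_{0},C_{0})$ satisfying $f(\mathcal{D}')\subseteq (\mathcal{D}'+\sqrt{n}/2)\cap\mathbbm{Z}^{n}$. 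Applying the coarse criterion to the triple $(L_{0},C_{0},\mathcal{D}')$ produces some $R$ with $H$ localizable in $f(\mathcal{D}')+R$, and choosing the apex offset of $\mathcal{D}'$ large enough that $f(\mathcal{D}')+R\subseteq\mathcal{D}\cap\mathbbm{Z}^{n}$ lets monotonicity deliver localizability in $\mathcal{D}\cap\mathbbm{Z}^{n}$.

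For the reverse inclusion $\text{Sec}_{\phi}\subseteq\text{Sec}_{\phi}^{d}$, given $H$ satisfying the usual criterion and a triple $(L,C,\mathcal{D})$, I need an $R$ depending only on $(L,C,\mathcal{D})$ such that every coarse $(L,C,\mathcal{D})$-cone $f:\mathcal{D}\to\mathbbm{Z}^{n}$ admits an honest cone $\mathcal{D}_{f}$ with $\mathcal{D}_{f}\cap\mathbbm{Z}^{n}\subseteq f(\mathcal{D})+R$; the usual criterion applied to $\mathcal{D}_{f}$ then yields localizability in $f(\mathcal{D})+R$ by monotonicity. This is the main technical obstacle: the Morse lemma for quasi-geodesics fails in $\mathbbm{R}^{n}$ for $n\ge 2$, so $f$ cannot be straightforwardly approximated by an isometry. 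I would attempt to exploit the convex structure of $\mathcal{D}$: each defining ray $r_{i}$ maps under $f$ to an $(L,C)$-quasi-geodesic, and passing to an asymptotic direction (using that $|f(r_{i}(t))|/t$ is bounded above and below asymptotically, combined with compactness of the unit sphere) produces a candidate limiting ray $r_{i}'$; the convex hull of the $\{r_{i}'\}$ is then the desired $\mathcal{D}_{f}$. Proving that the Hausdorff discrepancy between $f(\mathcal{D})$ and $\mathcal{D}_{f}$ is bounded by a constant depending only on $(L,C,\mathcal{D})$ — which is precisely what fixes the uniform $R$ — is the delicate point where the proof must do real geometric work.
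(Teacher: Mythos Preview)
Your approach matches the paper's proof essentially step for step. For part (1) both you and the paper use the fact that a coarse $(K,B,\mathcal{D})$-cone for $d$ is a coarse $(L_0K,L_0B+C_0,\mathcal{D})$-cone for $d'$, together with the containment $(f(\mathcal{D})+_{d'}R')^{c}\subseteq (f(\mathcal{D})+_{d}R)^{c}$ for $R'>L_0R+C_0$ and the monotonicity of localizability. For the direction $\text{Sec}^{d}_{\phi}\subseteq\text{Sec}_{\phi}$ in part (2), the paper does exactly what you propose: it views the nearest-lattice-point map on an honest cone $\mathcal{D}'$ as a coarse $(1,\sqrt{2},\mathcal{D}')$-cone, notes that every translate $\mathcal{D}+v$ is again such a cone with the \emph{same} constants, takes the uniform localizing $R$ for that triple, and then translates the apex inward so that $\mathcal{D}''+R\subseteq\mathcal{D}$.

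The one place you and the paper differ is precisely the step you flag as delicate. For $\text{Sec}_{\phi}\subseteq\text{Sec}^{d}_{\phi}$ the paper simply \emph{asserts} the geometric lemma you are worried about: ``for any $K,B,\mathcal{D}$, there is some $R\ge 0$ such that the $R$-neighborhood of the image of any $K,B,\mathcal{D}$ coarse cone contains an honest Euclidean cone,'' and then proceeds immediately by monotonicity. No argument via asymptotic directions, convex hulls, or anything else is given. So your instinct that this is the point requiring real work is correct, and your sketch (extract limiting directions of the $f(r_i)$ and take their convex hull) is a reasonable line of attack; but you should be aware that the paper does not supply this argument either, and treats the statement as evident. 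If you want a self-contained proof you will indeed have to fill this in yourself, and the uniform control of the Hausdorff discrepancy in terms of $(L,C,\mathcal{D})$ alone is the substantive content.
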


\begin{proof}
To prove the first point, suppose there are constants $L>0,C\ge 0$ with 

$$L^{-1}d(x,y)-C\le d^{\prime}(x,y)\le Ld(x,y)+C,$$

\noindent for all $x,y\in X$. Then the identity $\text{Id}_{X}:X\rightarrow X$ thought of as a map from $(X,d)$ to $(X,d^{\prime})$, is an $L,C$ quasi-isometry.

Then any coarse $K,B, \mathcal{D}$ cone with respect to $d$ is a coarse $LK, LB+C, \mathcal{D}$ cone with respect to $d^{\prime}$ and conversely, $K,B, \mathcal{D}$ cones with respect to $(X,d^{\prime})$ are $L^{-1}K, L^{-1}B-C, \mathcal{D}$ cones with respect to $d$. Now let $H\in \text{Sec}^{d}_{\phi}$, and let $K,B,\mathcal{D}$ and $f: \mathcal{D}\rightarrow (X,d^{\prime})$ be a coarse cone. Then there is some $R\ge 0$ such that $H$ is localizble in $f(\mathcal{D})+R$. Then
for any choice of $R^{\prime}>LR+C$, 

$d^{\prime}(x,f(\mathcal{D}))>R^{\prime}$ which implies $d(x,f(\mathcal{D}))\ge L^{-1}R^{\prime}-C>R$. Thus $$(f(\mathcal{D})+_{d^{\prime}} R^{\prime})^{c}\subseteq (f(\mathcal{D})+_{d}R)^{c},$$ 

\noindent Therefore $A_{(\mathcal{D}+_{d^{\prime}}R^{\prime})^{c}}$ is contained in $A_{(\mathcal{D}+_{d}R)^{c}}$ and the superselction criterion restricts to

$$H|_{A_{(\mathcal{D}+_{d^{\prime}}R^{\prime})^{c}}}\preceq (H_{\phi})|_{A_{(\mathcal{D}+_{d^{\prime}}R^{\prime})^{c}}}.$$

\noindent Thus $\text{Sec}^{ d}_{\phi}\subseteq \text{Sec}^{ d^{\prime}}_{\phi}$. Since the quasi-isometry relation is symmetric, we obtain the other inclusion switching the roles of $d$ and $d^{\prime}$.

To prove the second point, suppose $(X,d)=\mathbbm{Z}^{n}$ with the restriction of the Euclidean metric. Note that for any $K,B, \mathcal{D}$, there is some $R\ge 0$ such that the $R$-neighborhood of the image of any $K,B,\mathcal{D}$ coarse cone contains an honest Euclidean cone (i.e. the intersection of a cone in $\mathbbm{R}^{n}$ with the lattice $\mathbbm{Z}^{n}$). Now let $f:\mathcal{D}\rightarrow \mathbbm{Z}^{n}$ be a coarse $K,B,\mathcal{D}$ cone. If $H\in \text{Sec}_{\phi}$, then since $H$ is localizable in $\mathcal{D}^{\prime}$, since $\mathcal{D}^{\prime}\subseteq f(\mathcal{D})+R$, $H$ is also localizble in $f(\mathcal{D})+R$.  Since $K,B, \mathcal{D}$ and $f$ were arbitrary, $H\in \text{Sec}^{d}_{\phi}.$

Now let $H\in \text{Sec}^{ d}_{\phi}$, and $\mathcal{D}$ an honest cone. If $\mathcal{D}=\mathbbm{Z}^{n}\cap \mathcal{D}^{\prime}$ for $\mathcal{D}^{\prime}$ a cone in $\mathbbm{R}^{n}$, then $\mathcal{D}$ is $1,\sqrt{2}, \mathcal{D}^{\prime}$ cones, where $f:\mathcal{D}^{\prime}\rightarrow \mathcal{D}$ is the map that chooses the nearest element in $\mathcal{D}$. Note that for any vector $v\in \mathbbm{Z}^{n}$, the honest cone $\mathcal{D}^{\prime \prime}=\mathcal{D}+v$ is also a $1, \sqrt{2}, \mathcal{D}^{\prime}$ coarse cone. Let $R$ be the $1, \sqrt{2}, \mathcal{D}^{\prime}$ localizing constant. Then by translating $\mathcal{D}$ into its interior, we can find an honest cone $\mathcal{D}^{\prime \prime}$ such that $\mathcal{D}^{\prime\prime}+R\subseteq \mathcal{D}$. Then $H$ is localized in $\mathcal{D}^{\prime \prime}\subseteq \mathcal{D}$ hence also in $\mathcal{D}$, and thus $H\in \text{Sec}_{\phi}$.

\end{proof}

In the case when $(X,d)$ is coarsely equivalent to $\mathbbm{Z}^{2}$, this category can be upgraded with the structure of a \textit{braided W*-tensor category} under a condition called \textit{approximate Haag duality} \cite{MR4362722}, generalizing the familiar Haag duality from AQFT utilized in earlier approaches \cite{MR2804555, MR2956822, MR3426207, MR3764565}. It is a bit tedious but not difficult to check using the style of argument from our proposition that this condition can be reformulated in terms of coarse cones, and that the construction of the tensor product and braiding can be carried out in the coarse setting. Alternatively, one could show that an apropriate poset of coarse cones satisfies the axioms in \cite{bhardwaj2024superselectionsectorsposetsvon}, which by their main reults would allow a direct definition of braided tensor category structure on $\text{Sec}^{d}_{\phi}$.

\subsection{Coarse decay rates}\label{decay rates}

Although decay rates of functions will depend on the choice of metric, there are certain ``coarse" properties of the decay rate that only depend on the coarse equivalence class. If $\phi$ is a generic equilibrium state for a Hamiltonian with short-range interactions, we expect $C_{\phi}$ to decay at most exponentially as a function of distance, where the constant appearing in the exponent is the inverse correlation length. However, for a state at a critical point in parameter space, we expect the function to decay algebraically, in which case the correlation length is said to diverge. In this setting, there is a notion of \textit{critical exponent}, which is usually used as a fundamental indicator of the states universality class \cite{stanley1987introduction} This discussion motivates the following definitons.

\begin{defn}\label{defn:correlationlength}
Let $f:X\times X\rightarrow \mathbbm{R}_{+}$. Then we say $f$ has \textit{ length 0} if there exists a controlled set $E\in \mathcal{E}_{f}$ such that $f(x,y)=0$ for all $(x,y)\notin E$.

\medskip

If $\mathcal{E}_{f}$ is monogenic, let $d_{\epsilon}$ be the path metric on the graph $G_{\epsilon}$ for some $\epsilon$ in the stable range. Then we say
\begin{enumerate}
\item 
$f$ has \textit{finite length } if there exists positive constants $A,B$ and a controlled set $E\in \mathcal{E}_{f}$ such that $f(x,y)\le A e^{-B d_{\epsilon}(x,y)}$ for all $(x,y)\notin E$.
\item
$f$ \textit{has infinite length } if for every exponential function $g(t)=A e^{-B t}$, there is some controlled set $E$ such that $f(x,y)> g(d_{\epsilon}(x,y))$ for all $(x,y)\notin E$.
\end{enumerate}

\end{defn}

When applied to correlation functions, we can replace the word ``length" by ``correlation length" to recover the familiar notions. Note that $0$ length implies finite length, and finite length implies not infinite  length. Note length $0$ manifestly depends only on the coarse structure, however, at the moment, it is not clear that the notions of finite and infinite length are well-defined, since they depend a-priori on the choice of $\epsilon$ in the stable range. This leads us to the following proposition:

\begin{prop}\label{coarseinvariancecorrelations} Finite and infinite length do not depend on the chocie of $\epsilon$ in the stable range.
\end{prop}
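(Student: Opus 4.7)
The plan is to reduce everything to Proposition \ref{coarsemono}. If $\epsilon_1, \epsilon_2$ both lie in the stable range of $f$, then by definition $\mathcal{E}_{f,\epsilon_i} = \mathcal{E}_f$, so $E_{f,\epsilon_1}$ and $E_{f,\epsilon_2}$ are both generators of the monogenic coarse structure $\mathcal{E}_f$. Proposition \ref{coarsemono} then gives a constant $K \ge 1$ with
$$K^{-1} d_{\epsilon_1}(x,y) \le d_{\epsilon_2}(x,y) \le K\, d_{\epsilon_1}(x,y)$$
for all $x,y \in X$ (including the convention $\infty \le \infty$ on the pairs that are unreachable in either graph, which must coincide by the bound). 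Since the two path metrics are Lipschitz equivalent, exponential decay with respect to one is equivalent to exponential decay with respect to the other, and this is exactly what is needed for both properties to be $\epsilon$-independent.

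For finite length, suppose there are $A,B>0$ and a controlled set $E \in \mathcal{E}_f$ with $f(x,y) \le A e^{-B d_{\epsilon_1}(x,y)}$ for all $(x,y) \notin E$. Applying the lower Lipschitz bound $d_{\epsilon_1}(x,y) \ge K^{-1} d_{\epsilon_2}(x,y)$ yields
$$f(x,y) \le A e^{-(B/K) d_{\epsilon_2}(x,y)} \qquad \text{for all } (x,y) \notin E,$$
so finite length with respect to $d_{\epsilon_2}$ holds with constants $A$ and $B/K$.

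For infinite length, take an arbitrary exponential $g(t) = A e^{-Bt}$ to be tested against $d_{\epsilon_2}$ and set $g'(t) := A e^{-(B/K) t}$. The same Lipschitz bound gives $g(d_{\epsilon_2}(x,y)) \le g'(d_{\epsilon_1}(x,y))$ for every pair $(x,y)$. By the infinite-length hypothesis with respect to $d_{\epsilon_1}$ applied to the exponential $g'$, there is a controlled set $E \in \mathcal{E}_f$ with $f(x,y) > g'(d_{\epsilon_1}(x,y)) \ge g(d_{\epsilon_2}(x,y))$ for all $(x,y) \notin E$, which is exactly infinite length with respect to $d_{\epsilon_2}$.

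There is no real obstacle: the whole argument is bookkeeping once Lipschitz equivalence of the two metrics is in hand. The only care needed is that the lower Lipschitz bound produces the decay rate $B/K$ in the finite-length case while the upper bound rescales the test exponential by the same factor in the infinite-length case; symmetry in $\epsilon_1, \epsilon_2$ (Proposition \ref{coarsemono} gives a two-sided bound) yields the converse implications, so each property is genuinely independent of the choice of $\epsilon$ in the stable range.
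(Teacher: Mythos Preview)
Your proof is correct and follows essentially the same approach as the paper: establish that the two path metrics $d_{\epsilon_1}, d_{\epsilon_2}$ are comparable and then push exponential bounds through. The only difference is that you invoke Proposition \ref{coarsemono} to get a pure Lipschitz bound $K^{-1} d_{\epsilon_1} \le d_{\epsilon_2} \le K d_{\epsilon_1}$, whereas the paper works with a quasi-isometry bound $L^{-1} d_{\epsilon'} - C \le d_{\epsilon} \le L d_{\epsilon'} + C$; since Proposition \ref{coarsemono} already gives the stronger Lipschitz estimate for two generators of the same monogenic structure, your constants come out slightly cleaner, but the argument is the same.
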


\begin{proof}
Since $f$ is stable, for any choices of $\epsilon, \epsilon^{\prime}$ in the stable range, there exists positive constants $L,C$ such that $L^{-1} d_{\epsilon^{\prime}}(x,y)-C\le d_{\epsilon}(x,y)\le L d_{\epsilon^{\prime}}(x,y)+C$.

Now suppose there are constants $A,B$ such that $f(x,y)\le A e^{-B d_{\epsilon}(x,y)}$ for $(x,y)\notin E$ for some controlled set $E\in \mathcal{E}_{f}$. Then 

$$f(x,y)\le A e^{-B (L^{-1}d_{\epsilon^{\prime}}(x,y)-C)}=(Ae^{C})e^{-(BL^{-1})d_{\epsilon^{\prime}}(x,y)},$$

\noindent as desired. Now suppose $f$ has infinite length. Let $A,B>0$. Then choose $R>0$ so that $d_{\epsilon}(x,y)\ge R$ implies $f(x,y)>Ee^{-Dd_{\epsilon}(x,y)}$, where $E=Ae^{BCL^{-1}}$ and $D=BL^{-1}$. Then since $d_{\epsilon^{\prime}}(x,y)\ge L^{-1}d_{\epsilon}(x,y)-L^{-1}C$, and of exponential function is decreasing
$$Ae^{-Bd_{\epsilon^{\prime}}(x,y)}\le Ee^{-Dd_{\epsilon}(x,y)}<f(x,y)$$
\end{proof}

As a consequence, for any state on a spin system whose coarse structure is monogenic, the finiteness of the correlation length (or in other words, the length of the function $C_{\phi}$) is intrinsic to the state.

We note that in the finite length case, the actual constants $A,B$ depend on the $\epsilon$ you choose, and so are not themselves invariant under the choice of coarse structure. Thus, while we can make asymptotic boundedness statements generally as above (we have focused on the exponential case, one could use other classes of decaying functions), the precise character of asymptotic exponential decay appears to not be coarsely invariant. In what sense, then, can we actually make sense of ``asymptotic rate of decay" that only depends on the quasi-isometry class of the metric $d$? To this end, we introduce the following definitions

\begin{defn}
Let $g,h:\mathbbm{R}_{+}\rightarrow \mathbbm{R}_{+}$. We say $g$ and $h$ are \textit{coarsely commensurate}, written $h\approxeq g$, if for any $L>0$, $C\in \mathbbm{R}$, there exists constants $D_{1},D_{2}>0$ such that for all sufficiently large $x$,

$$D_{1}h(x)<g(Lx+C)<D_{2}h(x).$$

\noindent We say $h$ is \textit{coarsely invariant} if $h$ is coarsely commensurate with itself.
\end{defn}

In the above definition, in order for the inequality to make sense we need to assume at least that $x\ge -\frac{C}{L}$. Note that coarsely simple functions are closed under positive scaling, addition, and multiplication. Clearly $\approxeq$ is an equivalence relation on the collection of coarsely invariant functions.

\begin{defn} . Let $f:X\times X\rightarrow \mathbbm{R}_{+}$. Let $g:\mathbbm{R}_{+}\rightarrow \mathbbm{R}_{+}$ be a coarsely invariant, decreasing function. Then we say $f$ \textit{decays coarsely uniformly} as $g$ if there exist constants $C_{1},C_{2}, R> 0$ such that if $d(x,y)\ge R$,

$$C_{1}g(d(x,y))\le f(x,y)\le C_{2}g(d(x,y)).$$

\end{defn}

\begin{prop}
Let $(X,d)$ be an unbounded quasi-geodesic metric space, and $f:X\times X\rightarrow \mathbbm{R}_{+}$.

\begin{enumerate}
\item
If $g,h: \mathbbm{R}_{+}\rightarrow \mathbbm{R}_{+}$ are coarsely invariant decreasing functions and $f:X\times X \to \mathbbm{R}_{+}$ decays coarsely uniformly as $g$ and $h$, then $g\approxeq h$. Thus coarse uniform decay gives a well-defined coarse commensurability class.
\item
The coarse commensurability class of $f$ only depends on the metric up to coarse equivalence.
\end{enumerate}
\end{prop}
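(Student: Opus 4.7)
The plan is to handle both parts using one technical ingredient: in an unbounded quasi-geodesic space with quasi-geodesic constants $L_0, C_0$, the set of realized distances $\{d(x,y) : x, y \in X\}$ is $\eta$-dense in $[R_0, \infty)$ for some $\eta, R_0$ depending only on $L_0, C_0$. To prove this, I would fix arbitrarily large $T$ (available by unboundedness), choose $x,y$ with $d(x,y) \ge T$, and consider a quasi-geodesic $\gamma\colon [0, d(x,y)] \to X$. The consecutive distances $d(\gamma(0), \gamma(k))$ and $d(\gamma(0), \gamma(k+1))$ differ by at most $L_0 + C_0$ by the triangle inequality and the upper quasi-isometry estimate, while $d(\gamma(0), \gamma(k))$ ranges over a set cofinal in $[0,T]$. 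Thus every value in $[R_0, T]$ lies within $\eta := L_0 + C_0$ of a realized distance, and since $T$ may be taken arbitrarily large the claim follows.

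For part (1), suppose $f$ decays coarsely uniformly as both $g$ and $h$ with constants $(C_1, C_2, R)$ and $(C_1', C_2', R')$. I would first combine the two sandwich estimates on their common domain to obtain $\tfrac{C_1'}{C_2}\, h(t) \le g(t) \le \tfrac{C_2'}{C_1}\, h(t)$ for every realized distance $t \ge \max(R, R')$. Using the density claim, for every sufficiently large $t \in \mathbbm{R}_+$ there is a realized $t'$ with $|t - t'| \le \eta$; coarse invariance of $g$ and $h$ applied with $L = 1$, $C = \pm \eta$ then gives $g(t) \asymp g(t')$ and $h(t) \asymp h(t')$ with absolute multiplicative constants, upgrading the estimate to $E_1 h(t) \le g(t) \le E_2 h(t)$ for all sufficiently large $t$. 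Finally, for arbitrary $L > 0, C \in \mathbbm{R}$, substituting $t = Lx + C$ and applying coarse invariance of $h$ to compare $h(Lx + C)$ with $h(x)$ yields constants $D_1, D_2$ with $D_1 h(x) < g(Lx + C) < D_2 h(x)$ for large $x$, establishing $g \approxeq h$.

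For part (2), let $d'$ be any metric on $X$ coarsely equivalent to $d$. Since $(X,d)$ is quasi-geodesic, the earlier proposition upgrading coarse equivalence to quasi-isometry supplies $L_1 > 0$, $C_1 \ge 0$ with $L_1^{-1} d(x,y) - C_1 \le d'(x,y) \le L_1 d(x,y) + C_1$. If $f$ decays coarsely uniformly as $g$ under $d$, then for $d'(x,y)$ sufficiently large, $d(x,y) \ge R$ and $d(x,y)$ lies in $[L_1^{-1} d'(x,y) - L_1^{-1} C_1,\, L_1 d'(x,y) + L_1 C_1]$. Monotonicity of $g$ bounds $g(d(x,y))$ between the values of $g$ at the two endpoints of this interval, and coarse invariance of $g$ shows each endpoint satisfies $g(\cdot) \asymp g(d'(x,y))$. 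Hence $g(d(x,y)) \asymp g(d'(x,y))$, so $f$ decays coarsely uniformly as $g$ under $d'$ as well, and part (1) then identifies the two commensurability classes.

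The main obstacle is the density step in part (1): the sandwich estimate $g \asymp h$ is a priori available only at distances actually realized in $(X,d)$, and we need it at every large real $t$. Quasi-geodesicness provides the bounded-gap structure among realized distances, while coarse invariance transports estimates between neighboring arguments; dropping either hypothesis would admit pathological counterexamples (a sparsely realized metric on one hand, or a wildly oscillating $g$ on the other).
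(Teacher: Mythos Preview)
Your proposal is correct and follows essentially the same strategy as the paper: combine the two sandwich estimates at realized distances, invoke $\eta$-density of realized distances coming from the quasi-geodesic hypothesis, and then use coarse invariance to pass to all large $t$. The only tactical difference is that the paper picks realized distances $t_l \le t \le t_r$ on each side, uses monotonicity of $g$ to bound $g(t_l)\ge g(t)\ge g(t_r)$, and then applies coarse invariance of $h$ alone, whereas you pick a single nearby $t'$ and apply coarse invariance to both $g$ and $h$; both variants work. Your write-up is in fact more explicit than the paper's in two places: you actually sketch why realized distances are $\eta$-dense (the paper just asserts the existence of $t_l,t_r$), and you spell out the final step from $g(t)\asymp h(t)$ to $D_1 h(x)<g(Lx+C)<D_2 h(x)$ via coarse invariance of $h$, as well as the full argument for part (2), which the paper dispatches in one line.
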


\begin{proof}

Suppose we have coarsely invariant functions $g,h:\mathbbm{R}_{+}\rightarrow \mathbbm{R}_{+}$ and constants $C_{1},C_{2},R>0$ and $D_{1},D_{2}, S>0$ such that for $d(x,y)\ge R$,

$$C_{1}g(d(x,y))\le f(x,y)\le C_{2}g(d(x,y)),$$

and for $d(x,y)\ge S$,

$$D_{1}h(d(x,y))\le f(x,y)\le D_{2}h(d(x,y)).$$

Then for $d(x,y)\ge R+S$, we have 

$$C_{1}g(d(x,y)) \le f(x,y) \le D_{2} h(d(x,y))$$

so for any $t=d(x,y)\ge R+S$ for some $(x,y)\in X\times X$

$$g(t) \le \frac{D_{2}}{C_1} h(t)$$

and similarly for such a $t$,

$$ h(t)\frac{D_{1}}{C_{2}}\le g(t)$$

Since $(X,d)$ is an unbounded, quasi-geodesic metric space there is some $T\ge 0$ such that for all $t\ge R+S$, there exists $t_{l}\le t\le t_{r}$, with $t_{l},t_{r}\in \{d(x,y)\ :\ (x,y)\in X\times X\}$ with $t-t_{l}, t_r-t\le T$.

Using coarse invariance of $g,h$, choose $E_{1},E_{2}>0$ such that $h(t+T)\ge E_{1}h(t)$ and $h(t-T)\le E_{2}h(t)$ for sufficiently large $t$. Then using the fact that $g$ and $h$ are decreasing, for sufficiently large $t$, we have

$$g(t)\ge g(t_r)\ge \frac{D_{1}}{C_2} h(t_{r})=\frac{D_{1}}{C_2} h(t+(t_r-t))\ge \frac{D_{1}}{C_2} h(t+T)\ge \frac{E_{1}D_{1}}{C_2} h(t)$$

$$g(t)\le g(t_{l})\le \frac{D_{2}}{C_1} h(t_{l})=\frac{D_{2}}{C_1} h(t-(t-t_{l}))\le \frac{D_{2}}{C_1} h(t-T)\le  \frac{E_{2}D_{2}}{C_1} h(t)$$

Thus $g$ and $h$ are coarsely commensurate. The second statement follows from the coarse invariance assumption.

\end{proof}

The above proposition gives a well defined meaning for ``decay rate" for functions $f:X\times X\rightarrow \mathbbm{R}_{+}$ that decay coarsely uniformly that only depends on the coarse structure of $(X,d)$, at least in the case that $(X,d)$ is quasi-geodesic. A key feature is that for this to be well-defined, we need the decaying test functions to be coarsely invariant. 

As a counterexample, consider the generic exponential function $g(t)=A\text{exp}(-Bt)$ we consider when defining finite/infinite length. Then  

$$g(Lt+C)=A\text{exp}(-BLt-BC)=A\text{exp}(-BC)\text{exp}(-BLt),$$ which decays much more rapidly than $g(t)$ at infinity if $L\ne 1$ hence there is no constant $D$ such that $Dg(t)\le g(Lt+C)$, so exponential functions are not coarsely stable. What is going on here is that while the \textit{class} of exponential functions is stable under quasi-isometry, hence being bounded (above or below) by \textit{some} exponential is well-defined coarsely. However, no exponential function is itself coarsely invariant in the sense of our definition. Thus while finite length is well defined, the actual value of the length is not.

This leads to the question: which classes of functions appearing in the context of spin systems are coarsely simple, and can we characterize coarse commensurability of these classes?

\begin{prop}\label{criticalexp-welldefined} The functions $\frac{1}{\text{ln}(t)}$ and $\frac{1}{t^{\gamma}}$ for $\gamma>0$ are coarsely invariant. Furthermore, $\frac{1}{t^{\gamma}}\approxeq \frac{1}{t^{\delta}}$ if and only if $\gamma=\delta$.
\end{prop}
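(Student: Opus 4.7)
The plan is to verify the coarse invariance of each of the two families directly from the definition, and then deduce the inequivalence statement by isolating the offending ratio along the simplest test parameters $L=1, C=0$.

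First I would handle $g(t) = 1/t^{\gamma}$. The key point is that for any fixed $L > 0$ and $C \in \mathbb{R}$, once $t$ is large enough that $Lt + C > 0$ (for example $t > 2|C|/L$), one has the crude comparison $(L/2)t \le Lt + C \le (2L)t$, and raising to the $\gamma$th power and inverting yields
$$(2L)^{-\gamma}\frac{1}{t^{\gamma}} \le \frac{1}{(Lt+C)^{\gamma}} \le (L/2)^{-\gamma}\frac{1}{t^{\gamma}}.$$
So $D_1 = (2L)^{-\gamma}$ and $D_2 = (L/2)^{-\gamma}$ witness coarse invariance. (Equivalently, the ratio $g(Lt+C)/g(t) = (t/(Lt+C))^{\gamma} \to L^{-\gamma}$ as $t\to\infty$.)

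Next I would treat $g(t) = 1/\ln(t)$. Writing $\ln(Lt+C) = \ln(t) + \ln\!\bigl(L + C/t\bigr)$, the second summand is bounded as $t\to \infty$ (it converges to $\ln L$), while $\ln(t)\to\infty$. Hence $\ln(Lt+C)/\ln(t) \to 1$, and in particular for sufficiently large $t$ there are positive constants $D_1, D_2$ (one can take, e.g., $1/2$ and $2$) with
$$D_1 \frac{1}{\ln(t)} \le \frac{1}{\ln(Lt+C)} \le D_2 \frac{1}{\ln(t)}.$$
This gives coarse invariance of $1/\ln(t)$.

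For the inequivalence, if $\gamma = \delta$ there is nothing to prove. So suppose without loss of generality $\gamma < \delta$ and assume for contradiction that $1/t^{\gamma} \approxeq 1/t^{\delta}$. Specialising the defining inequality to $L = 1, C = 0$ would give constants $D_1, D_2 > 0$ with $D_1/t^{\delta} \le 1/t^{\gamma} \le D_2/t^{\delta}$ for all sufficiently large $t$. The right-hand inequality rearranges to $t^{\delta - \gamma} \le D_2$, which fails for large $t$ since $\delta - \gamma > 0$. This contradicts coarse commensurability. I do not expect any serious obstacle: the whole argument is a direct computation with the defining inequality, and the only subtlety is making sure $t$ is large enough that all expressions (such as $Lt+C$ and $\ln(Lt+C)$) are positive before estimating.
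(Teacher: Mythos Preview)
Your proof is correct and follows essentially the same direct-estimate approach as the paper: both arguments bound $\ln(Lt+C)/\ln(t)$ near $1$ and sandwich $Lt+C$ between constant multiples of $t$ to handle $1/t^{\gamma}$. Your treatment of $1/t^{\gamma}$ is slightly cleaner, using the single crude bound $(L/2)t \le Lt+C \le 2Lt$ for large $t$ rather than splitting on the sign of $C$ as the paper does; and you actually supply the inequivalence argument for $\gamma \ne \delta$ via $L=1$, $C=0$, which the paper's proof omits entirely.
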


\begin{proof}

For the statement about $\text{ln}$, we can see by exponentiating that for sufficiently large $t$, we have for any $A>0, B\in \mathbbm{R}$

$$\frac{1}{2}\text{ln}(t)\le \text{ln}(At+B)\le 2\text{ln}(t),$$

\noindent and inverting and reversing the inequalities gives us the desired outcome.

Now let $\gamma>0$. If $C$ is positive then for any $K>L$, if $t\ge \frac{C}{K-L}$ then $Lt+C\le Kt$, hence

$$L^{\gamma} t^{\gamma} \le (Lt+C)^{\gamma}\le K^{\gamma} t^{\gamma},$$

\noindent and inverting gives the desired inequality. Similarly, if $C<0$, then choosing $K<L$, for $t\ge \frac{-C}{L-K}$ we obtain

$$K^{\gamma} t^{\gamma} \le (Lt+C)^{\gamma}\le L^{\gamma} t^{\gamma},$$

\noindent and inverting gives the desired inequality.

\end{proof}

We find it very interesting that asymptotic algebraic and logarithmic decay of functions, typically witnessed in critical behavior and the correlations of observables in conformal field theories, is a coarse invariant, while asymptotic exponential decay itself is not. One consequence of the above lemma is that for functions with algebraic coarse uniform decay, there is a well-defined analogue of the correlation critical exponent in statistical mechanics, but which only depends on the coarse structure.

\begin{defn} If $X$ is an unbounded, quasi-geodesic metric space, and if $f:X\times X\rightarrow \mathbbm{R}_{+}$ decays coarsely uniformly as $t^{-\gamma}$, then $\gamma$ is called the coarse critical exponent. 
\end{defn}

Our assertion is that the coarse critical exponent (or more generally, the coarse commensurability class of a coarsely stable 2-point correlation function) is universal property of the state.

\bibliographystyle{alpha}
\footnotesize{
\bibliography{bibliography}}

\newcommand{\etalchar}[1]{$^{#1}$}
\begin{thebibliography}{TWH{\etalchar{+}}23}

\bibitem[A{\etalchar{+}}23]{NonAbTop3}
T.~I. Andersen et~al.
\newblock Non-abelian braiding of graph vertices in a superconducting
  processor.
\newblock {\em Nature}, 618(7964):264--269, 2023.

\bibitem[Arr19]{QCAReview1}
P.~Arrighi.
\newblock An overview of quantum cellular automata.
\newblock {\em Natural Computing}, 18(4):885--899, 2019.

\bibitem[Bat02]{MR2038580}
Robert~W. Batterman.
\newblock {\em The devil in the details}.
\newblock Oxford Studies in Philosophy of Science. Oxford University Press, New
  York, 2002.
\newblock Asymptotic reasoning in explanation, reduction, and emergence.

\bibitem[BBC{\etalchar{+}}24]{bhardwaj2024superselectionsectorsposetsvon}
Anupama Bhardwaj, Tristen Brisky, Chian~Yeong Chuah, Kyle Kawagoe, Joseph
  Keslin, David Penneys, and Daniel Wallick.
\newblock Superselection sectors for posets of von neumann algebras, 2024.

\bibitem[BCPH22]{Bluhm2022exponentialdecayof}
Andreas Bluhm, {\'{A}}ngela Capel, and Antonio P{\'{e}}rez-Hern{\'{a}}ndez.
\newblock Exponential decay of mutual information for {G}ibbs states of local
  {H}amiltonians.
\newblock {\em {Quantum}}, 6:650, February 2022.

\bibitem[BHM10]{10.1063/1.3490195}
Sergey Bravyi, Matthew~B. Hastings, and Spyridon Michalakis.
\newblock {Topological quantum order: Stability under local perturbations}.
\newblock {\em Journal of Mathematical Physics}, 51(9):093512, 09 2010.

\bibitem[BMNS12]{AutomorphicEquiv}
Sven Bachmann, Spyridon Michalakis, Bruno Nachtergaele, and Robert Sims.
\newblock Automorphic equivalence within gapped phases of quantum lattice
  systems.
\newblock {\em Communications in Mathematical Physics}, 309(3):835--871, 2012.

\bibitem[BR87]{MR887100}
Ola Bratteli and Derek~W. Robinson.
\newblock {\em Operator algebras and quantum statistical mechanics. 1}.
\newblock Texts and Monographs in Physics. Springer-Verlag, New York, second
  edition, 1987.
\newblock $C^\ast$- and $W^\ast$-algebras, symmetry groups, decomposition of
  states.

\bibitem[BR97]{MR1441540}
Ola Bratteli and Derek~W. Robinson.
\newblock {\em Operator algebras and quantum statistical mechanics. 2}.
\newblock Texts and Monographs in Physics. Springer-Verlag, Berlin, second
  edition, 1997.
\newblock Equilibrium states. Models in quantum statistical mechanics.

\bibitem[CCM17]{MR3773571}
ChunJun Cao, Sean~M. Carroll, and Spyridon Michalakis.
\newblock Space from {H}ilbert space: recovering geometry from bulk
  entanglement.
\newblock {\em Phys. Rev. D}, 95(2):024031, 20, 2017.

\bibitem[CGW10]{PhysRevB.82.155138}
Xie Chen, Zheng-Cheng Gu, and Xiao-Gang Wen.
\newblock Local unitary transformation, long-range quantum entanglement, wave
  function renormalization, and topological order.
\newblock {\em Phys. Rev. B}, 82:155138, Oct 2010.

\bibitem[CM21]{10.3389/frai.2021.667963}
Frédéric Chazal and Bertrand Michel.
\newblock An introduction to topological data analysis: Fundamental and
  practical aspects for data scientists.
\newblock {\em Frontiers in Artificial Intelligence}, 4, 2021.

\bibitem[CNN18]{MR3764565}
Matthew Cha, Pieter Naaijkens, and Bruno Nachtergaele.
\newblock The complete set of infinite volume ground states for {K}itaev's
  abelian quantum double models.
\newblock {\em Comm. Math. Phys.}, 357(1):125--157, 2018.

\bibitem[EM19]{EM}
Eske~Ellen Ewert and Ralf Meyer.
\newblock Coarse geometry and topological phases.
\newblock {\em Communications in Mathematical Physics}, 366(3):1069--1098,
  2019.

\bibitem[Far20]{Farrelly2020reviewofquantum}
Terry Farrelly.
\newblock A review of {Q}uantum {C}ellular {A}utomata.
\newblock {\em {Quantum}}, 4:368, November 2020.

\bibitem[FN15]{MR3426207}
Leander Fiedler and Pieter Naaijkens.
\newblock Haag duality for {K}itaev's quantum double model for abelian groups.
\newblock {\em Rev. Math. Phys.}, 27(9):1550021, 43, 2015.

\bibitem[FU15]{10.1063/1.4921305}
Jürg Fröhlich and Daniel Ueltschi.
\newblock {Some properties of correlations of quantum lattice systems in
  thermal equilibrium}.
\newblock {\em Journal of Mathematical Physics}, 56(5):053302, 05 2015.

\bibitem[GK60]{MR115104}
James~G. Glimm and Richard~V. Kadison.
\newblock Unitary operators in {$C\sp{\ast} $}-algebras.
\newblock {\em Pacific J. Math.}, 10:547--556, 1960.

\bibitem[Gli60]{MR0112057}
James~G. Glimm.
\newblock On a certain class of operator algebras.
\newblock {\em Trans. Amer. Math. Soc.}, 95:318--340, 1960.

\bibitem[GLR85]{MR808930}
P.~Ghez, R.~Lima, and J.~E. Roberts.
\newblock {$W\sp \ast$}-categories.
\newblock {\em Pacific J. Math.}, 120(1):79--109, 1985.

\bibitem[Haa96]{MR1405610}
Rudolf Haag.
\newblock {\em Local quantum physics}.
\newblock Texts and Monographs in Physics. Springer-Verlag, Berlin, second
  edition, 1996.
\newblock Fields, particles, algebras.

\bibitem[Has04]{PhysRevB.69.104431}
M.~B. Hastings.
\newblock Lieb-schultz-mattis in higher dimensions.
\newblock {\em Phys. Rev. B}, 69:104431, Mar 2004.

\bibitem[HK64]{MR165864}
Rudolf Haag and Daniel Kastler.
\newblock An algebraic approach to quantum field theory.
\newblock {\em J. Mathematical Phys.}, 5:848--861, 1964.

\bibitem[HK06]{Hastings:2005pr}
Matthew~B. Hastings and Tohru Koma.
\newblock {Spectral gap and exponential decay of correlations}.
\newblock {\em Commun. Math. Phys.}, 265:781--804, 2006.

\bibitem[I{\etalchar{+}}24]{NonAbTop}
Mohsin Iqbal et~al.
\newblock Non-abelian topological order and anyons on a trapped-ion processor.
\newblock {\em Nature}, 626(7999):505--511, 2024.

\bibitem[JF22]{MR4444089}
Theo Johnson-Freyd.
\newblock On the classification of topological orders.
\newblock {\em Comm. Math. Phys.}, 393(2):989--1033, 2022.

\bibitem[JL24]{JL24}
Corey Jones and Junhwi Lim.
\newblock An index for quantum cellular automata on fusion spin chains.
\newblock {\em Annales Henri Poincar{\'e}}, 25(10):4399--4422, 2024.

\bibitem[JNPW23]{jones2023localtopologicalorderboundary}
Corey Jones, Pieter Naaijkens, David Penneys, and Daniel Wallick.
\newblock Local topological order and boundary algebras, 2023.

\bibitem[Jon24]{Jones2023DHRBO}
Corey Jones.
\newblock {DHR} bimodules of quasi-local algebras and symmetric quantum
  cellular automata.
\newblock {\em Quantum Topology}, 2024.

\bibitem[KGK{\etalchar{+}}14]{PhysRevX.4.031019}
M.~Kliesch, C.~Gogolin, M.~J. Kastoryano, A.~Riera, and J.~Eisert.
\newblock Locality of temperature.
\newblock {\em Phys. Rev. X}, 4:031019, Jul 2014.

\bibitem[Kit06]{KITAEV20062}
Alexei Kitaev.
\newblock Anyons in an exactly solved model and beyond.
\newblock {\em Annals of Physics}, 321(1):2--111, 2006.
\newblock January Special Issue.

\bibitem[KLT22]{MR4496385}
Yosuke Kubota, Matthias Ludewig, and Guo~Chuan Thiang.
\newblock Delocalized spectra of {L}andau operators on helical surfaces.
\newblock {\em Comm. Math. Phys.}, 395(3):1211--1242, 2022.

\bibitem[KOS03]{MR1994863}
Akitaka Kishimoto, Narutaka Ozawa, and Sh\^oichir\^o Sakai.
\newblock Homogeneity of the pure state space of a separable {$C^*$}-algebra.
\newblock {\em Canad. Math. Bull.}, 46(3):365--372, 2003.

\bibitem[Kub17]{MR3594362}
Yosuke Kubota.
\newblock Controlled topological phases and bulk-edge correspondence.
\newblock {\em Comm. Math. Phys.}, 349(2):493--525, 2017.

\bibitem[Lan17]{MR3643288}
Klaas Landsman.
\newblock {\em Foundations of quantum theory}, volume 188 of {\em Fundamental
  Theories of Physics}.
\newblock Springer, Cham, 2017.
\newblock From classical concepts to operator algebras.

\bibitem[LR72]{MR312860}
Elliott~H. Lieb and Derek~W. Robinson.
\newblock The finite group velocity of quantum spin systems.
\newblock {\em Comm. Math. Phys.}, 28:251--257, 1972.

\bibitem[LT21]{MR4287182}
Matthias Ludewig and Guo~Chuan Thiang.
\newblock Gaplessness of {L}andau {H}amiltonians on hyperbolic half-planes via
  coarse geometry.
\newblock {\em Comm. Math. Phys.}, 386(1):87--106, 2021.

\bibitem[LT22]{MR4485912}
Matthias Ludewig and Guo~Chuan Thiang.
\newblock Large-scale geometry obstructs localization.
\newblock {\em J. Math. Phys.}, 63(9):Paper No. 091902, 8, 2022.

\bibitem[Lud23]{ludewig2023coarsegeometryapplicationssolid}
Matthias Ludewig.
\newblock Coarse geometry and its applications in solid state physics, 2023.

\bibitem[Naa11]{MR2804555}
Pieter Naaijkens.
\newblock Localized endomorphisms in {K}itaev's toric code on the plane.
\newblock {\em Rev. Math. Phys.}, 23(4):347--373, 2011.

\bibitem[Naa12]{MR2956822}
Pieter Naaijkens.
\newblock Haag duality and the distal split property for cones in the toric
  code.
\newblock {\em Lett. Math. Phys.}, 101(3):341--354, 2012.

\bibitem[NOS06]{MR2256615}
Bruno Nachtergaele, Yoshiko Ogata, and Robert Sims.
\newblock Propagation of correlations in quantum lattice systems.
\newblock {\em J. Stat. Phys.}, 124(1):1--13, 2006.

\bibitem[NS06]{MR2217299}
Bruno Nachtergaele and Robert Sims.
\newblock Lieb-{R}obinson bounds and the exponential clustering theorem.
\newblock {\em Comm. Math. Phys.}, 265(1):119--130, 2006.

\bibitem[NSW13]{MR3050163}
Bruno Nachtergaele, Volkher~B. Scholz, and Reinhard~F. Werner.
\newblock Local approximation of observables and commutator bounds.
\newblock In {\em Operator methods in mathematical physics}, volume 227 of {\em
  Oper. Theory Adv. Appl.}, pages 143--149. Birkh\"{a}user/Springer Basel AG,
  Basel, 2013.

\bibitem[NY12]{MR2986138}
Piotr~W. Nowak and Guoliang Yu.
\newblock {\em Large scale geometry}.
\newblock EMS Textbooks in Mathematics. European Mathematical Society (EMS),
  Z\"{u}rich, 2012.

\bibitem[NY23]{MR4646531}
Piotr~W. Nowak and Guoliang Yu.
\newblock {\em Large scale geometry}.
\newblock EMS Textbooks in Mathematics. EMS Press, Berlin, second edition,
  [2023] \copyright2023.

\bibitem[Oga22]{MR4362722}
Yoshiko Ogata.
\newblock A derivation of braided {$C^*$}-tensor categories from gapped ground
  states satisfying the approximate {H}aag duality.
\newblock {\em J. Math. Phys.}, 63(1):Paper No. 011902, 48, 2022.

\bibitem[Pap23]{MR4619565}
Panos Papasoglu.
\newblock Polynomial growth and asymptotic dimension.
\newblock {\em Israel J. Math.}, 255(2):985--1000, 2023.

\bibitem[Pau02]{MR1976867}
Vern Paulsen.
\newblock {\em Completely bounded maps and operator algebras}, volume~78 of
  {\em Cambridge Studies in Advanced Mathematics}.
\newblock Cambridge University Press, Cambridge, 2002.

\bibitem[Qi13]{qi2013exactholographicmappingemergent}
Xiao-Liang Qi.
\newblock Exact holographic mapping and emergent space-time geometry, 2013.

\bibitem[QR19]{Qi2019determininglocal}
Xiao-Liang Qi and Daniel Ranard.
\newblock Determining a local {H}amiltonian from a single eigenstate.
\newblock {\em {Quantum}}, 3:159, July 2019.

\bibitem[Raa17]{MR3600376}
Matti Raasakka.
\newblock Spacetime-free approach to quantum theory and effective spacetime
  structure.
\newblock {\em SIGMA Symmetry Integrability Geom. Methods Appl.}, 13:Paper No.
  006, 33, 2017.

\bibitem[RLL00]{MR1783408}
M.~R{\o}rdam, F.~Larsen, and N.~Laustsen.
\newblock {\em An introduction to {$K$}-theory for {$C^*$}-algebras}, volume~49
  of {\em London Mathematical Society Student Texts}.
\newblock Cambridge University Press, Cambridge, 2000.

\bibitem[Roe93]{MR1147350}
John Roe.
\newblock Coarse cohomology and index theory on complete {R}iemannian
  manifolds.
\newblock {\em Mem. Amer. Math. Soc.}, 104(497):x+90, 1993.

\bibitem[Roe03]{MR2007488}
John Roe.
\newblock {\em Lectures on coarse geometry}, volume~31 of {\em University
  Lecture Series}.
\newblock American Mathematical Society, Providence, RI, 2003.

\bibitem[S{\etalchar{+}}21]{doi:10.1126/science.abi8378}
K.~J. Satzinger, , et~al.
\newblock Realizing topologically ordered states on a quantum processor.
\newblock {\em Science}, 374(6572):1237--1241, 2021.

\bibitem[Sak68]{MR223906}
Sh\^oichir\^o Sakai.
\newblock Derivations of simple {$C\sp{\ast} $}-algebras.
\newblock {\em J. Functional Analysis}, 2:202--206, 1968.

\bibitem[Sak20]{MR4171428}
Hiroki Sako.
\newblock Finite-dimensional approximation properties for uniform {R}oe
  algebras.
\newblock {\em J. Lond. Math. Soc. (2)}, 102(2):623--644, 2020.

\bibitem[Sch99]{doi:10.1142/S0129055X99000362}
Dirk Schlingemann.
\newblock From euclidean field theory to quantum field theory.
\newblock {\em Reviews in Mathematical Physics}, 11(09):1151--1178, 1999.

\bibitem[Sta87]{stanley1987introduction}
H.E. Stanley.
\newblock {\em Introduction to Phase Transitions and Critical Phenomena}.
\newblock International series of monographs on physics. Oxford University
  Press, 1987.

\bibitem[SW04]{https://doi.org/10.48550/arxiv.quant-ph/0405174}
B.~Schumacher and R.~F. Werner.
\newblock Reversible quantum cellular automata, 2004.

\bibitem[TWH{\etalchar{+}}23]{tomba2023boundaryalgebraskitaevquantum}
Mario Tomba, Shuqi Wei, Brett Hungar, Daniel Wallick, Kyle Kawagoe, Chian~Yeong
  Chuah, and David Penneys.
\newblock Boundary algebras of the kitaev quantum double model, 2023.

\bibitem[WW12]{WW}
Kevin Walker and Zhenghan Wang.
\newblock (3+1)-tqfts and topological insulators.
\newblock {\em Frontiers of Physics}, 7(2):150--159, 2012.

\bibitem[WY20]{MR4411373}
Rufus Willett and Guoliang Yu.
\newblock {\em Higher index theory}, volume 189 of {\em Cambridge Studies in
  Advanced Mathematics}.
\newblock Cambridge University Press, Cambridge, 2020.

\bibitem[Xo24]{NonAbTop2}
Shibo Xu and other.
\newblock Non-abelian braiding of fibonacci anyons with a superconducting
  processor.
\newblock {\em Nature Physics}, 20(9):1469--1475, 2024.

\bibitem[ZCZW19]{MR3929747}
Bei Zeng, Xie Chen, Duan-Lu Zhou, and Xiao-Gang Wen.
\newblock {\em Quantum information meets quantum matter}.
\newblock Quantum Science and Technology. Springer, New York, 2019.
\newblock From quantum entanglement to topological phases of many-body systems,
  With a foreword by John Preskill.

\end{thebibliography}

\end{document}